\documentclass{article}

\usepackage{graphicx} 

\usepackage[english]{babel}
\usepackage[utf8]{inputenc}
\usepackage[T1]{fontenc}
\usepackage{bm}
\usepackage[a4paper, margin=1in]{geometry}
\usepackage{graphicx}
\usepackage{framed}
\usepackage[framemethod=tikz]{mdframed}
\usepackage{todonotes}
\usepackage{color}

\newcommand{\alp}{\alpha}
\newcommand{\eps}{\epsilon}
\newcommand{\Omg}{\Omega}
\newcommand{\lmax}{\ell_{\max}}

\newcommand{\taumax}{\tau_{\max}}
\newcommand{\calP}{\mathcal{P}}
\newcommand{\bbE}{\mathbb{E}}

\newcommand{\bundle}{\text{pass-bundle}\xspace}
\newcommand{\limit}{\mathsf{limit}}
\newcommand{\lab}{\mathsf{distance}}
\newcommand{\algPhase}{\textsc{Alg-Phase}\xspace}
\newcommand{\algExtend}{\textsc{Extend-Active-Path}\xspace}
\newcommand{\algBacktrack}{\textsc{Backtrack-Stuck-Structures}\xspace}

\newcommand{\algOvertake}{\textsc{Overtake}\xspace}
\newcommand{\algAugment}{\textsc{Augment}\xspace}
\newcommand{\algContract}{\textsc{Contract}\xspace}
\newcommand{\algCheck}{\textsc{Contract-and-Augment}\xspace}

\newcommand{\Aweak}{\mathbb{A}_{\mathsf{weak}}}

\newcommand{\Amat}{\mathbb{A}_{\mathsf{matching}}}
\newcommand{\Tmat}{\mathcal{T}_{\mathsf{matching}}}

\newcommand{\Aexp}{\mathbb{A}_{\mathsf{explore}}}

\newcommand{\Aproc}{\mathbb{A}_{\mathsf{process}}}
\newcommand{\Tproc}{\mathcal{T}_{\mathsf{process}}}

\newcommand{\OMv}{\mathsf{OMv}}
\newcommand{\paren}[1]{\left(#1\right)}
\newcommand{\ORS}[2]{\textsf{ORS}\paren{#1,#2}\xspace}
\newcommand{\rs}{Ruzsa-Szemerédi\xspace}

\newcommand\cev[1]{\overleftarrow{#1}}

\definecolor{darkgreen}{rgb}{0,0.5,0}
\definecolor{darkgray}{rgb}{0.2,0.2,0.2}
\usepackage{hyperref}
\hypersetup{
    unicode=false,          
    colorlinks=true,        
    linkcolor=blue,         
    citecolor=purple,       
    filecolor=magenta,      
    urlcolor=cyan           
}

\usepackage{amsthm}
\usepackage{amsmath}
\usepackage{amsfonts}
\usepackage{mathrsfs}
\usepackage{mathtools}
\usepackage{verbatim}
\usepackage{footnote}

%
\usepackage{algorithm}
\usepackage[noend]{algpseudocode}
  
\usepackage{lineno}
\usepackage{caption}
\usepackage{framed}
\usepackage{enumerate}

\usepackage{tikzsymbols}
\usepackage{thmtools,thm-restate}
\usepackage{nicefrac}

\usepackage{subcaption}
\usepackage{pdfpages}
\usepackage{makecell}


\usepackage[capitalize, nameinlink]{cleveref}
\usepackage[section]{placeins}
\crefname{theorem}{Theorem}{Theorems}
\Crefname{lemma}{Lemma}{Lemmas}
\Crefname{invariant}{Invariant}{Invariants}
\Crefname{claim}{Claim}{Claims}
\Crefname{observation}{Observation}{Observations}
\Crefname{algorithm}{Algorithm}{Algorithms}
\Crefname{figure}{Figure}{Figures}
\Crefname{challenge}{Challenge}{Challenges}

\newtheorem{theorem}{Theorem}[section]
\newtheorem{lemma}[theorem]{Lemma}
\newtheorem{corollary}[theorem]{Corollary}
\newtheorem{definition}[theorem]{Definition}
\newtheorem{invariant}[theorem]{Invariant}

\newtheorem{claim}[theorem]{Claim}
\newtheorem{remark}{Remark}
\newtheorem*{remark*}{Remark}
\newtheorem{problem}{Problem}


\DeclareMathOperator{\poly}{poly}

\def\CONGEST{\ensuremath{\mathsf{CONGEST}}\xspace}
\def\MPC{\ensuremath{\mathsf{MPC}}\xspace}

\newcommand{\peps}{\poly(1/\eps)}

\newcommand{\tO}{\tilde{O}}

\newcommand{\cO}{\mathcal{O}}
\newcommand{\cS}{\mathcal{S}}
\newcommand{\cX}{\mathcal{X}}
\newcommand{\cT}{\mathcal{T}}
\newcommand{\bbA}{\mathbb{A}}

\newcommand{\AOMv}{\mathbb{A_{\OMv}}}

\title{A framework for boosting matching approximation: \\
    parallel, distributed, and dynamic}
\author{ Slobodan Mitrović\thanks{Supported by the Google Research Scholar and NSF Faculty Early Career Development Program No.~2340048. e-mail: \texttt{smitrovic@ucdavis.edu}} \\ UC Davis 
\and Wen-Horng Sheu\thanks{Supported by the NSF Faculty Early Career Development Program No.~2340048. e-mail: \texttt{wsheu@ucdavis.edu}} \\ UC Davis}
\date{}

\begin{document}

\maketitle
\begin{abstract}
    This work designs a framework for boosting the approximation guarantee of maximum matching algorithms. 
    As input, the framework receives a parameter $\epsilon > 0$ and an oracle access to a $\Theta(1)$-approximate maximum matching algorithm $\mathcal{A}$.
    Then, by invoking $\mathcal{A}$ for $\text{poly}(1/\epsilon)$ many times, the framework outputs a $1+\epsilon$ approximation of a maximum matching. 
    Our approach yields several improvements in terms of the number of invocations to $\mathcal{A}$:
    \begin{itemize}
        \item In MPC and CONGEST, our framework invokes $\mathcal{A}$ for $O(1/\epsilon^7 \cdot \log(1/\epsilon))$ times, substantially improving on $O(1/\epsilon^{39})$ invocations following from [Fischer et al., STOC'22] and [Mitrovic et al., arXiv:2412.19057].
        \item In both online and offline fully dynamic settings, our framework yields an improvement in the dependence on $1/\epsilon$ from exponential [Assadi et al., SODA25 and Liu, FOCS24] to polynomial.
    \end{itemize}
\end{abstract}



\section{Introduction}
Given a graph $G = (V, E)$, a matching $M \subseteq E$ is a set of edges that do not have endpoints in common.
A matching is called maximal if it is not a proper subset of any other matching, and it is called maximum if it has the largest cardinality among all matchings.
Computing maximum matching in polynomial time has been known since the 1960s~\cite{edmonds1965maximum,edmonds1965paths,micali1980v,gabow1990data}.
Moreover, in recent breakthroughs~\cite{chen2022maximum,van2023deterministic}, it was shown how to compute maximum matching in bipartite graphs in almost linear time.

In some settings, it is known that computing a maximum matching is highly inefficient, e.g., in semi-streaming~\cite{FeigenbaumKMSZ05,guruswami2016superlinear}, LOCAL, \CONGEST, dynamic~\cite{henzinger2015unifying}, and sublinear~\cite{parnas2007approximating,BehnezhadRR23}.
Moreover, approximate maximum matchings usually admit much simpler solutions, e.g., a textbook example is a $2$-approximate maximum matching that results from maximal matching.
This inspired a study of $(1+\eps)$-approximate maximum matchings where, given a parameter $\eps > 0$, the task is to find a matching whose size is at least $1/(1+\eps)$ times the size of a maximum matching.

This direction has seen a proliferation of ideas; we discuss some of this work and reference a long list of results on this topic in \cref{sec:related-work}.
This setting was addressed from two perspectives: designing a standalone $(1+\eps)$-approximate algorithm and developing an approximation \emph{boosting framework}. 
The latter refers to algorithms that, as input, receive access to an $r$-approximate maximum matching procedure. This procedure is then adaptively invoked multiple times to obtain a $1+\eps$ approximation.
Some examples of such frameworks, or reductions, include~\cite{mcgregor2005finding,BhattacharyaKS23} for unweighted and \cite{stubbs2017metatheorems,gamlath2019weighted,Bernstein2021,Bernstein2025} for weighted graphs.

Despite this extensive work, one of the central questions still remains open:
\emph{What techniques yield to very efficient boosting frameworks applicable to various computational models?}
Our work contributes to this growing body of research by designing a new boosting framework for dynamic matchings and improving the complexity of the existing boosting framework for static matchings.

\subsection{Our results}
\label{sec:our-results}
In the rest, we use MCM to refer to maximum cardinality matching, i.e., to maximum unweighted matching.
Our first contribution is a framework for the static setting that reduces the computation of $(1+\eps)$-approximate MCM to only $O(\log(1/\eps) / \eps^7)$ invocation of a $\Theta(1)$-approximate MCM.

\begin{restatable}{theorem}{frameworkdistr} \label{thm:framework-1}
Let $\Amat$ be an algorithm that returns a $c$-approximate maximum matching for a given graph $H$, where $c > 1$ is a constant.
Let $\Aproc$ be an algorithm that simultaneously exchanges small messages between the vertices of a component, and does that in time $\Tproc$ for any number of disjoint components of $G$ each of size at most $1/\eps^d$, where $d$ is a fixed constant. 
Then, there is an algorithm that computes a $(1 + \eps)$-approximate maximum
matching in $G$ in time $O((\Tmat + \Tproc) \cdot \eps^{-7} \cdot \log (1/\eps))$. Furthermore, the algorithm requires access to $\peps$ words of memory per each vertex.
\end{restatable}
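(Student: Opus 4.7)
My plan is to follow the Hopcroft--Karp paradigm: initialize $M$ from a single call to $\Amat$, and then repeatedly augment $M$ along augmenting paths of length at most $2\lceil 1/\eps\rceil+1$. Classical theory guarantees that once no such short augmenting path exists, $M$ is $(1+\eps)$-approximate, so the whole task reduces to uncovering sufficiently many short augmenting paths while keeping the number of invocations of $\Amat$ below $O(\eps^{-7}\log(1/\eps))$.

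First I would introduce an outer ``length-class'' loop with $O(1/\eps)$ steps, one for each odd length $\ell = 3,5,\dots,2\lceil 1/\eps\rceil+1$, and within each length class invoke a subroutine \algPhase{} for $O(\eps^{-6}\log(1/\eps))$ iterations, so that the total number of calls to \algPhase{} is $O(\eps^{-7}\log(1/\eps))$. Each call to \algPhase{} should invoke $\Amat$ and $\Aproc$ a constant number of times, and should halve the number of length-$\ell$ augmenting paths that $M$ still admits; the $\log(1/\eps)$ factor precisely corresponds to halving from $\Omega(n)$ surviving paths down to $\eps\cdot n$. Internally, \algPhase{} proceeds by (i) growing alternating paths from unmatched vertices through \algExtend{}, which applies $\Amat$ to the subgraph of unmatched edges incident to the active frontier and uses the returned matching as candidate extensions; (ii) resolving conflicts between concurrently growing paths via \algOvertake{}; (iii) contracting mature alternating stretches into supernodes through \algContract{} and calling $\Amat$ on the contracted graph inside \algCheck{}; (iv) rescuing stuck paths via \algBacktrack{}; and finally (v) augmenting along completed paths via \algAugment{}. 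All intra-component bookkeeping (ID propagation, supernode election, overtake arbitration) is performed via $\Aproc$ on components of size $\poly(1/\eps)$, which is why every invocation costs $\Tmat+\Tproc$.

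The memory bound is easy: each vertex needs only a constant number of $\poly(1/\eps)$-bounded labels (path index, supernode ID, overtake level, distance tag, matched-neighbor pointer), totaling $\peps$ words per vertex.

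The hard part will be the survival-rate analysis of \algPhase{}: showing that a single invocation preserves a constant fraction of the currently surviving length-$\ell$ augmenting paths despite three independent sources of loss---the $c$-approximation slack of $\Amat$, the vertex-disjointness conflicts arbitrated by \algOvertake{} and \algBacktrack{}, and the propagation radius being capped at $O(1/\eps)$. Establishing that these losses compound only multiplicatively, rather than exponentially in $1/\eps$, is precisely what allows the $\eps^{-39}$ bound of prior work to be replaced by $\eps^{-7}\log(1/\eps)$, and I expect this combinatorial accounting (rather than any individual structural lemma) to be the main technical burden.
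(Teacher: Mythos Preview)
Your proposal has the right target count of $O(\eps^{-7}\log(1/\eps))$ calls but the mechanism is essentially wrong, and the key technical ideas that make that bound achievable are absent.

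The paper does \emph{not} design a new Hopcroft--Karp style search that halves surviving augmenting paths per $\Amat$ call. Instead, it takes the existing semi-streaming algorithm of \cite{MMSS25} (scales $\times$ phases $\times$ pass-bundles, totaling $O(\eps^{-6})$ pass-bundles, already proved correct there) and shows how to \emph{simulate one pass-bundle} using $O(\eps^{-1}\log(1/\eps))$ calls to $\Amat$. Correctness is thus inherited from \cite{MMSS25}; the work is purely in the simulation. Your length-class loop with one $\Amat$ call per \algPhase, each ``halving the number of length-$\ell$ augmenting paths,'' would require a brand-new correctness argument that you do not supply, and in fact no such halving claim is known: augmenting paths overlap arbitrarily in general graphs, and a single approximate matching in some auxiliary graph gives no control over how many \emph{surviving} augmenting paths of a given length remain.

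The two genuine ideas you are missing are these. First, for the \algCheck simulation (type-2 arcs, \cref{alg:sim-check-1}), the paper observes that each $\Amat$ call removes the matched \emph{structures}, so $\mu(H')$ drops by a factor $(1-1/c)$ per iteration; hence $O(\log(1/\eps))$ calls suffice to push $\mu(H')$ below $\eps^{20}|M|$, and the leftover type-2 arcs are marked ``contaminated'' and shown to touch only $O(\eps^{17}|M|)$ disjoint augmenting paths (\cref{lem:drop,lem:vc}). Second, and this is the crux, \algExtend (type-3 arcs) does \emph{not} remove matched vertices---overtaken structures persist and new edges can appear---so the exponential-decay argument fails directly. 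The fix (\cref{alg:sim-extend-1}) is to split into $\ell_{\max}=O(1/\eps)$ stages indexed by the label $s$ of the overtaker; within stage $s$ the bipartite graph $H'_s$ is decremental (an overtaker gets marked extended, an overtaken vertex gets label $s{+}1$, both leave $H'_s$), so the exponential-decay argument \emph{does} apply per stage, giving $O(\eps^{-1}\log(1/\eps))$ calls per pass-bundle. Your proposal has no analogue of this label-stratified simulation or of the contaminated-arc accounting, and the ``survival-rate analysis'' you flag as the hard part is not merely hard---without the label splitting it simply does not go through.
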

The algorithm $\Aproc$ in \cref{thm:framework-1} is model-specific and simply keeps the vertices in the graph up-to-date. For instance, implementing $\Aproc$ in MPC takes $O(1)$ rounds as long as $1/\eps^d$ fits into the memory of a machine.
In CONGEST, \cite{FMU22} instantiates $\Aproc$ by choosing one representative vertex in each component. All messages within a component are delivered via the representative vertex. Since each component is connected and has $\poly(1/\eps)$ vertices, this operation is done within $\poly(1/\eps)$ rounds.
We comment more on this in \cref{sec:app-framework-distr}. 

In MPC, $\Amat$ can be implemented using \cite{GU19}'s algorithm. The algorithm assumes that the vertices and edges of the input graph are distributed across the machines. It computes a desired matching in $O(\sqrt{\log n})$ rounds.
In \cref{table:times-static} we outline the improvement our result leads in some settings.

\begin{table}[h]
\centering
\small
\setlength{\extrarowheight}{4pt}
\begin{tabular}{|c|c|c|c|}
\hline
Reference & Complexity in $\eps$ & Setting \\ \hline
\cite{FMU22} & $O(1/\eps^{52})$  & MPC \\ \hline
\cite{FMU22}  + \cite{MMSS25} & $O(1/\eps^{39})$  & MPC \\ \hline
\textbf{this work} -- corollary of \cref{thm:framework-1} & $O(1/\eps^{7} \cdot \log(1/\eps))$  & MPC \\ \hline
\cite{FMU22} & $O(1/\eps^{63})$  & \CONGEST \\ \hline
\cite{FMU22}  + \cite{MMSS25} & $O(1/\eps^{42})$  & \CONGEST \\ \hline
\textbf{this work} -- corollary of \cref{thm:framework-1} & $O(1/\eps^{10} \cdot \log(1/\eps))$  & \CONGEST \\ \hline
\end{tabular}

\caption{\label{table:times-static} An overview of the most related frameworks in the static setting. All the frameworks apply to general unweighted graphs. 
Additional results are referenced in \cref{sec:related-work}.
}
\end{table}

While the framework developed corresponding to \cref{thm:framework-1} can be applied in a static setting by increasing the overall execution time by only a $\poly(1/\eps)$ factor, it is not known how to utilize this in the dynamic setting without significantly increasing the update time.
On a high level, the reason is that \cref{thm:framework-1} requires access to an oracle that outputs an approximate maximum matching in an adaptively chosen graph.
However, even to describe such a graph $G$ takes $\Omega(|E(G)|)$ time, yielding inefficient approximation boosting for dynamic algorithms.

Nevertheless, a more efficient, but also \emph{more restrictive}, oracle can be implemented for dynamic matchings. 
Given a graph $G = (V, E)$, consider an oracle $\cO$ that for a given $U \subseteq V$ returns a $\Theta(1)$-approximate maximum matching in $G[U]$ when $G[U]$ has a large matching.
Observe that $\cO$ allows changing a subset of $U \subseteq V$, but the graph $G$ remains the same!
Moreover, the algorithm invoking $\cO$ is allowed to spend $n \cdot \poly(1/\eps)$ time preparing the desired subsets $U$; this is in line with \cite[Proposition 2.2]{assadi2024improved}.
Prior works \cite{assadi2024improved,BhattacharyaKS23} already defined this oracle, and showed how to use it to compute a $1+\eps$ approximation of maximum matching by paying an exponential dependence on $1/\eps$ in the running time. 
This motivates the following questions:
\emph{Can we design a framework that finds a $1+\eps$ approximation of a maximum matching in $G$ by invoking $\cO$ for $\poly(1/\eps)$ times?}
We answer this question in the affirmative.
\begin{theorem}[Informal version of \cref{thm:framework}]\label{main:dynamic}
Let $G = (V, E)$ be an $n$-vertex fully dynamic graph that starts empty and throughout, never has more than $m$ edges.
Let $\Aweak$ be an algorithm that, given a vertex subset $S \subseteq V(G)$ and a parameter $\delta$, returns in $\cT(n, m,\delta)$ time a matching of size at least $\lambda\delta n$ if the maximum matching in $G[S]$ is at least $\delta n$.
Then, there is an algorithm for the fully dynamic $(1+\eps)$-approximate matching problem with amortized update time $O(\cT(n, m,\poly \eps) / n \cdot \poly(\log n / \eps))$.
\end{theorem}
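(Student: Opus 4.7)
The plan is to convert the static boosting framework of \cref{thm:framework-1} into a dynamic algorithm via a standard periodic-rebuild schedule, where each rebuild invokes $\Aweak$ only $\poly(1/\eps)$ many times. The main technical content is to show that every matching-oracle call made by the static framework can be routed through $\Aweak$, which is restricted to induced subgraphs $G[S]$ of the current graph $G$, paying only an additive $n \cdot \poly(\log n/\eps)$ preprocessing cost per rebuild.

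\medskip
\noindent\textbf{Step 1: Routing static oracle calls through $\Aweak$.}
The static framework of \cref{thm:framework-1} accesses its input graph through a $\Theta(1)$-approximate oracle $\Amat$ that may, a priori, be invoked on arbitrary auxiliary graphs. In contrast, $\Aweak$ returns matchings only on induced subgraphs $G[S]$ of the fixed dynamic graph $G$, and moreover only succeeds when the maximum matching in $G[S]$ has size at least $\delta n$. Following the template of \cite{assadi2024improved,BhattacharyaKS23} (cf. Proposition 2.2 of the former), each oracle invocation in the boosting procedure is reduced to a search for short augmenting structures around the current matching $M$. Such structures live inside $G[S]$ for an appropriate $S \subseteq V(G)$ obtained by collecting $M$'s endpoints together with a batch of free candidate vertices. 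The vertex set $S$ can be prepared in $n \cdot \poly(1/\eps)$ time per rebuild by local exploration around $M$, and any internal contractions used by the static framework (e.g.\ bundles or blossoms) do not need oracle access and can be simulated within this preprocessing budget.

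\medskip
\noindent\textbf{Step 2: Periodic rebuilds and amortization.}
Maintain a matching $M$ together with an estimate $\mu$ of the maximum matching size. After every batch of $\Theta(\eps \mu)$ edge updates, invoke the adapted static framework to recompute a $(1+\eps/2)$-approximation; between rebuilds $|M|$ changes by at most one per update, so the approximation factor remains within $(1+\eps)$. A single rebuild performs $O(\log(1/\eps)/\eps^7)$ calls to $\Aweak$ with threshold $\delta = \poly(\eps)$ and spends $n \cdot \poly(1/\eps)$ time on preprocessing, for a total of
\[
O\!\paren{\tfrac{\log(1/\eps)}{\eps^7}} \cdot \cT(n,m,\poly\eps) \;+\; n \cdot \poly(1/\eps).
\]
Amortizing this over $\Theta(\eps \mu)$ updates and using that a rebuild guarantees $\mu = \Omega(\text{OPT})$, the amortized cost per update becomes $\cT(n,m,\poly\eps)/n \cdot \poly(1/\eps)$.

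\medskip
\noindent\textbf{Step 3: Handling all matching scales.}
A single $\mu$ does not work when the optimal matching size varies widely over time. Run $O(\log n)$ parallel copies of the scheme, one for each dyadic guess $\mu = 2^k$ with $k \in \{0, 1, \ldots, \lceil \log n \rceil\}$, and report the largest matching among the copies whose guess currently matches $|\mathrm{MCM}(G)|$ up to a factor of two. This contributes the $\poly(\log n)$ factor in the final bound and yields the claimed amortized update time.

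\medskip
\noindent\textbf{Main obstacle.}
The crux is Step 1: demonstrating that every oracle call inside the framework of \cref{thm:framework-1} can be answered by $\Aweak$ on an induced subgraph of $G$ after only $n \cdot \poly(1/\eps)$ preprocessing. The prior dynamic frameworks of \cite{assadi2024improved} and Liu incur exponential dependence on $1/\eps$ precisely because they avoid calling a boosting subroutine that operates on auxiliary graphs. Our framework's augmenting structures are built locally around the existing matching, so they should admit such a reduction, but confirming this requires a careful inspection of each intermediate graph the framework constructs and a verification that the failure case of $\Aweak$ (when the max matching in $G[S]$ falls below the threshold) can be safely absorbed by a doubling schedule over $\delta$.
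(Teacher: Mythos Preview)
Your Steps 2 and 3 are fine and essentially match what the paper does: periodic rebuild every $\Theta(\eps^2 n)$ updates, with $O(\log n)$ parallel scales, plugged into the chunk-and-query interface of \cite{assadi2024improved}. The paper's proof of \cref{thm:framework} literally says it follows by replacing \cite[Proposition 2.2]{assadi2024improved} with \cref{thm:framework-2}, which is the static-with-$\Aweak$ result.

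The gap is in Step 1, and you flag it yourself as the ``main obstacle'' without resolving it. Your description --- ``such structures live inside $G[S]$ for an appropriate $S$ obtained by collecting $M$'s endpoints together with a batch of free candidate vertices'' --- does not work, for two concrete reasons that the paper has to address with new ideas.

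First, the static framework's oracle calls are on auxiliary graphs $H'$ and $H'_s$ whose \emph{vertices are structures}, not vertices of $G$. A matching in $H'$ pairs up structures, and each structure may contain up to $\poly(1/\eps)$ vertices of $G$. There is no induced subgraph $G[S]$ whose edge set is the arcs between structures with one edge per structure-pair. The paper's fix is to \emph{randomly sample one vertex per structure} and call $\Aweak$ on the sampled set; a concentration argument (\cref{lem:check-prob}, \cref{lem:extend-prob}) shows that if $H'$ has a large matching then so does the sampled induced subgraph, with high probability. Your proposal contains no analogue of this sampling step.

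Second, even after sampling, the calls simulating \algExtend must return only outer--inner arcs (type 3), never inner--inner arcs, because an inner--inner match makes no progress. An induced subgraph $G[S]$ cannot enforce this asymmetry. The paper's fix is to maintain the dynamic oracle not on $G$ but on an auxiliary bipartite double cover $B$ (\cref{def:B}), with copies $v^+,v^-$ of each vertex; queries then take the form $B[S^+ \cup T^-]$, which forces every returned edge to have one outer and one inner endpoint. This is a genuine structural trick that your Step 1 does not anticipate; ``local exploration around $M$'' does not produce it.

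So the outer reduction is right, but the heart of the argument --- how to simulate the contracted/typed matching problems of \cref{sec:framework-distr} using only induced-subgraph queries --- is missing both of its two key ingredients.
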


As an example, \cite{assadi2024improved} instantiates $\Aweak$ as an algorithm that, given access to the adjacency matrix, computes the desired output in $O(mn^{3\gamma \log (n)}/d)$ time, where $\gamma$ is a parameter controlling the approximation factor and $d$ is the maximum degree of the subgraph induced by vertices of the output matching.

\begin{table}[h]
\centering
\small
\setlength{\extrarowheight}{4pt}
\begin{tabular}{|c|c|c|c|}
\hline
Reference & Complexity in $\eps$ & Complexity in $n$ & Setting \\
\hline
\cite{BG24} & $(1/\eps)^{O(1/\eps)}$ & $\sqrt{n^{1+O(\eps)} \cdot \ORS{n}{\Theta_\eps(n)}}$ & dynamic \\
\hline
\cite{assadi2024improved} & $(1/\eps)^{O(1/(\eps\beta))}, \beta > 0$ & $n^\beta \cdot \ORS{n}{\Theta_{\beta,\eps}(n)}$ & dynamic \\ 
\hline
\cite{Liu24} & $\peps$ & $n / 2^{\Omg(\sqrt{\log n})}$ & dynamic, bipartite \\ 
\hline
\cite{Liu24} & $\peps$ & $n^{0.58}$ & offline dynamic, bipartite \\
\hline
\textbf{this work}, \cref{thm:AKS-dynamic} & $(1/\eps)^{O(1/\beta)}, \beta > 0$ & $n^\beta \cdot \ORS{n}{\Theta_{\beta,\eps}(n)}$ & dynamic \\ 
\hline
\textbf{this work}, \cref{thm:fast-dynamic} & $\peps$ & $n / 2^{\Omg(\sqrt{\log n})}$ & dynamic \\ 
\hline
\textbf{this work}, \cref{thm:offline} & $\peps$ & $n^{0.58}$ & offline dynamic \\
\hline
\end{tabular}

\caption{\label{table:times-dynamic} An overview of algorithms for fully dynamic $(1+\eps)$-approximate maximum matching that are based on the boosting framework of \cite{mcgregor2005finding} (see \cite{BhattacharyaKS23} for its adaptation in the dynamic setting).
\cite{assadi2024improved}'s algorithm is parameterized by a real number $\beta > 0$ and has an amorized update time of $(1/\eps)^{O(1/(\eps\beta))} \cdot n^\beta \cdot \ORS{n}{\Theta_{\beta,\eps}(n)}$.
\cref{thm:AKS-dynamic} improves this result to $(1/\eps)^{O(1/\beta)} \cdot n^\beta \cdot \ORS{n}{\Theta_{\beta,\eps}(n)}$.
This result shows an improved trade-off between dependence on $1/\eps$ and on $n$:
The exponent $\beta$ in $n^\beta$ can be made arbitrarily small, but at the expense of an increased dependence on $1/\eps$.
For any constant $\beta$, the dependence on $1/\eps$ is polynomial.
Note that the result of \cite{Liu24} and \cite{assadi2024improved} are incomparable, as the exact value of $\ORS{\cdot}{\cdot}$ remains unknown.}
\end{table}

\subsection{Related work}
\label{sec:related-work}
Several lines of work have studied approaches for boosting matching guarantees, e.g., improving approximation or obtaining weighted from unweighted matchings.
In the rest, we use MWM to refer to maximum weighted matching.

\paragraph{Frameworks for unweighted matchings.}
Among the frameworks for boosting matching approximation in unweighted graphs, perhaps the most influential is the one by McGregor~\cite{mcgregor2005finding}, initially designed to compute $1+\eps$ approximation in the semi-streaming setting.
This approach turns out to be robust enough so that, with appropriate modifications, it can be applied in the MPC~\cite{czumaj2018round,assadi2019coresets,ghaffari2022massively} and dynamic setting~\cite{BhattacharyaKS23,assadi2024improved}.
Applying this technique to dynamic graphs -- while not substantially increasing the running time complexity -- is significantly trickier than applying it in the static setting. Details on why this is the case are discussed in \cref{sec:framework-dynamic}.
Given that \cite{mcgregor2005finding} has an exponential dependence on $1/\eps$, all the techniques derived from it have at least a single-exponential dependence on $1/\eps$ as well. 
Our work yields a polynomial dependence in the dynamic setting, improving the exponentially priorly known.

In recent work, Fischer, Mitrović, and Uitto~\cite[arXiv version]{FMU22} proposed a framework that obtains a $1+\eps$ approximation of MCM by $\poly(1/\eps)$ times invoking an algorithm that computes a $\Theta(1)$ approximation.
This work applies to static graphs in several models, including LOCAL, CONGEST, and MPC.

As was observed in \cite{assadi2024improved,BhattacharyaKS23,Liu24}, some other approaches, such as \cite{ahn2013linear,assadi2021auction} for bipartite and \cite{tirodkar2018deterministic} for general graphs, can also be used in developing boosting frameworks.

\paragraph{Frameworks for weighted matchings.}
For computing weighted matchings, Gupta and Peng~\cite{gupta2013fully} provide a reduction from general weights to integer weights in the range $[1, \exp(O(1/\eps))]$. The result was presented in the context of dynamic matching, although it can be applied to other settings as well, e.g., semi-streaming~\cite{huang20221}.

Stubbs and Williams~\cite{stubbs2017metatheorems} develop a reduction from dynamic weighted to dynamic unweighted maximum matching. Namely, they show how to design a dynamic algorithm for $(2 + \eps) \alpha$-approximate weighted from a dynamic algorithm for $\alpha$-approximate unweighted maximum matching at only poly-logarithmic increase in the update time.

Gamlath, Kale, Mitrović, and Svensson~\cite{gamlath2019weighted} show how to reduce the computation of $1+\eps$-approximate weighted maximum matching in general graphs to $1+\eps$-approximate unweighted maximum matching in bipartite graphs. This reduction applies to the static setting in semi-streaming and MPC, and has an exponential dependence on $1/\eps$.

For the case of bipartite graphs, Bernstein, Dudeja, and Langley~\cite{Bernstein2021} develop a framework that reduces the task of computing fully dynamic $(1+\eps) \alpha$-approximate MWM to the task of computing fully dynamic $\alpha$-approximate MCMs. 
This reduction incurs a logarithmic in $n$ and exponential in $1/\eps$ overhead in the running time.
In the same work, the authors also develop reductions for general graphs, with approximation guarantees of $3/2 + \eps$ and $2 + \eps$.

In a very recent work, Bernstein, Chen, Dudeja, Langley, Sidford, and Tu~\cite{Bernstein2025} made a significant contribution. 
In the context of fully dynamic $(1+\eps)$-approximate MWM, they provide a reduction from a graph with weights in the range of $\poly(n)$ to graphs with weights in the range of $\poly(1/\eps)$.
This reduction incurs only $\poly(1/\eps)$ additive time.
Combined with \cite{Bernstein2021}, this results in the fully dynamic $(1+\eps)$-approximate MWM in bipartite graphs with only polynomial dependence on $1/\eps$ in the running time.

\paragraph{Other related work.}
The approximate maximum matching problem has been extensively studied in numerous settings.
For a list of such works, we refer a reader to these and references therein: (semi-)streaming \cite{ahn2011laminar,ahn2013linear,kapralov2013better,KapralovKS14,BuryS15,AssadiKLY16,AssadiKL17,ahn2018access,EsfandiariHLMO18,BuchbinderST19,KapralovMNT20,gamlath2019weighted,AssadiB21,ChenKPS0Y21,Kapralov21,huang2023,AssadiS23,assadi2024simple}, MPC \cite{lattanzi2011filtering,czumaj2018round,GGK+18,assadi2019coresets,BBD+19,behnezhad2019exponentially,GU19,GGJ20,ghaffari2022massively,dhulipala2024parallel}, \CONGEST and LOCAL \cite{czygrinow2004fast,lotker2015improved,ahmadi2018distributed,bar2017distributed,harris2019distributed,fischer2020improved,ghaffari2023faster,izumi2024nearly}, and dynamic~\cite{GLSSS19,BGS20, ABD22,AssadiBKL23,BlikstadK23,ZhengH23,BhattacharyaKS23,BG24,assadi2024improved}.

\subsection{Paper organization}
The paper is organized as follows. \cref{sec:prelim} reviews standard notations;
\cref{sec:simp} describes a simplified version of \cite{MMSS25}'s algorithm;
\cref{sec:framework-distr} gives a boosting framework faster than \cite{FMU22}'s framework, with applications in $\MPC$ and $\CONGEST$;
\cref{sec:framework-dynamic} adapts the framework from \cref{sec:framework-distr} to the dynamic $(1+\eps)$-matching problem;
\cref{sec:app-framework-distr} provides additional implementation details.
\section{Overview of our approach}
\paragraph{Starting point.}
Our result is inspired by the framework of \cite{FMU22}.
First, that work describes an algorithm for computing $(1+\eps)$-approximate maximum matching in semi-streaming in $\poly(1/\eps)$ passes.
Second, that algorithm is extended to a framework that gets an oracle access to: (1) a method for computing $\Theta(1)$-approximate maximum matchings, and (2) a few simple-to-implement methods on graphs, such as exploring a local neighborhood of a vertex of size $\poly(1/\eps)$.\footnote{The framework also requires an access to $\Omega(n)$ space. This space can be distributed as well.}
Then, invoking these methods on $\poly(1/\eps)$ adaptively chosen graphs, the framework outputs a $1+\eps$ approximation of a maximum matching.

That framework is applicable in any setting that can provide oracle access to those methods. In particular, regarding the dependence on $1/\eps$, \cite{FMU22} obtain new results in the static setting in semi-streaming, MPC, \CONGEST.

\paragraph{Our improvement in the static setting.}
The same as \cite{FMU22}, our approach also starts with an algorithm that is not a framework. 
In our case, it is the algorithm of \cite{MMSS25}. Plugging \cite{MMSS25} directly into \cite{FMU22} already gives an improved framework using $O(1/\eps^{39})$ oracle calls to $\Theta(1)$-approximate maximum matchings; \cite{FMU22} performs $O(1/\eps^{52})$ many calls.
Our algorithm suffices to perform only $O(\log(1/\eps) / \eps^{7})$ such calls!
To achieve that, we significantly improve the efficiency of the \cite{FMU22}'s framework. 
We now briefly outline those changes, while details are presented in \cref{sec:framework-distr}.
\cref{sec:simp} outlines the algorithm of \cite{MMSS25}.

The main bottleneck of \cite{FMU22}'s framework is the simulation of two procedures.
The first procedure can be formulated as the following matching problem:
A graph $H$ is given.
In each iteration, we find a $c$-approximate matching in $H$ and remove all matched vertices.
The process is repeated until the maximum matching size of $H$ drops below a threshold $t$, from an initial value of $s$.
It is not hard to show that the process requires at most $(s - t) / (t / c)$ iterations, as each iteration finds a matching of size at least $t / c$.
We made a simple observation that, in fact, the maximum matching size in $H$ is decreasing exponentially, and thus $\Theta_c(\log\frac{s-t}{t})$ suffices.
This observation enables a simulation using $\Theta(\log (1/\eps))$ calls to the $c$-approximate matching algorithm, instead of $\peps$ calls.

Simulating the second procedure can be formulated as a more complicated matching problem.
We again are given a graph $H$ and aim to decrease its maximum matching size to below a threshold.
In each iteration, we find a $c$-approximate matching in $H$, but now only the matched edges, and not the vertices, are removed from $H$.
In addition, depending on the state of the algorithm, new edges may be added to $H$ after an iteration.
Therefore, our previous observation is not applicable.
To obtain our result, we present a different simulation for the procedure.
Roughly speaking, we show that the edges in $H$ can be divided into $\Theta(\eps^{-1})$ different classes, and the simulation for different classes of edges can be done separately.
In addition, the simulation of each class can be formulated as a matching problem similar to the first procedure.
Utilizing our previous observation, we show that each class requires $\Theta(\log (1/\eps))$ calls to the $c$-approximate matching algorithm, yielding a simulation of $O(\eps^{-1} \cdot \log(1/\eps))$ calls.
Then, our advertised complexity of $O(\log(1/\eps) / \eps^7)$ calls follows from the fact that our simulation follows -- a slightly simplified version of -- the algorithm of \cite{MMSS25}, that has $1/\eps^6$ dependence.

\paragraph{Our extension to the dynamic setting.}
As already discussed in \cref{sec:our-results}, recent results for dynamic matching provide access to an oracle which, given a subset of vertices $U \subseteq V$, outputs a $\Theta(1)$-approximate maximum matching in $G[U]$ if $G[U]$ contains a large matching.
Both the framework of \cite{FMU22} and our framework for the static setting require access to an oracle that outputs a $\Theta(1)$-approximate maximum matching in an adaptively chosen graph.
On a very high level, these two frameworks maintain so-called structures from each unmatched vertex. For the purpose of this discussion, a structure corresponding to an unmatched vertex $\alpha$ can be thought of as a set of alternating paths originating at $\alpha$. Structures corresponding to different vertices are vertex-disjoint.
Each structure attempts to extend an alternating path it contains. When an augmentation involving $\alpha$ is found, the entire structure corresponding to $\alpha$ is removed from the graph.
These extensions and augmentations in the static setting are handled by defining an appropriate graph $H$, and then finding a large matching in $H$. 
The way $H$ is defined, the sets $V(H)$ and $E(H)$ change as structures change from step to step, even if the set of unmatched vertices remains the same.
Each matching edge in $H$ is then mapped back to an extension or an augmentation.\footnote{This is a simplified exposition of the actual process. In the full algorithm, more operations are performed, but they follow the same logic we present here.
}
However, in the dynamic case, we do not have access to an oracle that can compute a matching in $H$.

Our first observation is that we do not need to find all the edges in $H$ affecting $\alpha$'s structure, but finding one such edge already makes progress. 
That is, if we know a priori that the alternating path $P$ of a structure will be extended, we could look for an extension of the head vertex of $P$ only.
This now allows ideas of randomly sampling a vertex from a structure, and hoping that a sampled vertex is ``the right'' one to perform an extension on. 
Indeed, that is exactly what our approach does.

However, this sampling idea does not suffice.
The reason is that our static framework maintains two types of vertices -- inner and outer ones.
Depending on whether a matched edge in $H$ is an outer-inner or outer-outer vertex determines which procedure is invoked to update the state of our framework.
In particular, having a matched edge in $H$ whose both endpoints are inner does not make progress in our computation.
To ensure that the dynamic matching oracle does not consider inner-inner edges in $G[U]$, instead of working with the original input graph, we work with a graph $G'$ obtained as follows. 
Given $G = (V, E)$, we make a bipartite graph $B = (L, R, E')$ where $R$ and $L$ are copies of $V$ and there exists an edge $\{x, y\}$ with $x \in L$ and $y \in R$ iff $\{x, y\} \in E$.
Then, we maintain the dynamic matching oracle on $B$ and not on $G$. 
When our framework samples a set of inner vertices $I$ and a set of outer vertices $O$, it invokes $G'[L \cap O, R \cap I]$.
Full details of this idea are presented in \cref{sec:framework-dynamic}.

In summary, our improvement in the static setting is obtained by refining the analysis of \cite{FMU22} and a new approach to partitioning the edges of $H$ into $1/\eps$ classes, each of which admits a more efficient simulation. In the dynamic setting, we propose a new vertex sampling paradigm that allows us to implement the framework with a much weaker oracle.
\section{Preliminaries}\label{sec:prelim}
We first introduce all the terminology, definitions, and notations. We also recall some well-known facts about blossoms.

Let $G$ be an undirected simple graph and $\eps \in (0, \frac{1}{4}]$ be the approximation parameter.
Without loss of generality, we assume that $\eps^{-1}$ is a power of $2$.
Denote by $V(G)$ and $E(G)$, respectively, the vertex and edge sets of $G$.
Let $n$ be the number of vertices in $G$ and $m$ be the number of edges in $G$.
An undirected edge between two vertices $u$ and $v$ is denoted by $\{u, v\}$.
Let $\mu(G)$ stand for the maximum matching size in $G$.
An \emph{$(\alp, \beta)$-approximate maximum matching} is a matching of size at least $\mu(G) / \alp - \beta$.
An \emph{$\alp$-approximate maximum matching} is a matching of size at least $\mu(G) / \alp$.
Throughout the paper, if not stated otherwise, all the notations implicitly refer to a currently given matching $M$, which we aim to improve.

\subsection{Alternating paths}

\begin{definition}[An unmatched edge and a free vertex] We say that an edge $\{u, v\}$ is \emph{matched} iff $\{u, v\} \in M$, and \emph{unmatched} otherwise.
We call a vertex $v$ \emph{free} if it has no incident matched edge, i.e., if $\{u, v\}$ are unmatched for all edges $\{u, v\}$.
Unless stated otherwise, $\alp, \beta, \gamma$ are used to denote free vertices. 
\end{definition}

\begin{definition}[Alternating and augmenting paths] An \emph{alternating path} is a simple path that consists of a sequence of alternately matched and unmatched edges. The \emph{length} of an alternating path is the number of edges in the path. An \emph{augmenting path} is an alternating path whose two endpoints are both free vertices.
\end{definition}

\subsection{Alternating trees and blossoms}

\begin{definition}[Alternating trees, inner vertices, and outer vertices]
\label{def:alternating-tree}
A subgraph of $G$ is an \emph{alternating tree} if it is a rooted tree where the root is a free vertex and every root-to-leaf path is an even-length alternating path.
An \emph{inner vertex} of an alternating tree is a non-root vertex $v$ such that the path from the root to $v$ is of odd length.
All other vertices are \emph{outer vertices}.
In particular, the root vertex is an outer vertex.
\end{definition}

Note that every non-root vertex in an alternating tree is matched.

\begin{definition}[Blossoms and trivial blossoms]
\label{def:blossom}
A blossom is identified with a vertex set $B$ and an edge set $E_B$ on $B$.
If $v \in V(G)$, then $B = \{v\}$ is a \emph{trivial blossom} with $E_B = \emptyset$.
Suppose there is an odd-length sequence of vertex-disjoint blossoms $A_0, A_1, \dots , A_k$ with associated edge sets $E_{A_0}, E_{A_1}, \dots, E_{A_k}$.
If $\{A_i\}$ are connected in a cycle by edges $e_0, e_1, \dots , e_k$, where $e_i \in A_i \times A_{i+1} (\mbox{modulo } k+1)$ and $e_1, e_3, \dots, e_{k-1}$ are matched, then $B = \bigcup_{i} A_i$ is also a blossom associated with edge set $E_B = \bigcup_i E_{A_i} \cup \{e_0, e_1, \dots , e_k\}$.
\end{definition}

Consider a blossom $B$.
A short proof by induction shows that $|B|$ is odd.
In addition, $M \cap E_B$ matches all vertices except one.
This vertex, which is left unmatched in $M \cap E_B$, is called the \emph{base} of $B$.
Note that $E(B) = E(G) \cap (B \times B)$ may contain many edges outside of $E_B$.
Blossoms exhibit the following property.

\begin{lemma}[\cite{DP14}]\label{lem:even-path} Let $B$ be a blossom. There is an even-length alternating path in $E_B$ from the base of $B$ to any other vertex in $B$. \end{lemma}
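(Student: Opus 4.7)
My plan is to proceed by strong induction on the recursive construction of $B$ given in \cref{def:blossom}. The base case is a trivial blossom $B=\{v\}$ with $E_B=\emptyset$: the base is $v$ itself and the only ``vertex in $B$'' is $v$, for which the empty path is vacuously an even-length alternating path.

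For the inductive step I will take $B=\bigcup_{i=0}^{k} A_i$ with $k$ even, connected by cycle edges $e_0,\dots,e_k$ with matched cycle edges $e_1,e_3,\dots,e_{k-1}$. Writing $b_i$ for the base of $A_i$, I will first establish a structural lemma that every matched cycle edge $e_{2\ell+1}$ is incident to the bases $b_{2\ell+1}$ and $b_{2\ell+2}$: this follows because $M\cap E_B$ is a matching and, by the inductive hypothesis (together with the paper's background fact about bases), $M\cap E_{A_i}$ already matches every vertex of $A_i$ except $b_i$, so no non-base endpoint could absorb an extra matched edge without violating the matching property.

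Fix $v\in A_j$. If $j=0$ the IH applied to $A_0$ gives the path directly inside $E_{A_0}\subseteq E_B$. Otherwise, since the cycle has odd length $k+1$, exactly one of the two cyclic distances $j$ and $k+1-j$ from $A_0$ to $A_j$ is even; I traverse the cycle in the even-distance direction (WLOG forward, with $j$ even; the other case is symmetric). I build the path by concatenation: an IH path inside $A_0$ from $b_0$ to the $A_0$-endpoint of $e_0$; the edge $e_0$; a \emph{reversed} IH path inside $A_1$ from the $A_1$-endpoint of $e_0$ to $b_1$; the edge $e_1$; an IH path inside $A_2$ from $b_2$ to the $A_2$-endpoint of $e_2$; and so on, finishing with an IH path inside $A_j$ from $b_j$ to $v$. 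Each internal IH path has even length and there are exactly $j$ cycle edges, so the total length is even.

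The main obstacle I anticipate is checking alternation at each junction. In the generic case it follows from three facts: (i) every IH path starts with an unmatched edge (its base endpoint is unmatched inside $E_{A_i}$) and, having even length, ends with a matched edge; (ii) cycle edges alternate U, M, U, M, $\dots$ starting with $e_0$ unmatched; (iii) by the structural lemma above, every matched cycle edge attaches to a base, so glued ends pair U-to-M correctly. The subtle corner cases are those in which an internal IH path degenerates to length $0$ — namely when the endpoint of an unmatched cycle edge in some $A_i$ happens to coincide with $b_i$, or when $v=b_j$ — because then there is no ``end-edge'' to enforce alternation. For these I will argue directly from (ii) and (iii) that the two cycle edges meeting at such a base vertex already have opposite match status, so the concatenation remains alternating.
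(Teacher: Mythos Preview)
The paper does not actually prove \cref{lem:even-path}; it is quoted as a known fact from \cite{DP14} and used as a black box (e.g., inside $\algAugment$). So there is no ``paper's own proof'' to compare against.

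Your inductive argument is correct and is the standard proof of this fact. Two small points worth tightening: (i) you should state explicitly that the base of $B$ equals $b_0$, the base of $A_0$; this follows immediately from your structural lemma (every $b_i$ with $i\neq 0$ is matched by a cycle edge, while $b_0$ is incident only to the unmatched cycle edges $e_0$ and $e_k$), but as written you silently identify ``base of $B$'' with ``$b_0$'' in the $j=0$ case. (ii) Simplicity of the constructed path deserves one line: the $A_i$ are pairwise vertex-disjoint and each internal IH path is simple, so the concatenation is simple. Otherwise your alternation bookkeeping and your handling of the degenerate length-$0$ subpaths are both fine.
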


\begin{definition}[Blossom contraction] Let $B$ be a blossom. We define the contracted graph $G / B$ as the undirected simple graph obtained from $G$ by contracting all vertices in $B$ into a vertex, denoted by $B$.
\end{definition}

The following lemma is proven in \cite[Theorem 4.13]{edmonds1965paths}.

\begin{lemma}[\cite{edmonds1965paths}]\label{lem:contraction} Let $T$ be an alternating tree of a graph $G$ and $e \in E(G)$ be an edge connecting two outer vertices of $T$. Then, $T \cup \{e\}$ contains a unique blossom $B$. The graph $T / B$ is an alternating tree of $G / B$. It contains $B$ as an outer vertex. Its other inner and outer vertices are those of $T$ which are not in $B$.
\end{lemma}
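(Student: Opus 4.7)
The plan is to exhibit the blossom $B$ explicitly from the unique cycle in $T \cup \{e\}$ and then verify that contracting it preserves the alternating-tree structure. Since $T$ is a tree, $T \cup \{e\}$ contains exactly one simple cycle $C$, obtained by closing the unique $T$-path between $u$ and $v$ with the edge $e$. Let $w$ denote the least common ancestor of $u$ and $v$ in $T$. I will first argue that $w$ must be an outer vertex. Recall that in an alternating tree the edge from an inner vertex to its parent is unmatched while the edge from an inner vertex to any child is matched; since a matching allows at most one matched edge per vertex, every inner vertex has exactly one child in $T$. If $w$ were inner, its unique child would have to lie on both root-to-$u$ and root-to-$v$ paths, contradicting the choice of $w$ as the \emph{last} common vertex of those paths.

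With $w$ established to be outer, I would verify that $C$ is a blossom with base $w$. Both $u$ and $v$ are outer, so the two $T$-subpaths $w \to u$ and $w \to v$ have even length; together with the single edge $e$ the cycle $C$ has odd length. Next I will check the matched/unmatched alternation around $C$ starting from $w$: the first edge leaving $w$ on each side is unmatched (since $w$ is outer), the two subpaths then alternate, and they terminate at $u$ and $v$ with matched edges (the edges to their $T$-parents). The closing edge $e$ is unmatched because both $u$ and $v$ are already saturated in $M$ by those parent edges. Reading $C$ from $w$ one obtains precisely the pattern required by \cref{def:blossom}: an odd cyclic sequence of vertex-disjoint sub-blossoms (the vertices of $T$, each already a possibly trivial blossom) connected by edges in which those in odd positions are matched, and $w$ is the unique vertex on $C$ not saturated by $M \cap E(C)$, hence the base. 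Uniqueness of $B$ is then immediate from the uniqueness of the cycle in $T \cup \{e\}$.

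For the second half of the lemma, I would show that $T/B$ is an alternating tree of $G/B$ by tracking root-to-vertex paths across the contraction. For any vertex $z \notin B$, its root-path $P_z$ in $T$ either avoids $B$ or enters $B$ at the highest-in-$T$ vertex $w$ and exits at some $x \in B$. The crucial observation is that such an exit vertex $x$ must itself be outer: every inner vertex of $B$ other than $w$ has its unique $T$-child inside $B$, hence no $T$-descendants outside $B$. Consequently the $T$-edges immediately before $w$ and immediately after $x$ on $P_z$ are both unmatched, and the subpath $w \to x$ inside $T$ has even length. Contracting $B$ shrinks this even alternating segment to the single vertex $B$ while preserving the unmatched-to-unmatched boundary and hence the global alternation; the resulting image of $P_z$ in $T/B$ still starts at the root and has the same parity length as before. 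This shows that $T/B$ is an alternating tree of $G/B$, that $B$ is an outer vertex (reached by the same parity as $w$), and that every vertex $z \notin B$ keeps its inner/outer label.

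The main obstacle will be the blossom-pattern verification for $C$ at the level of super-nodes: when $T$ lives in a graph that is itself the result of prior contractions, the vertices of $T$ are already nontrivial blossoms $A_i$, so I must check that the connecting edges $e_i$ inherit the matched/unmatched alternation required by \cref{def:blossom} and that the base of $C$ coincides with an outer base at the previous contraction level (which in turn relies on invoking \cref{lem:even-path} to re-route alternating paths through each $A_i$). All remaining ingredients -- parity of tree distances, uniqueness of the cycle, and preservation of inner/outer status under contraction -- are routine bookkeeping once $w$ is known to be outer.
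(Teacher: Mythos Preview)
The paper does not prove this lemma; it is quoted verbatim as a classical fact and attributed to Edmonds~\cite{edmonds1965paths} (the text preceding the statement reads ``The following lemma is proven in \cite[Theorem 4.13]{edmonds1965paths}''). So there is no in-paper argument to compare against.

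Your proposal is a correct reconstruction of the standard proof. The key steps---uniqueness of the cycle $C$ in $T\cup\{e\}$, the observation that the LCA $w$ of $u$ and $v$ must be outer because inner vertices have a unique child, the parity count giving $|C|$ odd with $w$ as base, and the verification that every $T$-edge leaving $B$ emanates from an outer vertex of $B$ (so root-path parities are preserved under contraction)---are all sound. The caveat you flag about super-nodes is handled directly by \cref{def:blossom}: the vertices of $T$ are already (possibly trivial) blossoms, and the odd alternating cycle $C$ among them is precisely the recursive pattern required there, so no extra work via \cref{lem:even-path} is needed at this stage.
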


Consider a set $\Omg$ of blossoms.
We say $\Omg$ is \emph{laminar} if the blossoms in $\Omg$ form a laminar set family.
Assume that $\Omg$ is laminar.
A blossom in $\Omg$ is called a \emph{root blossom} if it is not contained in any other blossom in $\Omg$.
Denote by $G / \Omg$ the undirected simple graph obtained from $G$ by contracting each root blossom of $\Omg$.
For each vertex in $\bigcup_{B \in \Omg} B$, we denote by $\Omg(v)$ the unique root blossom containing $v$.
If $\Omg$ contains all vertices of $G$, we denote by $M / \Omg$ the set of edges $\{ \{\Omg(u), \Omg(v)\} \mid \{u, v\} \in M \mbox{ and } \Omg(u) \neq \Omg(v) \}$ on the graph $G / \Omg$. It is known that $M / \Omg$ is a matching of $G / \Omg$ \cite{DP14}.

In our algorithm, we maintain \emph{regular} sets of blossoms, which are sets of blossoms whose contraction would transform the graph into an alternating tree satisfying certain properties.

\begin{definition}[Regular set of blossoms]
A regular set of blossoms of $G$ is a set $\Omg$ of blossoms satisfying the following:
\begin{itemize}
    \item[(C1)] $\Omg$ is a laminar set of blossoms of $G$. It contains the set of all trivial blossoms in $G$. 
    If a blossom $B \in \Omg$ is defined to be the cycle formed by $A_0, \dots, A_k$, then $A_0, \dots, A_k \in \Omg$.
    \item[(C2)] $G / \Omg$ is an alternating tree with respect to the matching $M / \Omg$. Its root is $\Omg(\alp)$ and each of its inner vertex is a trivial blossom (whereas each outer vertex may be a non-trivial blossom).
\end{itemize}    
\end{definition}

\subsection{Representation of edges and paths}
Each undirected edge $\{u, v\}$ is represented by two directed \emph{arcs} $(u, v)$ and $(v, u)$.
Let $(u, v)$ be an arc.
We say $(u, v)$ is \emph{matched} if $\{u, v\}$ is a matched edge;
otherwise, $(u, v)$ is \emph{unmatched}.
The vertex $u$ and $v$ are called, respectively, \emph{tail} and \emph{head} of $(u, v)$.
We denote by $\cev{(u, v)} = (v, u)$ the reverse of $(u, v)$.

Let $P = (u_1, v_1, \dots, \allowbreak u_k, v_k)$ be an alternating path, where $u_i$ and $v_i$ are vertices, $(u_i, v_i)$ are matched arcs, and $(v_i, u_{i+1})$ are unmatched ones.
Let $a_i = (u_i, v_i)$.
We often use $(a_1, a_2, \dots, a_k)$ to refer to $P$, i.e., we omit specifying unmatched arcs.
Nevertheless, it is guaranteed that the input graph contains the unmatched arcs $(v_i, u_{i+1})$, for each $1 \leq i < k$.
If $P$ is an alternating path that starts and/or ends with unmatched arcs, e.g., $P = (x, u_1, v_1, \dots, u_k, v_k, y)$ where $(x, u_1)$ and $(v_k, y)$ are unmatched while $a_i = (u_i, v_i)$ for $i = 1 \dots k$ are matched arcs, we use $(x, a_1, . . . , a_k, y)$ to refer to $P$. 

\subsection{Models of computation}

\paragraph{Massively Parallel Computation (MPC).}
The Massively Parallel Computation (MPC) model has become a standard for parallel computing, introduced in a series of papers~\cite{dean2008mapreduce,karloff2010model,goodrich2011sorting}. 
It is a theoretical abstraction of popular large-scale frameworks such as MapReduce, Flume, Hadoop, and Spark.
An MPC instance consists of $M$ machines whose communication topology is a clique. Each machine is characterized by its local memory of size $S$.
An MPC computation proceeds in synchronous rounds.
The input data is arbitrarily partitioned across the $M$ machines while ensuring that data sent to a machine is no larger than $S$.
During a round, each machine first performs computation locally. 
At the end of a round, the machines simultaneously exchange messages with the constraint that the total size of messages sent and received by a machine is at most $S$.

\paragraph{\CONGEST.}
Given a graph $G = (V, E)$, \CONGEST is a distributed model with $|V|$ machines with the topology between them being $E$.
The computation in this model proceeds in synchronous rounds. In each round, the machines perform computation independently; after that, each machine can send $O(\log n)$ bits of information along each edge.
Different information can be sent across different edges adjacent to the same machine.
The machines can perform arbitrary computations and use large amounts of space.

\paragraph{Semi-streaming model.}
In the semi-streaming model~\cite{FeigenbaumKMSZ05}, we assume the algorithm has no random access to the input graph.
The set of edges is represented as a stream.
In this stream, each edge is presented exactly once, and each time the stream is read, edges may appear in an arbitrary order.
The stream can only be read as a whole and reading the whole stream once is called a \emph{pass} (over the input).
The main computational restriction of the model is that the algorithm can only use $O(n \poly \log n)$ words of space, which is not enough to store the entire graph if the graph is sufficiently dense.

\paragraph{Dynamic.}
In a fully dynamic setting, we assume that edges in a graph are inserted and deleted. In an incremental (decremental) only setting, edges are only inserted (deleted).
Our algorithm is supposed to maintain a certain structure after each edge update. 
For example, after each update, it should be able to report a $1+\eps$ approximation of MCM.
In this setting, the goal is to reduce the time the algorithm needs to update the structure after an update.

\section{Review of the semi-streaming algorithm in \cite{MMSS25}} \label{sec:simp}

\subsection{Basic notation for the algorithm}

\paragraph{Vertex structures.}

In \cite{MMSS25}'s algorithm, each free vertex $\alp$ maintains a \emph{structure}, defined as follows. (See \cref{fig:structure} for an example)

\begin{definition}[The structure of a free vertex, \cite{MMSS25}] \label{def:structure}
The structure of a free vertex $\alp$, denoted by $\cS_\alp$, is a tuple $(G_\alp, \Omg_\alp, w'_\alp)$, where 
\begin{itemize}
    \item $G_\alp$ is a subgraph of $G$,
    \item $\Omg_\alp$ is a regular set of blossoms of $G_\alp$, and 
    \item $w'_\alp$ is either $\emptyset$ or an outer vertex of the alternating tree $G_\alp / \Omg_\alp$.
\end{itemize} 
Each structure $\cS_\alp$ satisfies the following properties.
\begin{enumerate}
    \item \textbf{Disjointness:} For any free vertex $\beta \neq \alp$, $G_\alp$ is vertex-disjoint from $G_\beta$.
    \item \textbf{Tree representation:} The subgraph $G_\alp$ contains a set of arcs satisfying the following: If $G_\alp$ contains an arc $(u, v)$ with $\Omg_\alp(u) \neq \Omg_\alp(v)$, then $\Omg_\alp(u)$ is the parent of $\Omg_\alp(v)$ in the alternating tree $G_\alp / \Omg_\alp$.
\end{enumerate}
\end{definition}

\noindent We denote the alternating tree $G_\alp / \Omg_\alp$ by $T'_\alp$.

\begin{definition}[The working vertex and active path of a structure, \cite{MMSS25}]
The \emph{working vertex} of $\cS_\alp$ is defined as the vertex $w'_\alp$, which can be $\emptyset$.
If $w'_\alp \neq \emptyset$, we define the \emph{active path} of $\cS_\alp$ as the unique path on $T'_\alp$ from the root $\Omg_\alp(\alp)$ to $w'_\alp$.
Otherwise, the active path is defined as $\emptyset$.
\end{definition}

\begin{definition}[Active vertices, arcs, and structures, \cite{MMSS25}]
A vertex or arc of $T'_\alp$ is said to be \emph{active} if and only if it is on the active path. We say $\cS_\alp$ is active if $w'_\alp \neq \emptyset$.    
\end{definition}

\begin{figure}[h]
\centering
    \begin{subfigure}[h]{0.27\linewidth}\label{fig:structure-graph}
        \includegraphics[width=\textwidth]{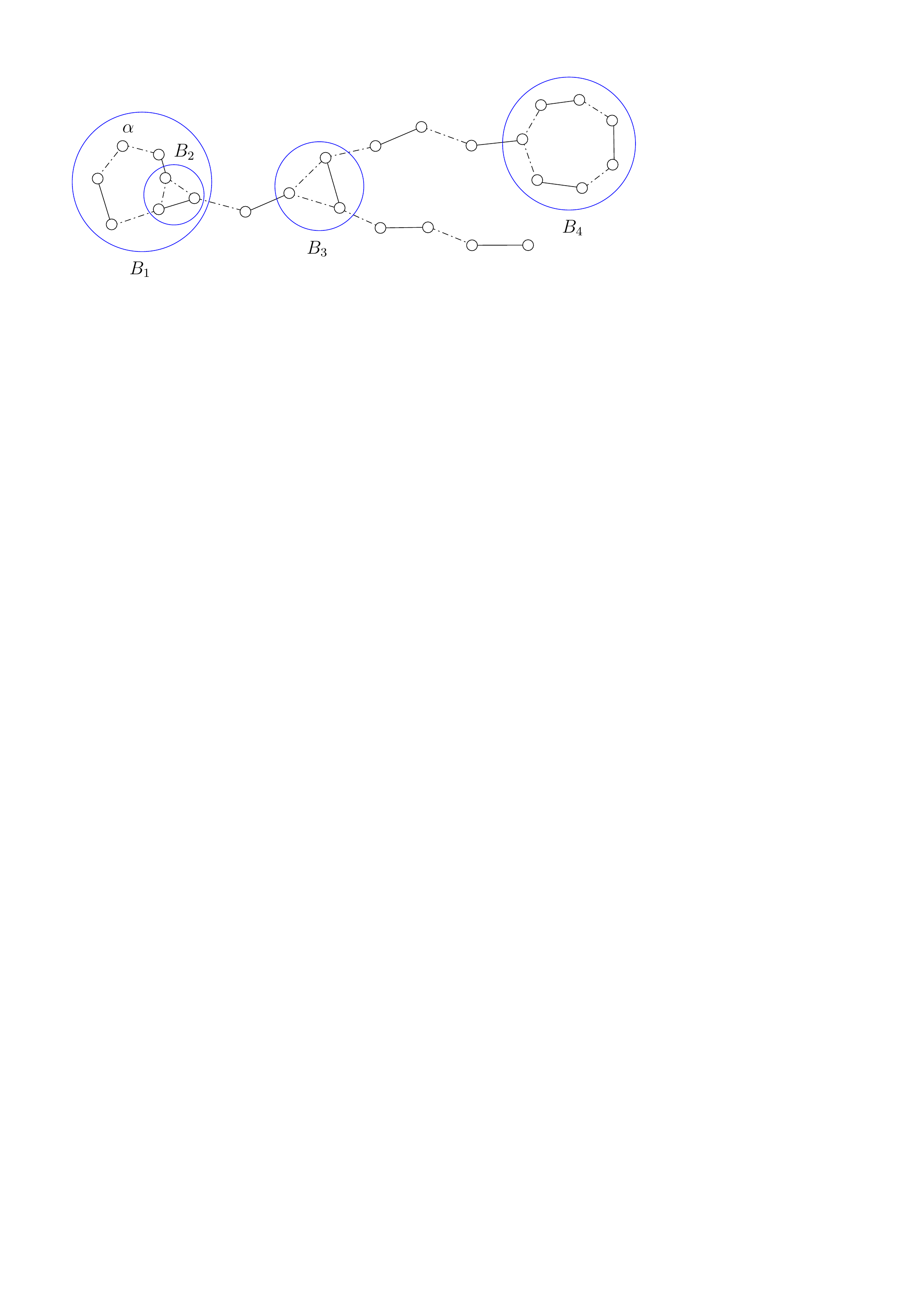}
        \caption{The graph $G_\alp$.}
        \label{fig:structure-a}
    \end{subfigure}
    \quad \quad
     \begin{subfigure}[h]{0.22\linewidth}
        \includegraphics[width=\textwidth]{Sketches/structure-contracted.png}
        \caption{The contracted graph $G_\alp/\Omg_\alp$.}
        \label{fig:structure-b}
    \end{subfigure}
    \caption{Example of a structure $S_\alp$, where $\alp$ is a free vertex. Dashed and solid edges denote the unmatched and matched edges, respectively.
    \cref{fig:structure-a} shows the graph $G_\alp$.
    The set $\Omg_\alp$ contains all trivial blossoms in $G_\alp$ and the non-trivial blossoms $\{B_1, B_2\}$.  \cref{fig:structure-b} shows the corresponding contracted graph $G_\alp/\Omg_\alp$. The encircled vertices correspond to the non-trivial blossom $B_1$. The vertex $w'_\alp$ is the working vertex and the highlighted path, from $\alp$ to $w'_\alp$, is the active path.}
    \label{fig:structure}
\end{figure}

Let $F$ be the set of free vertices.
Throughout the execution, we maintain a set $\Omg$ of blossoms, which consists of all blossoms in $\bigcup_{\alp \in F} \Omg_\alp$ and all trivial blossoms.
Note that $\Omg$ is a laminar set of blossoms.
We denote by $G'$ the contracted graph $G / \Omg$.
The vertices of $G'$ are classified into three sets:
(1) the set of inner vertices, which contains all inner vertices in $\bigcup_{\alp \in F} V(T'_\alp)$;
(2) the set of outer vertices, which contains all outer vertices in $\bigcup_{\alp \in F} V(T'_\alp)$;
(3) the set of \emph{unvisited vertices}, which are the vertices not in any structure.

Similarly, we say a vertex in $G$ is \emph{unvisited} if it is not in any structure.
An arc $(u, v) \in G$ is a \emph{blossom arc} if $\Omg(u) = \Omg(v)$; otherwise, $(u, v)$ is a \emph{non-blossom arc}.
An \emph{unvisited arc} is an arc $(u, v) \in E(G)$ such that $u$ and $v$ are unvisited vertices.

\paragraph{Labels.}
The algorithm stores the set of all \emph{matched arcs} throughout its execution.
Each matched arc is associated with a \emph{label}, defined as follows.

\begin{definition}[The label of a matched arc, \cite{MMSS25}]
Each matched arc $a^* \in G$ is assigned a label $\ell(a^*)$ such that $1 \leq \ell(a^*) \leq \lmax + 1$, where $\lmax$ is defined as $3 / \eps$.
\end{definition}

Each matched arc $a' \in G'$ corresponds to a unique non-blossom matched arc $a \in G$;
for ease of presentation, we denote by $\ell(a')$ the label of $a$.
Their algorithm also maintains an invariant on the monotonicity of labels along alternating paths from the root.

\subsection{Overview of the algorithm}
\label{sec:algo-statement}

\cref{alg:outline} gives a high-level algorithm description of \cite{MMSS25}. 
Without loss of generality, we assume that $\frac{1}{\eps}$ is a power of $2$.

\begin{algorithm}
\begin{algorithmic}[1]
\medskip 
\Statex \textbf{Input:} a graph $G$ and the approximation parameter $\eps$
\Statex \textbf{Output:} a $(1+\eps)$-approximate maximum matching
\medskip
\Statex \hrule 

\State compute a 2-approximate maximum matching $M$ \label{line:2-approx-matching}
\For{scale $h = \frac{1}{2}, \frac{1}{4}, \frac{1}{8}, \dots, \frac{\epsilon^2}{64}$\label{line:scale-h}}
    \For {phases $t = 1, 2, \dots, \frac{144}{h\eps}$ \label{line:call-phase-given-h}} 
        \State $\calP \gets \algPhase(G, M, \eps, h)$ \Comment{Nothing stored from the previous phase.}
        \State restore all vertices removed in the execution of \algPhase \label{line:restore}
        \State augment the current matching $M$ using the vertex-disjoint augmenting paths in $\calP$ \label{line:augment}
    \EndFor
\EndFor
\State \Return $M$
\end{algorithmic}
\caption{A high-level algorithm description, \cite[Algorithm 1]{MMSS25}}
\label{alg:outline}
\end{algorithm}

In each phase, the procedure $\algPhase$ is invoked to find a set $\calP$ of vertex-disjoint augmenting paths.
In $\algPhase$, we may \emph{hypothetically} remove some vertices from $G$.
After $\algPhase$, \cref*{line:restore} restores all removed vertices to $G$.
Then, \cref*{line:augment} augments the current matching using the set $\calP$ of vertex-disjoint augmenting paths, which increase the size of $M$ by $|\calP|$.

\begin{algorithm}
\begin{algorithmic}[1]
\medskip 
\Statex \textbf{Input:} a graph $G$, the current matching $M$, the parameter $\eps$, and the current scale $h$
\Statex \textbf{Output:} a set $\calP$ of \emph{disjoint} $M$-augmenting paths
\medskip
\Statex \hrule 

\State $\calP \gets \emptyset$ \label{line:init-empty}
\State $\ell(a) \gets \lmax + 1$ for each arc $a \in M$ \label{line:init-label}
\State for each free vertex $\alp$, initialize its structure $\cS_\alp$ \label{line:init-structure}
\State compute parameters $\limit_h = \frac{6}{h} + 1$ and $\taumax(h) = \frac{72}{h\eps}$ \label{line:compute-par}
\For{$\bundle$s $\tau = 1, 2, \dots, \taumax(h)$} \label{line:for-pass}
    \For {each free vertex $\alp$} \label{line:init-for}
        \State if $\cS_\alp$ has at least $\limit_h$ vertices in $G$, mark $\cS_\alp$ as ``on hold'' \label{line:hold}
        \State if $\cS_\alp$ has less than $\limit_h$ vertices in $G$, mark $\cS_\alp$ as ``not on hold'' \label{line:not-hold}
        \State mark $\cS_\alp$ as ``not modified'' and ``not extended'' \label{line:not-modified}
    \EndFor
    \State $\algExtend$ (\cref{alg:extend}) \label{line:extend}
    \State $\algCheck$ \label{line:check}
    \State $\algBacktrack$ \label{line:backtrack}
\EndFor
\State \Return
\end{algorithmic}
\caption{$\algPhase$: the execution of a single phase, \cite[Algorithm 2]{MMSS25}
}
\label{alg:phase}
\end{algorithm}

\subsection{A phase (\algPhase)} 
\label{sec:phase}
In each phase, the algorithm executes DFS explorations from all free vertices in parallel; see \cref{alg:phase} for pseudocode.
\Cref*{line:init-empty,line:init-label,line:init-structure} initialize the set of paths $\calP$, the label of each arc, and the structure of each free vertex.
The structure of a free vertex $\alp$ is initialized to be an alternating tree of a single vertex $\alp$.
That is, $G_\alp$ and $\Omg_\alp$ are set to be
    a graph with a single vertex $\alp$ and
    a set containing a single trivial blossom $\{\alp\}$, respectively;
the working vertex $w'_\alp$ is initialized as the root of $T'_\alp$, that is, $\Omg_\alp(\alp)$.
The for-loop in \Cref*{line:for-pass} executes $\taumax(h)$ iterations, where each iteration is referred to as a \emph{\bundle}.
Each $\bundle$ consists of four parts:
\begin{enumerate}[(1)]
    \item \Cref*{line:init-for,line:hold,line:not-hold,line:not-modified} initialize the status of each structure in this $\bundle$. A structure is marked as \emph{on hold} if and only if it contains at least $\limit_h$ vertices. Each structure $\cS_\alp$ is marked as \emph{not modified} and \emph{not extended}. The purpose of this part is described in \cref{sec:marking}.
    \item $\algExtend$ makes a pass over the stream and attempts to extend each structure that is not on hold. Details of this procedure are given in \cref{sec:extend}.
    \item After $\algExtend$, $\algCheck$ is then invoked to identify blossoms and augmenting paths. 
    The procedure makes a pass over the stream, contracts some blossoms that contain the working vertex of a structure, and identifies pairs of structures that can be connected to form augmenting paths. Details of this procedure are given in \cref{sec:check}.
    \item The procedure $\algBacktrack$ examines each structure. If a structure is not on hold and fails to extend in this pass, $\algBacktrack$ backtracks the structure by removing one matched arc from its active path. Details of this procedure are given in \cref{sec:backtrack}.
\end{enumerate}

\subsection{Marking a structure on hold, modified, or extended} \label{sec:marking}
In the for-loop of \cref*{line:init-for}, we mark a structure $\cS_\alp$ \textit{on hold} if and only if it contains at least $\limit_h$ vertices.
See \cref*{line:hold,line:not-hold} of \cref{alg:phase}.

In the for-loop, we also mark each structure as \textit{not modified}.
Recall that each structure $\cS_\alp$ is represented by a tuple $(G_\alp, \Omg_\alp, w'_\alp)$;
in the execution of a $\bundle$, we mark a $\cS_\alp$ as modified whenever any of $G_\alp, \Omg_\alp$, or $w'_\alp$ is changed.
We also mark every structure as \emph{not extended}.
In the execution of $\algExtend$, we mark a structure as extended if it performs one of the basic operations presented in \cref{sec:basic-operations}.

\subsection{Basic operations on structures} \label{sec:basic-operations}
\cite{MMSS25} present three basic operations for modifying the structures. 
These operations are used to execute $\algExtend$ and $\algCheck$. 
Whenever one of these operations is applied, the structures involved are marked as modified and extended, except for a case: if a structure $\cS_\alp$ overtakes another structure $\cS_\beta$ (see \cref{sec:overtake}), only the overtaker ($\cS_\alp$) is marked as extended.

\subsubsection{Procedure $\algAugment(g, \calP)$}
\label{sec:augment}
\begin{itemize}
    \item \textbf{Invocation reason:} When the algorithm discovers an augmenting path in $G$.
    \item \textbf{Input:}
    
    - The set $\calP$.
    
    - An unmatched arc $g = (u, v)$, where $g \in E(G)$. The arc $g$ must satisfy the following property: $\Omg(u)$ and $\Omg(v)$ are outer vertices of two different structures.
\end{itemize}
Since $\Omg(u)$ is an outer vertex, $T'_\alp$ contains an even-length alternating path from the root $\Omg(\alp)$ to $\Omg(u)$.
Similarly, $T'_\beta$ contains an even-length alternating path from $\Omg(\beta)$ to $\Omg(v)$.
Since there is an unmatched arc $(\Omg(u), \Omg(v))$ in $G'$, the two paths can be concatenated to form an augmenting path $P'$ on $G'$.

By using \cref{lem:even-path}, we obtain an augmenting path $P$ on $G$ by replacing each blossom on $P'$ with an even-length alternating path.
$\algAugment$ adds $P$ to $\calP$ and removes $\cS_\alp$ and $\cS_\beta$.
That is, all vertices from $V(G_\alp) \cup V(G_\beta)$ are removed from $G$, and $\Omg$ is updated as $\Omg - (\Omg_\alp \cup \Omg_\beta)$.
The vertices remain removed until the end of $\algPhase$.
This guarantees that the paths in $\calP$ remain disjoint.
Recall that the algorithm adds these vertices back before the end of this phase, when \cref*{line:restore} of \cref{alg:outline} is executed.

\subsubsection{Procedure $\algContract(g)$}
\label{sec:contract}
\begin{itemize}
    \item \textbf{Invocation reason:} When a blossom in a structure is discovered.
    \item \textbf{Input:} 

    - An unmatched arc $g = (u, v)$, where $g \in E(G)$, such that $\Omg(u)$ and $\Omg(v)$ are distinct outer vertices in the same structure, denoted by $\cS_\alp$. In addition, $\Omg(u)$ is the working vertex of $\cS_\alp$
\end{itemize}
Let $g'$ denote the arc $(\Omg(u), \Omg(v))$.
By \cref{lem:contraction}, $T'_\alp \cup \{g'\}$ contains a unique blossom $B$.
The procedure contracts $B$ by adding $B$ to $\Omg_\alp$; hence, $T'_\alp$ is updated as $T'_\alp / B$ after this operation.
The arc $g$ is added to $G_\alp$.

By \cref{lem:contraction}, $T'_\alp$ remains an alternating tree after the contraction, and $B$ becomes an outer vertex of $T'_\alp$.
Next, the procedure sets the label of each matched arc in $E(B)$ to $0$.
(After this step, for each matched arc $a \in E(B)$, both $\ell(a)$ and $\ell(\cev{a})$ are $0$.)

Note that the working vertex of $\cS_\alp$, that is, $\Omg(u)$, is contracted into the blossom $B$.
The procedure then sets $B$ as the new working vertex of $\cS_\alp$.
Then, $\cS_\alp$ is marked as modified and extended.

\subsubsection{Procedure $\algOvertake(g, a, k)$}
\label{sec:overtake}
\begin{itemize}
    \item \textbf{Invocation reason:} When the active path of a structure $\cS_\alp$ can be extended through $g$ to \emph{overtake} the matched arc $a$ and reduce $\ell(a)$ to $k$.
    \item     \textbf{Input:}

    - An unmatched arc $g = (u, v) \in G$.

    - A non-blossom matched arc $a = (v, t) \in G$, which shares the endpoint $v$ with $g$.
    
    - A positive integer $k$.

    - The input must satisfy the following.
        \begin{itemize}
            \item[] (P1) $\Omg(u)$ is the working vertex of a structure, denoted by $\cS_\alp$.
            \item[] (P2) $\Omg(v) \neq \Omg(u)$, and $\Omg(v)$ is either an unvisited vertex or an inner vertex of a structure $\cS_\beta$, where $\cS_\beta$ can be $\cS_\alp$. In the case where $\Omg(v) \in \cS_\alp$, $\Omg(v)$ is not an ancestor of $\Omg(u)$.
            \item[] (P3) $k < \ell(a)$.
        \end{itemize}

\end{itemize}
For ease of notation, we denote $\Omg(u), \Omg(v)$, and $\Omg(t)$ by $u', v',$ and $t'$, respectively.
Since $v'$ is not an outer vertex, it is the trivial blossom $\{v\}$.
The procedure $\algOvertake$ performs a series of operations, detailed as follows.
Consider three cases, where in all of them we reduce the label of $a$ to $k$.

\paragraph{Case 1.} $a$ is not in any structure.
We include the arcs $g$ and $a$ to $G_\alp$.
The trivial blossoms $v'$ and $t'$ are added to $\Omg_\alp$.
The working vertex of $\cS_\alp$ is updated as $t'$, which is an outer vertex of $T'_\alp$.
Then, $\cS_\alp$ is marked as modified and extended.

\paragraph{Case 2.} $a$ is in a structure $\cS_\beta$.
By the definition of $g$, $v'$ is an inner vertex.
Thus, $v'$ is not the root of $T'_\beta$.
Let $p'$ be the parent of $v'$.
Two subcases are considered, where in both cases we re-assign the parent of $v'$ as $u'$ and make corresponding change in $G$.

\begin{itemize}
    \item[] \textbf{Case 2.1.} $\alp = \beta$. By (P2), $v'$ is not an ancestor of $u'$.
    We remove from $G_\alp$ all arcs $(p, v)$ such that $\Omg(p) = p'$.
    Then, $g$ is added to $G_\alp$.
    In $G'$, his operation corresponds to re-assigning the parent of $v'$ as $u'$.
    Then, we update the working vertex of $\cS_\alp$ as $t'$ and mark $\cS_\alp$ as modified.
    In addition, $\cS_\alp$ is marked as extended.

\item[] \textbf{Case 2.2.} $\alp \neq \beta$.
    See \cref{fig:overtake} for an example.
    Similar to Case 2.1, the objective of the overtaking operation is to re-assign the parent of $v'$ as $u'$ in $G'$.
    However, we need to handle several additional technical details in this case.
    The overtaking operation consists of the following steps.
    \begin{itemize}
        \item[] Step 1: Remove from $G_\beta$ all arcs $(p, v)$ such that $\Omg(p) = p'$; add the arc $(u, v)$ to $G_\alp$.
        \item[] Step 2: Move, from $G_\beta$ to $G_\alp$, all vertices $x$ such that $\Omg(x)$ is in the subtree of $v'$
        \item[] Step 3: Move, from $G_\beta$ to $G_\alp$, all arcs $(x, y)$ where $x$ and $y$ are both moved in Step 2.
        \item[] Step 4: Move, from $\Omg_\beta$ to $\Omg_\alp$, all blossoms that contain a subset of vertices moved in Step 2.
        \item[] Step 5: If the working vertex of $\cS_\beta$ was under the subtree of $t'$ before Step 1, we set $w'_\alp$ as $w'_\beta$ and then update $w'_\beta$ as $\Omg(p)$. Otherwise, set $w'_\alp$ as $t'$.
    \end{itemize}
    After the overtaking operation, both $\cS_\alp$ and $\cS_\beta$ are marked as modified, and only $\cS_\beta$ is marked as extended.
\end{itemize}

\begin{figure}[h]
\centering
    \begin{subfigure}[h]{0.6\linewidth}
        \includegraphics[width=\textwidth]{Sketches/overtake1.png}
        \caption{Before \algOvertake.}
        \label{fig:overtake-a}
    \end{subfigure}
    \vskip\baselineskip
     \begin{subfigure}[h]{0.6\linewidth}
        \includegraphics[width=\textwidth]{Sketches/overtake2.png}
        \caption{After \algOvertake.}
        \label{fig:overtake-b}
    \end{subfigure}
    \caption{Example of Case 2.2 of the procedure $\algOvertake$ where $g = (u, v)$ connects the two structures $S_\alp$ and $S_\beta$. Although in this example $\Omg(p) = \{p\}$ is a trivial blossom, it can be a non-trivial blossom in general.}
    \label{fig:overtake}
\end{figure}

\subsection{Procedure \algExtend} \label{sec:extend}
The goal of $\algExtend$ is to \textit{extend} each structure $\cS_\alp$, where $\cS_\alp$ is not on hold, by performing at most one of the $\algAugment$, $\algContract$, or $\algOvertake$ operations.

\begin{algorithm}
\begin{algorithmic}[1]
\medskip 
\Statex \textbf{Input:} a graph $G$, the parameter $\eps$, the current matching $M$, the structure $\cS_\alp$ of each free vertex $\alp$, the set of paths $\calP$
\medskip 
\Statex \hrule 

\For{each arc $g = (u, v) \in E(G)$ on the stream}
    \If{$u$ or $v$ was removed in this phase}
        \State continue with the next arc
    \EndIf
    
    \If{$\Omg(u) = \Omg(v)$, or $\Omg(u)$ is not the working vertex of any structure, or $g$ is matched}
        \State continue with the next arc
    \EndIf
    \If{$u$ belongs to a structure that is marked as modified or on hold}
        \State continue with the next arc
    \EndIf
    \If {$\Omg(v)$ is an outer vertex}
        \If {$\Omg(u)$ and $\Omg(v)$ are in the same structure}
            \State $\algContract(g)$
        \Else
            \State $\algAugment(g)$
        \EndIf
    \Else \Comment{$\Omg(v)$ is either unvisited or an inner vertex.}
        \State compute $\lab(u)$ \label{line:shortest-path}
        \State $a \gets$ the matched arc in $G$ whose tail is $v$
        \If {$\lab(u) + 1 < \ell(a)$}
            \State $\algOvertake(g, a, \lab(u) + 1)$
        \EndIf
    \EndIf    
\EndFor
\end{algorithmic}
\caption{The execution of $\algExtend$.} \label{alg:extend}
\end{algorithm}

The procedure works as follows.
(See \cref{alg:extend} for a pseudocode.)
The algorithm makes a pass over the stream to read each arc $g = (u, v)$ of $G$.
When an arc $g$ is read, it is mapped to an arc $g' = (\Omg(u), \Omg(v))$ of $G'$.
In $\algExtend$, we only consider non-blossom unmatched arcs whose tail is a working vertex.
Hence, if $\Omg(u) = \Omg(v)$, $\Omg(u)$ is not the working vertex of a structure, or $g$ is a matched arc, then we simply ignore $g$.
If one of $u$ or $v$ is removed, we also ignore $g$.

Let $\cS_\alp$ denote the structure whose working vertex is $\Omg(u)$, and let $\cS_\beta$ denote the structure containing $\Omg(v)$.
We ignore $g$ if $\cS_\alp$ is marked as on hold or extended.
Thus, we also ignore $g$ if $\cS_\alp$ is marked as extended.
This ensures that each structure only extends once in the execution of $\algExtend$.
(Recall that when a structure is overtaken, it is marked as modified but not extended; therefore, it may still extend in this execution.)

We examine whether $g$ can be used for extending $\cS_\alp$ as follows.
\begin{itemize}
    \item[] \textbf{Case 1:} $\Omg(v)$ is an outer vertex and $\cS_\alp = \cS_\beta$. In this case, $g'$ induces a blossom on $T'_\alp$. We invoke $\algContract$ on $g$ to contract this blossom.
    \item[] \textbf{Case 2:} $\Omg(v)$ is an outer vertex and $\cS_\alp \neq \cS_\beta$. In this case, the two structures can be connected to form an augmenting path. We invoke $\algAugment$ to compute this augmenting path and remove the two structures.
    \item[] \textbf{Case 3:} $\Omg(v)$ is either an inner vertex or an unvisited vertex.
    Note that $v$ cannot be a free vertex because, for each free vertex $\gamma$, it holds that $\Omg(\gamma)$ is an outer vertex.
    Therefore, $v$ is the tail of a matched arc $a$.
    We determine whether $\cS_\alp$ can overtake $a$ by computing a number $\lab(u) + 1$ and compare it with $\ell(a)$.
    The number $\lab(u)$ represents the last label in the active path, which is computed as follows:
    If $\Omg(u)$ is a free vertex, $\lab(u)$ is set to $0$;
    otherwise, $\lab(u)$ is the label of the matched arc in $G'$ whose head is $\Omg(u)$.
    If $\lab(u) + 1 < \ell(a)$, $\algOvertake$ is invoked to update the label of $a$ as $\lab(u) + 1$.
\end{itemize}

\subsection{Procedure \algCheck} \label{sec:check}
The procedure $\algCheck$ performs two steps to identify augmenting paths and blossoms:
\begin{itemize}
    \item[] Step 1: Repeatedly invoke $\algContract$ on an arc connecting two outer vertices of the same structure, where one of the outer vertices is the working vertex. This operation is repeated until no such arcs exist.
    \item[] Step 2: Repeatedly perform \algAugment on an arc $g$ connecting outer vertices of different structures, until no such arcs exist.
\end{itemize}
We omit the implementation details in the streaming model.

\subsection{Procedure \algBacktrack} \label{sec:backtrack}
For each structure $\cS_\alp$ that is not on hold and not modified, \algBacktrack performs the backtrack operation as follows.
If $w'_\alp$ is a non-root outer vertex of $T'_\alp$, we update the working vertex as the parent of the parent of $w'_\alp$, which is an outer vertex.
Otherwise, set the working vertex of $\cS_\alp$ as $\emptyset$, which makes $\cS_\alp$ inactive.

\subsection{Properties of the algorithm}
Let $\Delta_h = 36h/\eps$ for each scale $h$.
Let $\Delta = 2304 \eps^{3}$, which is the minimum value of $\Delta_h$ accross all scales.
\cite{MMSS25} proved the following.

\begin{lemma}[{\cite{MMSS25}}, upper bound on the the size of structures] \label{lem:struct-size}
At any point of a scale $h$, each structure contains at most $\Delta_h$ vertices in $G$.
\end{lemma}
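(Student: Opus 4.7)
The plan is to prove this bound by induction on the pass-bundles executed within the current scale $h$, maintaining the hypothesis that every structure has at most $\Delta_h$ vertices at the start of each pass-bundle. The base case is immediate: at the start of the first pass-bundle, \cref*{line:init-structure} of \cref{alg:phase} initializes each structure to contain only its root, so $|V(G_\alp)| = 1 \leq \Delta_h$.

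For the inductive step, I would carefully analyze how a single pass-bundle can change $|V(G_\alp)|$. The marking phase (\cref*{line:init-for}--\cref*{line:not-modified}) puts every structure with at least $\limit_h = 6/h + 1$ vertices on hold. An on-hold structure cannot initiate any operation in $\algExtend$ because the arc-filtering check explicitly skips arcs whose tail belongs to an on-hold structure; the only changes to an on-hold structure during the pass-bundle are losses (from being overtaken in Case 2.2 of $\algOvertake$, or being removed by $\algAugment$). Thus an on-hold structure cannot grow during the pass-bundle in which it is on hold.

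A not-on-hold structure starts the pass-bundle with $< \limit_h$ vertices and can perform at most one basic operation in $\algExtend$. Inspecting each operation, $\algAugment$ removes the structure, $\algContract$ and $\algOvertake$ Case 2.1 do not change $|V(G_\alp)|$, $\algOvertake$ Case 1 adds exactly two vertices ($v$ and $t$), and $\algOvertake$ Case 2.2 transfers a subtree from another structure to $\cS_\alp$. The subsequent $\algCheck$ only performs further contractions (no vertex gain) and augmentations (full removal), and $\algBacktrack$ only adjusts the working vertex. So the only mechanisms that enlarge $|V(G_\alp)|$ are Case 1 of $\algOvertake$ (bounded additions of $2$) and Case 2.2 of $\algOvertake$ (subtree transfers).

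The main obstacle is bounding the cumulative contribution of Case 2.2 transfers across the $\taumax(h) = 72/(h\eps)$ pass-bundles of the phase, since a single such transfer can in principle move an entire subtree. The key leverage is property (P3) of $\algOvertake$: each overtake strictly reduces the label $\ell(a)$ of the overtaken matched arc, and labels lie in the bounded range $\{0,1,\ldots,\lmax+1\}$ with $\lmax = 3/\eps$. I would set up a charging argument that attributes each vertex currently in $\cS_\alp$ either to an initial introduction via Case 1 (or to the root) or to a chain of Case 2.2 transfers, and then bound the total length of such chains using the label-monotonicity invariant together with the global budget on label reductions. Combining the on-hold threshold $\limit_h$, the $O(1)$ per-pass-bundle growth from Case 1 summed over $\taumax(h)$ pass-bundles, and the amortized Case 2.2 accounting yields the claimed per-structure size bound of $\Delta_h$. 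The trickiest step is the chain-of-transfers bookkeeping, because a vertex landing in $\cS_\alp$ may have travelled through many intermediate structures during the phase.
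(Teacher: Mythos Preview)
The paper does not prove this lemma; it is quoted from \cite{MMSS25} without argument, so there is no in-paper proof to compare your proposal against.

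On its own merits, your outline is sound in skeleton---you correctly isolate that an on-hold structure cannot grow, that a not-on-hold structure gains vertices at most once per pass-bundle, and that the only growth mechanisms are Case~1 (adding two vertices) and Case~2.2 of \algOvertake---and you correctly flag Case~2.2 as the crux. But the proposal does not actually resolve that crux: the ``charging argument \ldots using the label-monotonicity invariant together with the global budget on label reductions'' is a hope, not an argument. Two concrete obstacles you have not addressed: (i)~the label-reduction budget is naturally a \emph{global} quantity (each matched arc's label can drop at most $\lmax{+}1$ times over the whole phase), and you give no mechanism to convert it into a \emph{per-structure} size bound; (ii)~within a single pass-bundle, the structure $\cS_\beta$ that $\cS_\alp$ overtakes may itself have already extended earlier in that pass-bundle (the guard in \algExtend tests only the overtaker's flags, not those of $\cS_\beta$), so the subtree $\cS_\alp$ absorbs can be strictly larger than $\cS_\beta$'s size at the pass-bundle's start, and your inductive hypothesis does not control it. Until you make the Case~2.2 accounting precise---for example by bounding the length of within-pass-bundle transfer chains, or by exhibiting a per-structure potential that is bounded by $\Delta_h$ and is monotone under overtakes---the proposal remains a plan rather than a proof.
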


\begin{theorem}[{\cite{MMSS25}}] \label{thm:MMSS}
\cref{alg:outline} outputs a $(1+\eps)$-approximate maximum matching.
\end{theorem}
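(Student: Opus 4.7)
The plan is to follow the analysis structure from \cite{MMSS25} and split the argument into three parts: structural correctness of the basic operations, correctness of the augmenting paths produced, and a scale-by-scale quantitative analysis bounding the residual defect of the final matching.

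For structural correctness, I would prove by induction over the basic operations that, after every invocation of $\algAugment$, $\algContract$, or $\algOvertake$, the following invariants hold for each active free vertex $\alp$: (i) $\Omg_\alp$ is a regular set of blossoms of $G_\alp$; (ii) $G_\alp/\Omg_\alp$ is an alternating tree rooted at $\Omg_\alp(\alp)$; (iii) $V(G_\alp)$ is disjoint from $V(G_\beta)$ for $\beta\neq\alp$; and (iv) labels are non-decreasing along every active path from the root, with the labels preserved or refined only by $\algOvertake$. The base case follows from the initialization in \cref*{line:init-empty,line:init-label,line:init-structure} of \cref{alg:phase}. The inductive step for $\algContract$ relies on \cref{lem:contraction} to show that contracting the induced blossom keeps the graph an alternating tree and turns the blossom into an outer vertex; the step for $\algOvertake$ is the most delicate, since in Case 2.2 a subtree migrates between structures, and one must verify that all five migration steps preserve the regular-blossom structure, the parent-child tree orientation, and the labels.

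For the correctness of augmenting paths, I would show that each $P$ added to $\calP$ by $\algAugment$ is indeed an $M$-augmenting path of $G$: the preconditions of $\algAugment$ give an augmenting path $P'$ on $G'$ obtained by concatenating the root-to-working paths of $\cS_\alp$ and $\cS_\beta$ with $g$, and \cref{lem:even-path} lifts $P'$ to $G$ by replacing each contracted blossom by an even-length alternating path in $E_B$. Vertex-disjointness of the paths in $\calP$ across a phase follows from the fact that both participating structures are removed from $G$ after each augmentation and do not return until \cref*{line:restore} of \cref{alg:outline}; across phases, \cref*{line:augment} updates $M$ along these paths before the next phase, so the structures are freshly reinitialized.

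The main quantitative step is to show that after processing all scales $h\in\{1/2,1/4,\dots,\eps^2/64\}$ with $\taumax(h)=72/(h\eps)$ $\bundle$s each, the final matching satisfies $|M|\ge \mu(G)/(1+\eps)$. I would argue by contradiction: if not, then by the standard Hopcroft–Karp bound there exist at least $\Omega(\eps\mu(G))$ vertex-disjoint $M$-augmenting paths of length at most $\lmax=3/\eps$. I would then define, for each scale $h$, a potential counting the ``remaining'' augmenting paths whose widths (number of length-$O(1/h)$ matched arcs that can be labeled consistently along their length) have not yet been destroyed by an overtake; the key claim is that at scale $h$, each of the $\taumax(h)$ $\bundle$s either augments by at least a $1/\taumax(h)$ fraction of the current potential or reduces that potential by the same amount through label-overtakes and backtracks, driven by the $\limit_h=6/h+1$ threshold that puts overly large structures on hold and defers them to a strictly finer scale. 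Combined with \cref{lem:struct-size} (each structure at scale $h$ has at most $\Delta_h=36h/\eps$ vertices), this yields after the last scale that no augmenting path of length $\le\lmax$ remains, contradicting the assumption.

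The hardest part will be the scale-to-scale bookkeeping: precisely quantifying how a short augmenting path that is ``missed'' at scale $h$ (because a structure encountering it went on hold) is guaranteed to be either destroyed by an overtake or completed at scale $h/2$, and ensuring that the potential drop at each $\bundle$ is truly $\Omega(h\eps\mu(G))$ once the backtrack and overtake rules are accounted for. This is where the label $\ell(a)\le\lmax+1$, the monotonicity of labels on active paths, and the exact choice of $\taumax(h)$ and $\limit_h$ all need to match up; that is where I expect to reproduce the bulk of the analysis of \cite{MMSS25}.
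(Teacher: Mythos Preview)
Your outline correctly identifies the three-part structure (invariants, validity of $\calP$, quantitative progress), but the quantitative part misses the actual combinatorial core of \cite{MMSS25} and, as written, would not go through.

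The paper's analysis (mirrored in \cref{sec:app-actual-correctness} for the simulation) rests on two explicit ingredients. Ingredient (I1) is \cref{lem:active}: at the start of every \bundle, any short augmenting path that has not been removed must contain a \emph{critical} arc or a critical free vertex, i.e.\ it must intersect the active path of some active structure. Ingredient (I2) is \cref{lem:path-bound}: using (I1) together with the bound of at most $h|M|$ active structures at the end of a phase (\cref{lem:active-bound}) and the size bound $|\cS_\alp|\le\Delta_h$ (\cref{lem:struct-size}), one counts that $|\calP^*|\le h|M|(\lmax+1)+2\Delta_h|\calP|$, so when $|\calP^*|\ge 4h\lmax|M|$ the phase finds $|\calP|\ge (h\lmax/\Delta_h)|M|$ augmentations. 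Summing this multiplicative gain over the $\tfrac{144}{h\eps}$ phases per scale and over the geometric scales $h$ yields the $(1+\eps)$-approximation.

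Your proposed potential on ``remaining augmenting paths whose widths have not been destroyed by an overtake'' does not substitute for (I1). Overtakes and backtracks act on \emph{structures}, not on augmenting paths, and there is no direct mechanism by which a label-overtake ``destroys'' a fixed augmenting path in $\calP^*$; the bridge from structure progress to path progress is precisely the critical-arc characterization you are missing. Moreover, your conclusion that ``after the last scale no augmenting path of length $\le\lmax$ remains'' is too strong and not what the algorithm achieves: many short augmenting paths may survive; what (I2) guarantees is only that enough of them were \emph{found} in each phase. Finally, your potential claim ``each \bundle either augments or decreases the potential by a $1/\taumax(h)$ fraction'' would need a proof, and the actual argument in \cite{MMSS25} is not a per-\bundle potential drop on paths but a global bound on how many \bundle{}s an active structure can survive without augmenting, which is how \cref{lem:active-bound} is obtained.
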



\section{Boosting framework for graph oracle} \label{sec:framework-distr}

This section describes a boosting framework for computing a $(1+\eps)$-approximate matching.
The framework assumes oracle access to an algorithm for computing a constant approximate matching.

\begin{definition}[$\Amat$] \label{def:Amat}
Given a graph $H$, the algorithm $\Amat$ returns a $c$-approximate matching of $H$, where $c > 1$ is a constant.
\end{definition}

The boosting framework computes a $(1+\eps)$-approximate matching by simulating the semi-streaming algorithm.
In the simulation, it invokes $\Amat$ $O(c \log(1/\eps) / \eps^{7})$ times to simulate $\algExtend$ (and a few other procedures).
The framework also requires a few basic operations on graphs, e.g. exploring connected subgraphs of size $\peps$.
(These connected subgraphs correspond to the structures in the semi-streaming algorithm.)
We call these operations $\Aexp$.

In the following, we present a high-level description of this framework, omitting all model-specific details and focusing on how $\Amat$ is used for the simulation, as it is the main difference between our and \cite{FMU22}'s frameworks.
In \cref{sec:app-framework-distr}, we formally define $\Aexp$ and provide implementation details in \MPC and \CONGEST.
In \cref{sec:framework-dynamic,sec:dynamic}, we adapt this framework to solving the dynamic $(1+\eps)$-approximate matching problem.

The following theorem summarizes the framework.

\frameworkdistr*

\noindent \cref{thm:framework-1} improves on previous frameworks, developed in \cite{FMU22, MMSS25}, that require $\Omega(c \log^2 c /\eps^{52})$ and $\Omega(c \log^2 c /\eps^{39})$ invocation of $\Amat$, respectively.

\begin{remark}
    As \cite{FMU22}'s framework, our framework works even if $c$ is a non-decreasing function of $n$ and $m$. (E.g., the framework works even if $\Amat$ returns a $\log n$ approximation.) Furthermore, if the input graph $G$ has maximum degree $\leq D$ and arboricity $\leq L$, then $\Amat$ is always invoked on a graph with maximum degree $\leq \frac{2}{\eps^3} D$ and arboricity $\leq \frac{2}{\eps^3} L$.
\end{remark}

\subsection{Notations}
To distinguish between the vertices of $G$ and $G'$, we use $G$-vertex (resp. $G'$-vertex) when we refer to a vertex in $G$ (resp. a vertex in $G'$).
The terms \emph{$G$-arc} and \emph{$G'$-arc} are defined similarly.
The \emph{size} of a structure $\cS_\alp$, denoted by $|\cS_\alp|$, is the number of $G$-vertices in it.

For ease of presentation, we extend the notation for structures and labels as follows.
Recall that $\cS_\alp$ denotes the structure of a free vertex $\alp$.
For each $G$-vertex $v$ (resp. $G'$-vertex $v'$), we also let $\cS_v$ represent the structure containing $v$ (resp. $v'$).
Note that the structure containing a non-free vertex $v$ may change when $v$ is overtaken, while the structure containing a free vertex is fixed throughout a phase.
For each matched vertex $v' \in G'$, denote by $\ell(v')$ the label of the matched arc adjacent to $v'$;
For a free vertex $\alp' \in G'$, define its label $\ell(\alp')$ as $0$.
We define $\ell(v)$ for each $G$-vertex $v$ in a similar way.

\subsection{Overview of the framework}
The framework simulates the semi-streaming algorithm (\cref{alg:outline}).
Most steps of the algorithm can be simulated in a straightforward way.
The main challenge is to simulate the following three procedures: computing the initial matching, $\algCheck$, and $\algExtend$.

\paragraph{Computing the initial matching.}
We compute the initial matching as a 4-approximate matching, instead of a 2-approximation.
This does not affect the correctness -- the algorithm still outputs a $(1+\eps)$-approximation if we increase the number of phases by a constant factor.

The computation is done by iteratively calling $\Amat$ and removing all matched vertices until all removed vertices form a 4-approximation.
We show that $O(c)$ calls suffice.

\paragraph{Simulating $\algCheck$.}
Essentially, the simulation of $\algCheck$ and $\algExtend$ is to find $G'$-arcs on which the three basic operations ($\algContract$, $\algAugment$, and $\algOvertake$) can be performed.
The following notions characterize such arcs.

\begin{definition}[type 1 arc, type 2 arc, type 3 arc] \label{def:arc-type}
    Let $a' = (u', v')$ be an arc in $G'$.
    We say $a'$ is of \emph{type 1} if it connects two outer vertices of some structure $\cS$, and one of them is the working vertex of $\cS$;
    it is of type 2 if it connects outer vertices of two different structures;
    it is of type 3 if all of the following are satisfied:
    \begin{itemize}
        \item $u'$ is the working vertex of $\cS_{u'}$,
        \item $v'$ is an inner vertex,
        \item $\ell(v') > \ell(u') + 1$, and
        \item $\cS_{u'}$ is not on-hold,
    \end{itemize}
    
    We say an arc $(u, v) \in G$ is of type 1, 2, or 3 if its corresponding arc $(\Omg(u), \Omg(v)) \in G'$ is of type 1, 2, or 3, respectively.
\end{definition}

\noindent Note that types 1, 2, and 3 arcs are, respectively, the arcs on which $\algContract$, $\algAugment$, and $\algOvertake$ can be performed.

Recall that the procedure $\algCheck$ consists of two steps.
On a high level, the two steps are to exhaustively perform $\algContract$ (resp. \algAugment) on type 1 (resp. type 2) arcs in the graph.
Step 1 does not require invocations of $\Amat$; it can be done by examining the in-structure arcs (i.e. the arcs with both endpoints in the same structure).
Implementation of this step is simple but model-specific;
hence we only describe Step 2 in this section.

Step 2 simulates $\algAugment$, which removes structures that are connected by a type 2 arc.
Thus, if a structure is adjacent to two type 2 arcs, we can only perform $\algAugment$ on one of them.
Based on this property, performing $\algAugment$ can be phrased as a matching problem:
Suppose that the semi-streaming algorithm performs $\algAugment$ on a set of type 2 arcs $N'$.
Then, $N'$ must form a matching between structures.
(More precisely, each structure contains at most one endpoint of edges in $N'$.)
Hence, the simulation first obtains a graph $H'$ by contracting every structure and removing all non-type 2 arcs.
Then, it iteratively finds a matching in $H'$; if two structures are matched in this matching, $\algAugment$ is performed to remove both of them (and record the augmenting path in between).
By a single argument, it can be shown that each iteration reduces $\mu(H')$ by a constant factor; i.e. $\mu(H')$ drops exponentially throughout iterations.
However, since we only invoke $\Amat$ $\peps$ times, $\mu(H')$ never drops to $0$, meaning there must be some type 2 arcs where we fail to find.
We mark all remaining type 2 arcs as \emph{contaminated}, which represents that $\algAugment$ should have been performed on some of these arcs, but our simulation fails to find them.
We show that the set of contaminated arcs admits a small vertex cover, which in turn implies that they only intersect a small number of disjoint augmenting paths;
By running $O(c \ln(1 / \eps))$ iterations, it can be shown that the framework can still find a $(1+\eps)$-approximate matching even though it does not simulate the semi-streaming algorithm. 

\paragraph{Simulating $\algExtend$.}
$\algExtend$ performs \algContract, \algAugment, and \algOvertake on arcs of type 1, 2, and 3, respectively.
The first step of the simulation is to address type 3 arcs -- that is, the simulation procedure repeatedly finds type 3 arcs from the graph and performs \algOvertake on them, until type 3 arcs are almost exhausted.

This step is similar to the simulation of $\algCheck$, except for a few modifications to handle an additional technical difficulty.
Details are given as follows.
The simulation consists of several iterations.
In each iteration, we invoke $\Amat$ in a derived graph $H'$ to find a matching consisting of type 3 arcs.
Then, we modify the structures by performing $\algOvertake$ on each arc in the returned matching.

There is a key difference between $\algOvertake$ and $\algAugment$: The streaming algorithm does not remove the two structures involved in an $\algOvertake$ operation.
Therefore, when $\algOvertake$ is performed on an arc in one iteration, the two modified structures will still participate in the next iteration.
Due to this property, we cannot apply the same analysis used for $\algCheck$.
More precisely, we cannot use the same argument to show that $\mu(H')$ decreases by a constant factor in each iteration.
Therefore, it is now unclear whether $\ln(1/\eps)$ iterations are enough for the simulation.

To address this issue, we propose a slightly modified simulation and show that the modified version requires only $\ln(1/\eps) / \eps$ iterations.
The modifications are as follows.
We split the simulation into $\lmax$ stages, labeled $1, 2, \dots, \lmax$.
In a stage $s \in \{1, 2, \dots, \lmax\}$, the goal is to find type 3 arcs $(u', v')$ such that $\ell(u') = s$ (i.e. the overtaker has label $s$).
To this end, we construct the subgraph of $G'$ induced by the set of type 3 arcs $(u', v')$ with $\ell(u') = s$ in each stage $s$.
Denote this subgraph by $H'_s$.
We execute $\ln(1/\eps)$ iterations in each stage, where in each iteration we invoke $\Amat$ on $H'_s$ and perform \algOvertake on the arcs in the returned matching. 
It can be shown that $\mu(H'_s)$ decreases by a constant in each iteration, which in turn implies that $O(\ln(1/\eps))$ iterations are enough to decrease $\mu(H'_s)$ to a negligible number.
Since there are only $O(1/\eps)$ stages, in total $O(\ln(1/\eps) / \eps)$ invocations are required.

A technical detail is as follows.
As shown above, executing $O(\ln(1/\eps))$ iterations in a stage $s$ can only guarantee that $\mu(H'_s)$ decreases to a negligible number.
Therefore, at the end of stage $s$, $H'_s$ still contains a few type 3 arcs, on which we could have performed \algOvertake.
As before, we mark these arcs as contaminated.
It can be shown that, by running an appropriate number of iterations, all stages generate only a negligible number of contaminated arcs.

After completing all stages, there are no type 3 arcs (except the contaminated ones) in the graph.
We run our simulation of \algCheck to handle all arcs of type 1 or 2.
It can be shown that this final step does not create new type 3 arcs.
Consequently, after the simulation, the graph contains no arcs of type 1, 2, or 3 except the contaminated ones.

\subsection{Computation of the initial matching}
The first step of the algorithm is to compute a $4$-approximate matching $M$.

\begin{lemma} \label{lem:init}
A $4$-approximate matching can be computed with $2c$ calls to $\Amat$. \end{lemma}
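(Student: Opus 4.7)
The plan is to iterate calls to $\Amat$, each time adding the returned matching to an accumulator $M$ and then removing all matched vertices from the graph before the next call. Because every subsequent call operates on a graph from which previously matched vertices have been removed, $M$ will stay a valid matching throughout. Concretely, I would set $G_0 := G$, $M_0 := \emptyset$, and in iteration $i$ obtain $N_i := \Amat(G_{i-1})$ with $|N_i| \geq \mu(G_{i-1})/c$, then set $M_i := M_{i-1} \cup N_i$ and let $G_i$ be $G_{i-1}$ with the $2|N_i|$ endpoints of $N_i$ removed.

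For the analysis, I would write $\mu^* := \mu(G)$. The only fact needed is the standard observation that removing a single vertex decreases the maximum matching size by at most one, since any max matching uses each vertex in at most one edge. Iterating this gives $\mu(G_i) \geq \mu^* - 2|M_i|$ after $i$ iterations. Hence, whenever $|M_i| < \mu^*/4$, we have $\mu(G_i) > \mu^*/2$, and so the next call returns a matching of size $|N_{i+1}| \geq \mu(G_i)/c > \mu^*/(2c)$.

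To conclude, I would run the procedure for exactly $2c$ iterations and argue by contradiction that $|M_{2c}| \geq \mu^*/4$. Suppose $|M_{2c}| < \mu^*/4$; then $|M_i| < \mu^*/4$ for every $i \leq 2c-1$, so by the bound above each iteration contributes strictly more than $\mu^*/(2c)$ edges to $M$. Summing yields $|M_{2c}| > 2c \cdot \mu^*/(2c) = \mu^*$, contradicting the trivial bound $|M_{2c}| \leq \mu^*$. Therefore $|M_{2c}| \geq \mu^*/4$, i.e.\ $M$ is a $4$-approximate maximum matching, produced by $2c$ calls to $\Amat$. No substantive obstacle arises; the only point requiring care is that the removal step must act on vertices (not merely edges) so that the union of the per-iteration matchings remains a matching.
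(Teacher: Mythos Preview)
Your proof is correct, but the argument differs from the paper's. You track the accumulated matching size from below: as long as $|M_i| < \mu^*/4$, you have $\mu(G_i) > \mu^*/2$ via the vertex-deletion bound, so each call contributes more than $\mu^*/(2c)$ edges, and $2c$ calls force $|M_{2c}| \geq \mu^*/4$ by a counting contradiction. The paper instead tracks $\mu(G_i)$ from above: it observes that the matching found in iteration $i$ is vertex-disjoint from any matching in $G_{i+1}$, so $\tfrac{1}{c}\mu(G_i) + \mu(G_{i+1}) \leq \mu(G_i)$, giving geometric decay $\mu(G_{i+1}) \leq (1-\tfrac{1}{c})\mu(G_i)$; after $2c$ rounds $(1-\tfrac{1}{c})^{2c} \leq e^{-2} \leq \tfrac{1}{4}$, and then a final maximal-matching argument ($|M| + \mu(G_{2c}) \geq \tfrac{1}{2}\mu^*$) yields the bound. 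Your linear-accumulation argument is arguably more elementary and avoids the last maximality step; the paper's exponential-decay viewpoint, however, is the one reused later (\cref{lem:drop,lem:drop2}) where the target threshold is $\poly(\eps)\cdot\mu^*$ rather than a constant fraction, and there the geometric rate is what gives the $O(\log(1/\eps))$ iteration count.
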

\begin{proof}
The computation is done by finding matchings in $G$ iteratively.
We initialize $M$ as an empty matching.
In each iteration, we invoke $\Amat$ on the subgraph of $G$ induced by all unmatched vertices.
All matched edges returned by $\Amat$ are added to $M$.

Let $G_i$ denote the subgraph on which $\Amat$ is invoked in the $i$-th iteration.
Fix an iteration $i$.
We claim that $\mu(G_{i+1}) \leq (1 - \frac{1}{c}) \mu(G_i)$ for all $i$ except the last iteration.
To see this, consider two matching $M_i$ and $M^*_{i+1}$, where $M_i$ is the matching found by $\Amat$ in $i$-th iteraion and $M^*_{i+1}$ is the largest matching in $G_{i+1}$.
Since we remove all matched vertex in each iteration, $M_i$ and $M^*_{i+1}$ are disjoint.
This implies that $M_i \cup M^*_{i+1}$ is a matching in $G_i$ of size $|M_i| + |M^*_i| \geq \frac{1}{c}\mu(G_i) + \mu(G_{i+1})$.
It follows that $\frac{1}{c}\mu(G_i) + \mu(G_{i+1}) \leq \mu(G_i)$, or equivalently
\begin{equation}\label{eqn:drop}
    \mu(G_{i+1}) \leq (1 - \frac{1}{c}) \mu(G_i).
\end{equation}
\noindent Hence, after $2c$ iterations, the matching size in $G_\tau$ is at most

\begin{equation} \label{eqn:approx-mat}
    (1 - \frac{1}{c})^{2c} \mu(G) \leq e^{-\frac{1}{c} \cdot 2c} \mu(G) = e^{-2} \mu(G) \leq \frac{1}{4} \mu(G).
\end{equation}

Let $M_{2c}$ be the maximum matching of $G_{2c}$.
Note that $M \cup M_{2c}$ is an inclusion-wise maximal matching of $G$.
Hence, $|M| + |M_{2c}| \geq \frac{1}{2} \mu(G)$.
Combining the equation with \cref{eqn:approx-mat}, we have $|M| \geq \frac{1}{4} \mu(G)$.
This completes the proof.
\end{proof}

\subsection{Simulation of $\algCheck$} \label{sec:sim-check-1}
Recall that $\algCheck$ consists of two steps:

\begin{itemize}
    \item[] Step 1: Repeatedly invoke $\algContract$ on an arc connecting two outer vertices of the same structure, where one of the outer vertices is the working vertex.
    \item[] Step 2: For each arc $g$ connecting outer vertices of different structures, invoke $\algAugment$ with $g$.
\end{itemize}

\begin{algorithm}
\begin{algorithmic}[1]
\medskip 
\Statex \textbf{Input:} a graph $G$, the parameter $\eps$, the current matching $M$, the structure $\cS_\alp$ of each free vertex $\alp$, the set of paths $\calP$
\medskip
\Statex \hrule 
\State simulate Step 1 of $\algCheck$ \Comment{After this line, the graph contains no type 1 arc}
\State construct $H'$ as defined in \cref{def:Hp}
\For {$22 c \ln(1/\eps)$ iterations}
    \State find a matching $M'$ by invoking $\Amat$ on $H'$
    \For {each arc $(u', v')$ in $M'$}
        \State perform $\algAugment$ on $(u', v')$
    \EndFor
\EndFor
\For {each type 2 arc $e \in G$}
    \State mark $e$ as contaminated \Comment{This step is only for the analysis}
\EndFor
\end{algorithmic}
\caption{Simulating $\algCheck$ using $\Amat$.} \label{alg:sim-check-1}
\end{algorithm}

\noindent The purpose of $\algCheck$ is to perform $\algContract$ and $\algAugment$ until the graph contains no arcs of type 1 or 2.
A detailed description of our simulation is as follows.
(See \cref{alg:sim-check-1}.)
First, we simulate Step 1 of $\algCheck$.
The implementation of this step is model-specific and hence deferred to \cref{sec:framework-dynamic,sec:app-framework-distr}.
In the following, we assume that Step 1 is simulated exactly; that is, $G$ contains no type 1 arc.

To simulate Step 2, we construct the following graph $H'$.
(See \cref{fig:sim-augment} for an example.)

\begin{definition} \label{def:Hp}
Given $G$, $H'$ is constructed as the graph such that:
\begin{itemize}
    \item The vertex set of $H'$ is the set of structures in $G$.
    \item $H'$ contains an arc $(\cS_1, \cS_2)$ if and only if there is an arc in $G$ that connect outer vertices of the two structures.
\end{itemize}
Note that every arc in $H'$ is of type 2.
\end{definition}

\begin{figure}[h]
\centering
    \begin{subfigure}[h]{0.52\linewidth}
        \includegraphics[width=\textwidth]{Sketches/sim-augment1.png}
        \caption{The graph $G$.}
        \label{fig:sim-augment-a}
    \end{subfigure}
    \quad \quad \quad
     \begin{subfigure}[h]{0.26\linewidth}
        \includegraphics[width=\textwidth]{Sketches/sim-augment2.png}
        \caption{The corresponding graph $H'$.}
        \label{fig:sim-augment-b}
    \end{subfigure}
    \caption{An example of a graph $G$ and the corresponding graph $H'$.
    \cref{fig:sim-augment-a} shows $G$, which contains three structures $\cS_\alp, \cS_\beta, \cS_\gamma$ and some edges in between.
    \cref{fig:sim-augment-b} gives $H'$, where each structure is contracted into a vertex represented by a triangle.
    The edge set of $H'$ contains pairs of structures that are connected by a type 2 arc.
    Note that $H'$ does not contain the edge $(\cS_\alp, \cS_\gamma)$ because the edge $(u, v) \in G$ does not connect two outer vertices of $\cS_\alp, \cS_\beta$.}
    \label{fig:sim-augment}
\end{figure}

\noindent The simulation is iterative.
In each iteration, we invoke $\Amat$ on $H'$ to find a matching $M'$ consisting of type 2 arcs.
For each arc in $\Amat$, we perform $\algAugment$ on the two matched structures.
This removes all matched structures from $H'$ (because $\algAugment$ removes the two structures involved).
The above procedure is repeated for $22 c \ln(c/\eps)$ iterations.

At the end of the simulation, $H'$ may still contain some type 2 arcs that are never found by $\Amat$.
After the last iteration, we mark all type 2 arcs in $G$ as \emph{contaminated}, representing that these arcs are missed by our simulation.
Let $E_C$ denote the set of contaminated arcs marked in the simulation.
We remark that the contaminated arcs are identified solely for the analysis.
That is, our simulation works even if we do not mark the remaining type 2 arcs as contaminated.
(In particular, our implementation in the dynamic model does not mark contaminated arcs. See \cref{sec:framework-dynamic}.)

\paragraph{Correctness.}
Recall that the goal is to perform $\algContract$ and $\algAugment$ until the graph contains no arcs of type 1 or 2.
Our simulation is inexact only because it does not find all type 2 arcs.
In \cref{alg:sim-check-1}, these arcs are marked as contaminated.
Each contaminated arc represents a potential augmentation that is missed by our simulation.
To prove the correctness, we show that we can still find a $(1+\eps)$-approximate matching even if these augmentations are missed.
Details of this proof are given in \cref{sec:app-correctness}.
In the following, we present a key property used in the proof, showing that the contaminated arcs only intersect a negligible number of augmenting paths.
In other words, the number of augmenting paths missed by our simulation is negligible.

We first show that $\mu(H')$ is dropping exponentially in our simulation.

\begin{lemma}
\label{lem:drop}
After the last iteration of \cref{alg:sim-check-1}, it holds that $\mu(H') \leq \eps^{20} |M|$.
\end{lemma}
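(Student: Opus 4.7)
The plan is to iterate a per-round contraction estimate. For each iteration $i$ of the for-loop, let $H'_i$ denote $H'$ at the start of the iteration, and let $M'_i$ be the matching returned by $\Amat$, so $|M'_i| \geq \mu(H'_i)/c$. Every arc of $M'_i$ is a type~$2$ arc of $G$ by \cref{def:Hp}, so invoking $\algAugment$ on it removes the two involved structures from $G$, hence from $H'$. Because $M'_i$ is a matching in $H'_i$, the $\algAugment$ calls within the inner loop act on pairwise vertex-disjoint pairs of structures and do not interfere, so $V(H'_{i+1}) = V(H'_i) \setminus V(M'_i)$ and $H'_{i+1}$ is the induced subgraph.

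Mirroring the argument used in the proof of \cref{lem:init}, any maximum matching of $H'_{i+1}$ is vertex-disjoint from $M'_i$, and together with $M'_i$ forms a matching in $H'_i$; therefore
\[
    \mu(H'_{i+1}) \;\leq\; \mu(H'_i) - |M'_i| \;\leq\; \paren{1 - \tfrac{1}{c}} \mu(H'_i).
\]
Iterating over the $22 c \ln(1/\eps)$ rounds yields
\[
    \mu(H') \;\leq\; \paren{1 - \tfrac{1}{c}}^{22 c \ln(1/\eps)} \mu(H'_0) \;\leq\; e^{-22 \ln(1/\eps)} \mu(H'_0) \;=\; \eps^{22} \mu(H'_0).
\]

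To close the argument, I would bound $\mu(H'_0) \leq 3|M|$. Any matching $N$ in $H'_0$ corresponds, via $\algAugment$, to $|N|$ augmenting paths that are pairwise vertex-disjoint (because distinct structures are vertex-disjoint) and disjoint from the augmenting paths already recorded in $\calP$. Hence $|M| + |\calP| + |N| \leq \mu(G)$. Since $M$ is initialized as a $4$-approximate maximum matching by \cref{lem:init} and only grows across phases, $\mu(G) \leq 4|M|$ holds throughout the execution, so $\mu(H'_0) \leq \mu(G) - |M| \leq 3|M|$. Plugging this in gives $\mu(H') \leq 3 \eps^{22} |M| \leq \eps^{20} |M|$, where the last step uses $3\eps^2 \leq 1$, valid since $\eps \leq 1/4$.

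The main obstacle is establishing the per-iteration contraction cleanly. Two subtleties must be checked: that the multiple $\algAugment$ calls performed in a single iteration do not interfere with each other (which follows from $M'_i$ being a matching in $H'_i$ combined with the vertex-disjointness of structures), and that the residual $H'_{i+1}$ is precisely the induced subgraph on the surviving structures rather than a graph with extra new arcs introduced by the augmentations. Once these points are verified, the decay estimate is a direct analogue of the iteration used in \cref{lem:init}.
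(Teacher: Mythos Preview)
Your proof is correct and follows essentially the same approach as the paper: establish the per-iteration contraction $\mu(H'_{i+1}) \le (1-1/c)\,\mu(H'_i)$ via the same disjoint-union argument used in \cref{lem:init}, bound $\mu(H'_0) \le 3|M|$ from the $4$-approximation, and multiply out to get $3\eps^{22}|M| \le \eps^{20}|M|$. Your write-up is somewhat more explicit about non-interference of the $\algAugment$ calls and about $H'_{i+1}$ being an induced subgraph, but these are exactly the points the paper's proof implicitly relies on.
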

\begin{proof}
We first show that $\mu(H') \leq 3|M|$ before the start of the first iteration. 
Recall that each $\algAugment$ finds one $M$-augmenting path between two structures.
Hence, if $H'$ contains a matching of size $x$, then $\algAugment$ can be used to find $x$ disjoint augmenting paths in $G$.
Since we start from a 4-approximate matching, $G$ contains at most $3|M|$ augmenting paths at any point of the algorithm.
Hence, $\mu(H')$ is at most $3|M|$ before the first iteration.

In each iteration, we find a $c$-approximate matching in $H'$ and remove all matched structures.
Similar to the proof of \cref{eqn:drop} in \cref{lem:init}, we can show that $\mu(H')$ is decreased by a factor of $(1 - 1/c)$ after each iteration.
Therefore, after $22 c \ln(1 / \eps)$ iterations, $\mu(H')$ decreases by a factor of

\[(1 - 1/c)^{22 c \ln(1 / \eps)} \leq e^{-22 \ln (1 / \eps)} = \eps^{22} \leq \frac{\eps^{20}}{3}, \]

\noindent where the last inequality holds for $\eps \leq 0.5$.
This completes the proof.
\end{proof}

Recall that our simulation of $\algCheck$ is not exact because it does not perform $\algAugment$ on the set $E_C$ of contaminated arcs.
In the following, we argue that $E_C$ only intersects a small number of augmenting paths, implying that our simulation only misses a negligible number of augmentations.

\begin{lemma} \label{lem:vc} Let $E_C$ be the set of contaminated arcs marked in \cref{alg:sim-check-1}. Then, $E_C$ only intersects $2 \eps^{17} |M|$ vertex-disjoint augmenting paths in $G$.
\end{lemma}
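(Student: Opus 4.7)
I read ``$E_C$ intersects a path $P$'' as $E(P) \cap E_C \neq \emptyset$, so the task is to upper-bound the number $k$ of vertex-disjoint augmenting paths in $G$ each of which uses at least one arc of $E_C$. The approach is to (i) reduce $k$ to a matching bound in $G$, (ii) bound that via a vertex cover in the contracted graph $H'$, and (iii) lift the cover back to $G$ using the structure size bound.

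For the reduction, I would, for each of the $k$ vertex-disjoint augmenting paths $P_i$, pick a representative arc $e_i \in E(P_i) \cap E_C$. Because the $P_i$'s are vertex-disjoint in $G$, the arcs $e_1, \ldots, e_k$ are pairwise vertex-disjoint and thus form a $G$-matching contained in $E_C$. Hence $k$ is at most the maximum size of a $G$-matching using only arcs from $E_C$.

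For the vertex-cover step, by \cref{lem:drop} we have $\mu(H'[E_C]) \leq \mu(H') \leq \eps^{20}|M|$ after the last iteration of \cref{alg:sim-check-1}. The standard bound (minimum vertex cover $\leq 2\cdot$ maximum matching in any graph) yields a vertex cover $C \subseteq V(H')$ of $H'[E_C]$ with $|C| \leq 2\eps^{20}|M|$. I would then lift $C$ to $G$ by setting $V_C := \bigcup_{\mathcal{S} \in C} \mathcal{S} \subseteq V(G)$: for any arc $(u,v) \in E_C$, at least one of the structures $\mathcal{S}_u, \mathcal{S}_v$ lies in $C$, so at least one of $u,v$ lies in $V_C$. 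Thus $V_C$ is a vertex cover of $E_C$ when $E_C$ is viewed as a set of $G$-edges. Since the representative arcs $e_1,\dots,e_k$ are vertex-disjoint and each contributes an endpoint to $V_C$, these endpoints are distinct, giving $k \leq |V_C|$. Applying \cref{lem:struct-size} to bound every structure size by $\Delta = O(1/\eps^3)$ yields $k \leq |C|\cdot\Delta \leq 2\eps^{20}|M|\cdot\Delta = O(\eps^{17}|M|)$.

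The main obstacle I anticipate is squeezing the constant down to the exact factor $2$ in the statement. The sketch above delivers only $O(\eps^{17}|M|)$, and matching the precise $2\eps^{17}|M|$ requires tracking how the iteration count $22c\ln(1/\eps)$ of \cref{alg:sim-check-1} and the structure-size constant of \cref{lem:struct-size} combine, possibly sharpening the vertex-cover step via K\"onig's theorem when $H'[E_C]$ is bipartite (which would remove the factor $2$ in the vertex-cover-to-matching bound). This is pure bookkeeping and does not change the structure of the argument.
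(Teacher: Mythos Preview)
Your proposal is correct and is essentially the paper's own proof: bound $\mu(H')$ via \cref{lem:drop}, take the endpoints of a maximum matching in $H'$ as a vertex cover of size $\leq 2\eps^{20}|M|$, lift it to $G$ by uncontracting structures (each of size $\leq \Delta$), and conclude that any $G$-matching inside $E_C$---hence any family of vertex-disjoint augmenting paths meeting $E_C$---has size at most $2\eps^{20}|M|\cdot\Delta \leq 2\eps^{17}|M|$. Your worry about the constant is unfounded: the paper gets the exact factor $2$ precisely from the ``endpoints of a maximum matching'' cover (no K\"onig needed), together with the upper bound $\Delta \leq \eps^{-3}$ on structure size.
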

\begin{proof}
Let $M_{H'}$ be the maximum matching of $H'$ at the end of the last iteration.
By \cref{lem:drop}, $|M_{H'}| \leq \eps^{20} |M|$.
Since $M_{H'}$ is maximal, $V(M_{H'})$ form a vertex cover of $H'$, whose size is $2|M_{H'}| \leq 2 \eps^{20} |M|$.
Denote this vertex cover by $S'$.

Recall that each vertex in $H'$ represents a structure in $G$.
Let $S$ be the set of $G$-vertices that are contained in a structure in $S'$.
It is not hard to see that $S$ covers all edges in $E_C$.
In addition, $S$ contains at most $|S'| \cdot \Delta \leq 2 \eps^{17} |M|$ vertices.
Therefore, the size of a maximum matching in $E_C$ is at most $2 \eps^{17} |M|$.
Consequently, $E_C$ can intersect at most $2 \eps^{17} |M|$ vertex-disjoint augmenting path.
\end{proof}

Recall that each phase has $\taumax = 1/\eps^3$ $\bundle$s.
By \cref{lem:vc}, the total number of augmenting paths missed in a phase is at most $\taumax \cdot \eps^{20} |M| = \eps^{17} |M|$, which is negligible compared to the size of $M$.
\cref{sec:app-correctness} gives a formal correctness proof that uses this property.



\subsection{Simulation of $\algExtend$} \label{sec:sim-extend-1}

\paragraph{Simulation procedure.} Recall that $\algExtend$ makes a pass over the arcs, and performs $\algContract$, $\algAugment$, or $\algOvertake$ on it whenever applicable. A detailed description of our simulation is given below (see \cref{alg:sim-extend-1} for pseudo code).
We first focus on the simulation of $\algOvertake$.
The simulation consists of $\lmax$ stages, labeled $1, 2, \dots, \lmax$.
In each stage $s$, the goal is to simulate $\algOvertake$ on arcs $(u', v') \in G'$ with $\ell(u') = s$.
These arcs are called \emph{$s$-feasible arcs}, formally defined as follows.

\begin{definition}[$s$-feasible arc] \label{def:s-feasible}
For integer $s \in [0, \lmax]$, define an $s$-feasible arc as a type 3 arc $(u', v')$ with $\ell(u') = s$.
\end{definition}

In each stage, we work on a bipartite subgraph of $H'_s$ consisting of $s$-feasible arcs.

\begin{definition}[Bipartite graph $H'_s$] \label{def:Hps}
$H'_s$ is the bipartite subgraph of $G'$ constructed as follows.
The left part consists of all outer vertices $u'$ satisfying the following:
\begin{itemize}
    \item $u'$ is a working vertex of some structure that is not marked as on-hold or extended.
    \item $\ell(u') = s$.
\end{itemize}
The right part contains all inner vertices $v'$ with $\ell(v') > s + 1$.
The arc set is the set of all arcs in $G'$ from the left part of $H'_s$ to its right part.
Note that the arc set of $H'_s$ is exactly the set of all $s$-feasible arcs in $G'$.
\end{definition}

\begin{algorithm}
\begin{algorithmic}[1]
\medskip 
\Statex \textbf{Input:} a graph $G$, the parameter $\eps$, the current matching $M$, the structure $\cS_\alp$ of each free vertex $\alp$, the set of paths $\calP$
\medskip 
\Statex \hrule 

\For{stages $s = 1, 2, \dots, \lmax$} \label{line:stage-loop}
    construct $H'_s$ as defined in \cref{def:Hps} \label{line:construct-Hps}
    \For {$22 c \ln(1/\eps)$ iterations} \label{line:sim-ext-iteration}
        \State find a matching $M'$ by invoking $\Amat$ on $H'_s$ \label{line:sim-ext-Amat}
        \For {each arc $(u', v')$ in $M'$}
            \State perform $\algOvertake$ on $(u', v')$  \label{line:sim-ext-overtake}
        \EndFor
        \State reconstruct $H'_s$ as defined in \cref{def:Hps} \label{line:reconstruct-Hps}
    \EndFor
    \For {each $(u, v) \in G$ such that $(\Omg(u), \Omg(v))$ is in $H'_s$} \label{line:sim-ext-contaminated}
        \State mark $(u, v)$ as contaminated \label{line:Hps-contaminated} \Comment{This step is only for the analysis}
    \EndFor
\EndFor
\State execute \cref{alg:sim-check-1} to simulate \algCheck \label{line:sim-ext-contract}
\end{algorithmic}
\caption{Simulating $\algExtend$ using $\Amat$.} \label{alg:sim-extend-1}
\end{algorithm}

A stage $s$ consists of $O(c\ln(1/\eps))$ iterations (see \cref*{line:sim-ext-iteration}).
In each iteration, $\Amat$ is invoked on $H'_s$ to find a matching $M'$.
Then, we perform $\algOvertake$ using each matched edge returned by $\Amat$.
These overtaking operations may modify the graph.
We update $H'_s$ to reflect the change and proceed to the next iteration.
The above iterative procedure may not find all $s$-feasible arcs in $H'_s$.
After the last iteration, we mark all remaining $s$-feasible arcs as contaminated.
This completes the description of a stage.

After all stages are executed, \cref*{line:sim-ext-contract} invokes \cref{alg:sim-check-1} to simulate $\algCheck$.
This step is to simulate the $\algAugment$ and $\algContract$ operations performed by $\algExtend$.

\begin{remark}
Recall that the semi-streaming algorithm will call \algCheck immediately after \algExtend (see \cref{alg:phase}).
Therefore, one can also skip the execution of \cref*{line:sim-ext-contract} in \cref{alg:sim-extend-1}, it only causes \algCheck to be executed two times.
We keep \cref*{line:sim-ext-contract} of \cref{alg:sim-extend-1} to make our simulation more similar to the original \algExtend.
\end{remark}

\paragraph{Analysis.}
Since there are $\lmax = O(\eps^{-1})$ stages and each stage calls $\Amat$ $O(c \ln(1/\eps))$ times, the total number of calls is $O(c\eps^{-1} \ln(\eps^{-1}))$.

We proceed to analyze the correctness.
The analysis consists of three parts.
First, we prove that the overtaking operations in \cref{line:sim-ext-overtake} are well-defined; i.e., each operation is performed on a type 3 arc.
More specifically, let $M' = \{e_1, e_2, \dots, e_t\}$ be the matching found in \cref{line:sim-ext-Amat};
we show that for $i = 1, 2, \dots, t$, $e_i$ is still an $s$-feasible arc (and therefore a type 3 arc) after we perform \algOvertake on $e_1, e_2, \dots, e_{i-1}$.
Second, we show the following.

\begin{restatable}{lemma}{extcont} \label{lem:extend-contaminated} After running \cref{alg:sim-extend-1}, there are no arcs of type 1, 2, or 3 in $G$ except the contaminated ones.
\end{restatable}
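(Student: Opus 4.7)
The plan is to show that every type 1, 2, or 3 arc remaining in $G$ at the end of \cref{alg:sim-extend-1} belongs to the contaminated set, and the argument splits cleanly by arc type.

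For types 1 and 2, I would appeal to the simulation of \algCheck invoked at \cref*{line:sim-ext-contract} via \cref{alg:sim-check-1}. Its Step 1 exhaustively contracts every type 1 arc (so none remain), and its Step 2 iteratively performs \algAugment on type 2 arcs, explicitly marking the survivors as contaminated. Because \algAugment only removes structures and \algContract only affects labels inside a blossom, neither operation can spawn a fresh uncontaminated type 1 or type 2 arc after the other has acted, so the claim for these two types follows immediately.

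For type 3, I would proceed by contradiction: suppose $(u', v')$ is an uncontaminated type 3 arc surviving to the end, and let $s = \ell(u') \in \{1, \dots, \lmax\}$. The goal is to backtrack to the moment at which \cref*{line:Hps-contaminated} executes for stage $s$ and show that $(u', v')$ was $s$-feasible at that moment, so it must have been contaminated --- a contradiction. The argument rests on two monotonicity properties preserved across stages. First, labels of matched arcs are non-increasing during the stage loop, because \algOvertake only reduces labels and the label-resetting \algContract is invoked only by the subsequent \algCheck, not between stages. Second, any \algOvertake in a stage $s' > s$ sets the overtaken label to $s' + 1 > s + 1$, so no matched arc can acquire label $s$ after stage $s$ terminates. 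Combined, these transport the label conditions $\ell(u') = s$ and $\ell(v') > s + 1$ backward to the end of stage $s$.

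The main obstacle is verifying the remaining structural conditions for $s$-feasibility at the end of stage $s$: that $u'$ was the working vertex of a non-on-hold structure and that $v'$ was inner at that moment. The on-hold status is fixed at the start of the bundle by \cref{alg:phase} and does not change during the bundle, so that part is automatic. The working-vertex and inner statuses, however, can in principle be altered by later \algOvertake invocations, so the plan is to case-split on Cases 1, 2.1, and 2.2 of \cref{sec:overtake} and show no such later change can flip these statuses in a way consistent with $(u', v')$ ending up as an uncontaminated type 3 arc with $\ell(u') = s$. The most delicate sub-case is Case 2.2, where the working vertex of $\cS_\beta$ is re-assigned to $\Omg(p)$; here the plan is to trace the label of $\Omg(p)$ back to when it was originally set, since if $\ell(\Omg(p)) = s$ then any arc from $\Omg(p)$ to a sufficiently high-label inner vertex was already present in an earlier stage's $H'_s$ and was therefore marked contaminated at that time, closing the loop.
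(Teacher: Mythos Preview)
Your handling of types 1 and 2, and the overall backtracking strategy for type 3—return to the end of stage $s=\ell(u')$ and show the arc was already in $H'_s$ there, hence contaminated—match the paper exactly. Your label-monotonicity argument for the $\ell(v')>s+1$ condition is also what the paper uses.

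Where you diverge is in establishing the working-vertex condition for $u'$. The paper does not case-split on the \algOvertake mechanics; instead it argues that since $(u,v)$ is type 3 at the end, $\cS_u$ is not marked extended, and hence $\cS_u$ never performed \algOvertake, \algContract, or \algAugment, and $u$ itself was never moved into $\cS_u$ by an overtake (otherwise $\cS_u$, as the overtaker, would be marked extended). From this the paper concludes directly that $u'=\Omega(u)$, its working-vertex status, and $\ell(u')=s$ were already in place at the end of stage $s$—no case analysis on Cases 1, 2.1, 2.2 is needed.

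Your Case 2.2 plan, as written, has a gap. If $\cS_u$ is the \emph{overtaken} structure in some stage $s''>s$ and its working vertex is reset to $\Omega(p)=u'$, then at the end of stage $s$ the working vertex of $\cS_u$ was still the old $w'_\beta$ sitting inside the subtree that later gets taken—not $\Omega(p)$. So even if $\ell(\Omega(p))=s$ at that moment, the arc $(u',v')$ was not in $H'_s$ at the end of stage $s$ (membership in the left part of $H'_s$ requires being the \emph{working} vertex), and your ``was therefore marked contaminated at that time'' conclusion does not follow. The paper's not-extended route sidesteps this by never having to trace how being overtaken shifts the working vertex.
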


\noindent This highlights that our simulation is essentially performing the three basic operations in a different order, and it is inexact only because of the contaminated arcs.
Third, we show that the contaminated arcs are negligible, in the sense that they only intersect a small number of augmenting paths in $G$.
This property is used in the full correctness proof in \cref{sec:app-correctness}.

The proof of Parts 1 and 2 is straightforward but technical, and therefore they are deferred to \cref{sec:app-correctness}.

\paragraph{Part 3 of the analysis.}
Consider a fixed iteration.
An important property is that all vertices matched in the iteration will be removed in \cref*{line:reconstruct-Hps} of \cref{alg:sim-extend-1}.
To see this, consider a matched arc $(u', v')$ returned by $\Amat$.
Since the $\cS_u'$ overtakes $\cS_v'$, it is marked as extended;
in addition, the label of $v'$ is updated to be $s + 1$.
By \cref{def:Hps}, $u'$ and $v'$ no longer qualify as vertices of $H'_s$.
That is, when \cref*{line:reconstruct-Hps} reconstructs $H'$, both $u'$ and $v'$ are removed.

An additional observation is that \cref*{line:reconstruct-Hps} never adds new vertices into $H'_s$.
Consequently, throughout all iterations, $H'_s$ is a decremental graph.
We obtain the following using the above observations.

\begin{lemma}
\label{lem:drop2}
After the last iteration of \cref{alg:sim-extend-1}, it holds that $\mu(H') \leq \eps^{22} |M|$.
\end{lemma}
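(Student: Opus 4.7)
The plan is to mirror the proof of \cref{lem:drop}, leveraging the two observations established in the paragraph immediately preceding the lemma: (i) every vertex matched by $\Amat$ in an iteration is removed from $H'_s$ when \cref*{line:reconstruct-Hps} rebuilds it, and (ii) the reconstruction never introduces new vertices into $H'_s$. Together these say that, within a single stage $s$, $H'_s$ behaves as a decremental graph, so a standard ``match-then-delete'' argument yields geometric decrease of $\mu(H'_s)$.

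First I would establish an initial bound $\mu(H'_s) \leq |M|$ at the start of stage $s$. Because $H'_s$ is bipartite and its right side consists of inner $G'$-vertices, and because each inner vertex is a trivial blossom sitting at the head of a unique matched $G'$-arc---which in turn corresponds to a distinct edge of $M$---the right side contains at most $|M|$ vertices, bounding any matching of $H'_s$ by $|M|$.

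Next I would argue the per-iteration decrease. Fix an iteration and let $M'$ be the matching returned by $\Amat$, so $|M'| \geq \mu(H'_s)/c$. After performing $\algOvertake$ on each arc of $M'$, observation (i) removes both endpoints of every such arc from $H'_s$, while observation (ii) forbids new additions. If $M^*$ is a maximum matching of the reconstructed $H'_s$, then $M'$ and $M^*$ are vertex-disjoint and $M' \cup M^*$ is a matching of the old $H'_s$; hence $\mu(H'_s^{\mathrm{new}}) \leq \mu(H'_s^{\mathrm{old}}) - |M'| \leq (1 - 1/c)\,\mu(H'_s^{\mathrm{old}})$, exactly as in \cref{eqn:drop}. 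Iterating this $22 c \ln(1/\eps)$ times and applying the initial bound gives $\mu(H'_s) \leq (1-1/c)^{22 c \ln(1/\eps)} \cdot |M| \leq e^{-22 \ln(1/\eps)}\, |M| = \eps^{22} |M|$, as claimed.

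The argument is a direct port of \cref{lem:drop}, so I do not expect a substantive obstacle. The only point requiring care is confirming that $H'_s$ is genuinely decremental within a stage---this is exactly what observations (i) and (ii) provide, and it is what separates this inner analysis from the outer simulation of $\algExtend$ across stages, where new $s'$-feasible arcs can still appear for $s' > s$ after an overtaking operation in an earlier stage.
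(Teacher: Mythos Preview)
Your proposal is correct and follows essentially the same route as the paper: establish an initial bound $\mu(H'_s)\le |M|$, then apply the geometric-decrease argument from \cref{lem:drop} using the two decrementality observations. The only difference is cosmetic: you bound the right part of $H'_s$ (inner vertices, each incident to a distinct matched edge), whereas the paper bounds the left part (working outer vertices with label $s\ge 1$, each incident to a distinct matched edge); either side yields $|M|$. A minor wording slip: an inner vertex is the \emph{tail}, not the head, of the matched $G'$-arc to its child, but this does not affect the count.
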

\begin{proof}
In each iteration, \cref*{line:reconstruct-Hps} removes all vertices incident to the $c$-approximate matching $M'$, and no new vertices are added to $H'_s$.
This iterative process is similar to our simulation of \algCheck (\cref{alg:sim-check-1}), where we remove a $c$-approximate matching in each iteration.
By repeating the analysis of \cref{lem:drop} with $H'$ replaced by $H'_s$, we show that $\mu(H'_s)$ is decreased by a factor of $\eps^{22}$ after $22 c \ln n$ iteration.

Recall that the left part $L$ of $H'_s$ consists only of outer vertices.
Since each outer vertex is matched in $M / \Omega$, $L \leq |M / \Omg| \leq |M|$.
Hence, $\mu(H'_s) \leq |M|$ initially, and after all iterations it is reduced to at most $\eps^{22} |M|$.
This completes the proof.
\end{proof}

\begin{lemma} \label{lem:vc2} Let $E_C$ be the set of contaminated arcs marked in \cref{alg:sim-extend-1}. Then, $E_C$ only intersects $2 \eps^{17} |M|$ vertex-disjoint augmenting paths in $G$.
\end{lemma}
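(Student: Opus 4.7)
The plan is to mirror the argument of Lemma~\ref{lem:vc}, but accounting for the multi-stage structure of Algorithm~\ref{alg:sim-extend-1}. Write $E_C = \bigcup_{s=1}^{\lmax} E_C^{(s)}$, where $E_C^{(s)}$ is the set of $G$-arcs marked contaminated in stage $s$. By inspection of \cref*{line:Hps-contaminated} of \cref{alg:sim-extend-1}, the $G'$-image of $E_C^{(s)}$ is exactly the arc set of $H'_s$ at the end of stage $s$, so I only need to cover the arcs of these $\lmax$ graphs.

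First I would fix a stage $s$ and let $M^\star_s$ be a maximum matching in the final $H'_s$. By \cref{lem:drop2}, $|M^\star_s| \leq \eps^{22}|M|$, and maximality implies that $V(M^\star_s)$ is a vertex cover of $H'_s$ of size at most $2\eps^{22}|M|$. This is a cover by $G'$-vertices; lifting each $G'$-vertex to the $G$-vertices it represents (a blossom, contained in some structure) blows up the size by at most $\Delta = O(\eps^{-3})$, invoking \cref{lem:struct-size}. Hence there is a set of $G$-vertices of size $O(\eps^{19}|M|)$ covering every arc in $E_C^{(s)}$.

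Next I would take the union over all $\lmax = 3/\eps$ stages to obtain a set $S \subseteq V(G)$ covering $E_C$, with
\[
|S| \;\leq\; \lmax \cdot 2\eps^{22}|M| \cdot \Delta \;=\; O(\eps^{18}|M|) \;\leq\; 2\eps^{17}|M|,
\]
for sufficiently small $\eps$. Since $S$ is a $G$-vertex cover of $E_C$, any collection of vertex-disjoint augmenting paths that each intersect $E_C$ must use disjoint vertices of $S$; therefore the number of such paths is bounded by $|S| \leq 2\eps^{17}|M|$, as claimed.

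The only delicate point — and the main thing to verify carefully — is the accounting that $E_C^{(s)}$ is captured by the \emph{final} $H'_s$ of stage $s$: arcs that were $s$-feasible at some intermediate moment but ceased to be $s$-feasible (e.g.\ because their right endpoint was overtaken and its label rewritten to $s{+}1$, or because the owning structure got marked extended) are never contaminated, consistent with the observation in the paragraph preceding \cref{lem:drop2} that $H'_s$ is decremental. Once this is observed, the maximal-matching-to-vertex-cover step applies cleanly and the rest is bookkeeping.
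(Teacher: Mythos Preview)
Your proof is correct and follows essentially the same approach as the paper: bound each stage's contaminated arcs via a vertex cover obtained from a maximum matching of the final $H'_s$ (using \cref{lem:drop2}), lift to $G$-vertices with a factor of $\Delta$, and then accumulate over the $\lmax$ stages. The paper's write-up phrases the accumulation as summing the per-stage path-intersection bounds rather than explicitly taking a union of vertex covers, but the two are equivalent here.
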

\begin{proof}
Let $E_s$ be the set of contaminated arcs marked in \cref{alg:sim-extend-1} in stage $s$.
By using \cref{lem:drop2} and repeating the argument in \cref{lem:vc}, we know that $E_s$ intersects at most $2 \eps^{19} |M|$ vertex-disjoint augmenting paths in $G$.
Since there are $\lmax$ stages, $|E_C| = \sum_s |E_s| \leq \lmax \cdot 2\eps^{19}|M| \leq 2 \eps^{17} |M|$.
This completes the proof.
\end{proof}

\section{Dynamic matching: Boosting framework for induced subgraph oracle} \label{sec:framework-dynamic}

In previous dynamic algorithms, a key subroutine is a sublinear-time algorithm $\Aweak$ for finding a constant-approximate maximum matching in a given vertex-induced subgraph, on the condition that the induced subgraph contains a large matching.

\begin{definition}[Definition of $\Aweak$] \label{def:Aweak}
Given a graph $G = (V, E)$, a subset of nodes $S \subseteq V$, and a parameter $\delta \in (0, 1)$, the algorithm $\Aweak$ returns either $\bot$ or a matching $M$ in $G[S]$ such that $|M| \geq \lambda \cdot \delta n$, where $\lambda \in (0, 1)$ is a constant.
In addition, if $\mu(G[S]) \geq \delta n$, then $\Aweak$ does not return $\bot$.
\end{definition}

In the following, we present an algorithm that finds a $(1 + \eps)$-approximate maximum matching using $\peps \cdot \frac{1}{\lambda}$ invocations of $\Aweak$ and $\tO(n \cdot \peps \cdot \frac{1}{\lambda})$ time, assuming that $\mu(G) = \Omega(\eps n)$.
The algorithm is summarized in the following theorem.

\begin{theorem} \label{thm:framework-2}
Let $\Aweak$ be an algorithm satisfying \cref{def:Aweak}.
There is an algorithm that given a parameter $\eps \in (0, \frac{1}{2})$ and a graph $G$ with $\mu(G) = \Omg(\eps n)$, computes a $(1+\eps)$-approximate maximum matching of $G$ by making $\peps \cdot \frac{1}{\lambda}$ calls to $\Aweak$ on adaptively chosen subsets of vertices with $\delta$ set to be $\poly(\eps)$. The algorithm spends an additional processing time of $O(\peps \cdot n)$ per call.
\end{theorem}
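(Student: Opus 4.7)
The plan is to adapt the framework of \cref{sec:framework-distr} to the weaker oracle $\Aweak$ by routing every oracle call through induced subgraphs of a bipartite double cover of $G$. Maintain $B = (L, R, E')$ where $L = \{v_L : v \in V\}$ and $R = \{v_R : v \in V\}$ are two copies of $V(G)$ and $\{u_L, v_R\} \in E'$ iff $\{u, v\} \in E(G)$. For any $L' \subseteq L$ and $R' \subseteq R$, the induced subgraph $B[L' \cup R']$ contains exactly the edges of $G$ oriented from $L'$-vertices to $R'$-vertices, so choosing which vertices go to which side lets us focus $\Aweak$ on a particular directional class of arcs -- in particular the outer-to-inner arcs needed for \algOvertake and the outer-to-outer arcs between distinct structures needed for \algAugment. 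Each call to $\Aweak$ will be with $\delta = \Theta(\poly(\eps))$ so that, by \cref{def:Aweak}, a non-$\bot$ return is a matching of size $\geq \lambda \delta n$, while a $\bot$ return certifies that the hidden matching in $B[L' \cup R']$ is smaller than $\delta n$.

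To simulate one iteration of \cref{alg:sim-extend-1} in stage $s$, place in $L'$ the working vertex of every structure whose working-vertex label equals $s$ and which is neither on hold nor extended, and in $R'$ one independently sampled inner vertex with label $> s + 1$ from each structure. Every edge returned by $\Aweak$ on $S = L' \cup R'$ is then an $s$-feasible arc and can be used to drive \algOvertake; edges that turn out to violate the ancestor condition in $(P2)$ are simply discarded. The simulation of \cref{alg:sim-check-1} is analogous: randomly 2-color the structures, insert one outer representative per color-$A$ structure into $L'$ and one outer representative per color-$B$ structure into $R'$, so that a constant fraction of the available type~2 arcs between different structures become edges of $B[L' \cup R']$ and can be processed by \algAugment. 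Because each structure has at most $\Delta_h = \poly(1/\eps)$ vertices by \cref{lem:struct-size}, a single uniform representative per structure hits any fixed target vertex with probability $1/\poly(\eps)$; repeating the sampling $\poly(1/\eps)$ times independently and calling $\Aweak$ on each trial ensures that every structure with an actionable arc is covered by some trial with high probability.

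Each successful call removes at least $\lambda \delta n$ vertices from the active pool, so $O(1/(\lambda \delta)) = \peps/\lambda$ calls suffice per stage either to exhaust the target arcs or to push the residual matching below $\delta n$; multiplying by $\lmax = O(1/\eps)$ stages of \algExtend, a constant number of \algCheck invocations per pass-bundle, the $\peps$ pass-bundles per phase, and the $\peps$ phases of \cref{alg:outline} gives the advertised $\peps/\lambda$ total calls, with $O(n \cdot \peps)$ auxiliary work per call for maintaining $B$, performing the sampling, assembling $L'$ and $R'$, and mapping the returned matching back to \algOvertake, \algAugment, or \algContract. The main obstacle is showing that the arcs missed by the sampling -- the analogues of the ``contaminated'' sets of \cref{lem:vc,lem:vc2} -- admit a small vertex cover, so that they intersect only $\poly(\eps) \cdot |M|$ vertex-disjoint augmenting paths per phase. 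This requires combining the high-probability per-structure coverage above with the $\lambda \delta n$ progress guarantee per successful call, so that after the allotted $\peps/\lambda$ calls the residual matching in $B[L' \cup R']$ is bounded as in \cref{lem:drop,lem:drop2}. Once this bound is in place, the correctness analysis in \cref{sec:app-correctness} carries over essentially verbatim, yielding a $(1+\eps)$-approximate maximum matching of $G$.
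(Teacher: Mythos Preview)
Your proposal has the right central idea --- the bipartite double cover $B$ is exactly how the paper converts the graph-oracle framework of \cref{sec:framework-distr} into calls to $\Aweak$ on induced subgraphs --- but the execution diverges from the paper in several places, and one of your simplifications leaves a genuine gap.

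\textbf{Where you and the paper differ.}
For \algCheck, the paper does not use $B$ at all and does not 2-color the structures: it simply samples one outer $G$-vertex per structure and invokes $\Aweak$ on $G[S]$ itself, since every edge of $G[S]$ already connects outer vertices of distinct structures. Your 2-coloring works but is unnecessary. For \algExtend, the paper does not deterministically place ``the working vertex'' on the left; it samples \emph{one} $G$-vertex uniformly from each structure and then routes it to $V^+$ or $V^-$ depending on whether it lands in the working blossom or on a high-label inner node. This guarantees that each structure contributes at most one vertex to $S$, so the returned matching $N_B$ projects to a matching in $H'_s$. Your scheme can place two representatives of the same structure in $S$, which is not fatal but complicates the accounting. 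Finally, the paper handles in-structure $s$-feasible arcs by a separate linear scan before each iteration (\cref{inv:no-in-struct}), rather than by discarding arcs that violate (P2) after the fact; the latter risks throwing away a constant fraction of the matching and undercutting the $\lambda\delta n$ progress guarantee.

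\textbf{The gap.}
Your line ``repeating the sampling $\poly(1/\eps)$ times \ldots\ ensures that every structure with an actionable arc is covered by some trial with high probability'' does not do what you need. A fixed structure's target arc survives sampling with probability only $1/\poly(1/\eps)$, so $\poly(1/\eps)$ independent trials yield constant coverage probability per structure, and a union bound over $\Theta(n)$ structures fails. What the argument actually needs (and what you flag as ``the main obstacle'') is a concentration statement: if $\mu(H'_s) \ge \eps^{100} n$, then a \emph{single} random sample $S$ satisfies $\mu(B[S]) \ge \delta n$ with probability $1 - n^{-\Omega(1)}$. The paper obtains this by fixing a maximum matching $N^*$ of $H'_s$, observing that each edge is preserved with probability $\ge 1/\Delta^2$, and then applying a bounded-dependence Chernoff bound (\cref{lem:limited-dependence}) since distinct edges of $N^*$ can share a structure and hence be correlated. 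This is precisely the missing step; without it, a $\bot$ return from $\Aweak$ gives you no information about $\mu(H'_s)$, and the residual-matching bound analogous to \cref{lem:drop2} does not follow. Once you supply this concentration step, the rest of your outline (and the transfer of the \cref{sec:app-correctness} analysis) goes through.
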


The algorithm is used as a subroutine in our dynamic algorithms.
(See \cref{sec:dynamic}.)

\begin{remark}
In this section, we focus on obtaining an algorithm with $\peps$ dependency in its time complexity.
However, we do not attempt to optimize the degree of the polynomial.
The resulting algorithm has an $O(\eps^{-108})$ dependence on $\eps$ and only works for $n \geq \eps^{-300}$, but the exponents can be greatly reduced by a more careful analysis.
\end{remark}

\subsection{Notations} \label{sec:sim-notation-2}
To prove \cref{thm:framework-2}, we assume that $\mu(G) \geq t \cdot \eps n$ for some constant $t$.
We also assume that $n \geq 1 / \eps^{300}$; otherwise, a $(1+\eps)$-approximate matching can be found by using \cite{DP14}'s algorithm, which takes $O(m \log(1/\eps) / \eps) = O(n^2 \log(1/\eps) / \eps) = O(\peps)$ time.
We fix $\delta = \eps^{107}$ throughout the section.

\paragraph{Representation of the graph.}
We assume Random Access Machine (RAM) as the model of computation.
The algorithm takes the adjacency matrix of $G$ as input.
It stores in memory the tuple $(G_\alp, \Omg_\alp, w'_\alp)$ of each structure $\cS_\alp$.
In addition, it also stores and maintains all in-structure arcs; i.e., the arcs connecting two
$G$-vertices of the same structure.
Since each structure contains at most $\Delta = O(\eps^3)$ vertices and $\Delta^2 = O(\eps^6)$ edges, all structures can be stored using $n \peps$ memory.

\paragraph{An auxiliary graph.} 
Our algorithm simulates the framework in \cref{sec:framework-distr}.
At the beginning of the algorithm, we create a bipartite graph $B$, which is defined as follows:
\begin{definition}[Bipartite graph $B$] \label{def:B}
Let $B$ be a bipartite graph obtained by splitting each $G$-vertex $v$ into two copies $v^+$ and $v^-$, called \emph{outer copy} and \emph{inner copy}, respectively.
The left part is $V^+ = \{v^+ \mid v \in V(G)\}$ and the right part is $V^- = \{v^- \mid v \in V(G)\}$.
The edge set of $B$ is $\{(u^+, v^-) \mid (u, v) \in G\} \cup \{(v^+, u^-) \mid (u, v) \in G\}$.
Intuitively, the vertex $v^+$ represents that $v$ is an outer vertex, and $v^-$ represents that $v$ is an inner or unvisited vertex.    
\end{definition}

In our algorithm, each invocation of $\Aweak$ will be on either $B$ or $G$.
The graph $B$ is not constructed explicitly.
(Explicit construction of $B$ requires $\Omega(m)$ time, whereas our goal is a $O(\peps \cdot n)$-time algorithm.)
Instead, we only store the vertex set of $B$.
Still, we can assume access to the adjacency matrix of $B$, because it can be supported by making $O(1)$ queries to $G$.





\subsection{Concentration bounds}
We use the following concentration bounds.
Let $\mathcal{X} = \{X_1, X_2, \dots, X_k\}$ be $\{0, 1\}$-random variables.
Let $X = \sum_{i=1}^{k} X_i$ and $\mu = \bbE[X]$.

\begin{lemma}[Chernoff bound] \label{lem:Chernoff}
If $X_1, X_2, \dots, X_k$ are mutually independent, then
$\Pr[X \leq (1 - \gamma)\mu] \leq e^{-\mu \gamma^2 / 2}$.
\end{lemma}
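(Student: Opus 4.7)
The plan is to use the standard moment generating function approach (Bernstein trick) applied to the lower tail. The idea is to exponentiate $-X$, apply Markov's inequality, exploit independence to factor the expectation as a product, and finally optimize over the free parameter.

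First, fix any $t > 0$ and write
\[
\Pr[X \leq (1-\gamma)\mu] = \Pr\!\left[e^{-tX} \geq e^{-t(1-\gamma)\mu}\right] \leq e^{t(1-\gamma)\mu} \cdot \bbE\!\left[e^{-tX}\right],
\]
where the inequality is Markov's. By mutual independence of $X_1,\dots,X_k$, we have $\bbE[e^{-tX}] = \prod_{i=1}^k \bbE[e^{-tX_i}]$. Since each $X_i \in \{0,1\}$ with $p_i = \Pr[X_i=1]$, I would bound
\[
\bbE\!\left[e^{-tX_i}\right] = 1 + p_i(e^{-t}-1) \leq \exp\!\bigl(p_i(e^{-t}-1)\bigr),
\]
using $1+x \leq e^x$. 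Multiplying over $i$ and recalling $\mu = \sum_i p_i$ yields $\bbE[e^{-tX}] \leq \exp(\mu(e^{-t}-1))$, so
\[
\Pr[X \leq (1-\gamma)\mu] \leq \exp\!\bigl(t(1-\gamma)\mu + \mu(e^{-t}-1)\bigr).
\]

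Next, I would optimize by taking $t = -\ln(1-\gamma) > 0$ (valid since $\gamma \in (0,1)$). Substituting gives the exponent $\mu\bigl(-(1-\gamma)\ln(1-\gamma) - \gamma\bigr)$. To finish, I would invoke the elementary inequality
\[
(1-\gamma)\ln(1-\gamma) + \gamma \geq \tfrac{\gamma^2}{2} \qquad \text{for all } \gamma \in [0,1),
\]
which can be proved by a Taylor expansion of $\ln(1-\gamma) = -\sum_{j\geq 1}\gamma^j/j$ and a term-by-term comparison, or equivalently by checking that $f(\gamma) := (1-\gamma)\ln(1-\gamma) + \gamma - \gamma^2/2$ satisfies $f(0)=0$ and $f'(\gamma) = -\ln(1-\gamma) - \gamma \geq 0$. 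This inequality yields the exponent $\leq -\mu\gamma^2/2$, proving the claim.

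There is no real obstacle here; the only delicate point is the elementary inequality at the end, and even that reduces to a one-line monotonicity argument on $f'$. If one wanted the cleanest presentation, I would state the inequality as a brief sublemma before the main derivation so that the main chain of estimates stays in one uninterrupted display.
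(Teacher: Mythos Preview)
Your proof is correct and is precisely the standard moment-generating-function derivation of the lower-tail Chernoff bound. The paper itself does not supply a proof of this lemma; it is stated as a well-known concentration inequality and used as a black box, so there is nothing to compare against.
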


\begin{definition}[Dependency graph] \label{def:depencency-graph}
A \emph{dependency graph} for $\mathcal{X}$ has vertex set $[k]$ and an edge set such that for each $i \in [k]$, $X_i$ is mutually independent of all other $X_j$ such that $\{i, j\}$ is not an edge.
\end{definition}

For any non-negative integer $d$, we say that the $X_i$’s exhibit \emph{$d$-bounded dependence}, if the $X_i$’s have a dependency graph with maximum vertex degree $d$.

\begin{lemma}[\cite{Pemmaraju2001}] \label{lem:limited-dependence}
If $X_1, X_2, \dots, X_k$ exhbit $d$-bounded dependence, then
$\Pr[X \leq (1 - \gamma)\mu] \leq \frac{4(d+1)}{e} \cdot e^{\frac{-\mu \gamma^2}{2(d+1)}}$.
\end{lemma}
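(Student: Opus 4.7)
The plan is to follow Pemmaraju's equitable-coloring argument. First, since the dependency graph of $\{X_1,\ldots,X_k\}$ has maximum degree $d$, the Hajnal-Szemerédi theorem supplies an equitable proper coloring of it using $d+1$ colors; let $I_1,\ldots,I_{d+1}$ be the resulting color classes. By the defining property of a dependency graph (\cref{def:depencency-graph}), the variables $\{X_i : i \in I_j\}$ inside each single class are mutually independent.

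Next, decompose $X = \sum_{j=1}^{d+1} X^{(j)}$ with $X^{(j)} := \sum_{i \in I_j} X_i$ and $\mu_j := \bbE[X^{(j)}]$, so $\sum_j \mu_j = \mu$. If the bad event $\{X \le (1-\gamma)\mu\}$ holds, then summing the strict reverse inequalities $X^{(j)} > (1-\gamma)\mu_j$ over all $j$ would yield $X > (1-\gamma)\mu$, a contradiction; hence at least one class must satisfy $X^{(j)} \le (1-\gamma)\mu_j$. Applying the independent-case multiplicative Chernoff bound (\cref{lem:Chernoff}) inside each class and then a union bound over the $d+1$ classes gives
\[
\Pr[X \le (1-\gamma)\mu] \;\le\; \sum_{j=1}^{d+1} \Pr[X^{(j)} \le (1-\gamma)\mu_j] \;\le\; \sum_{j=1}^{d+1} e^{-\mu_j \gamma^2/2}.
\]

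The remaining task is to compare $\sum_j e^{-\mu_j\gamma^2/2}$ with the target $\frac{4(d+1)}{e}\,e^{-\mu\gamma^2/(2(d+1))}$, using only $\sum_j \mu_j = \mu$ and $\mu_j \ge 0$. I would split into two regimes. In the regime where $\mu\gamma^2/(2(d+1)) \le \ln(4/e)$, the target already exceeds $d+1$, so the inequality is immediate because the sum has at most $d+1$ summands, each at most $1$. In the complementary regime I would invoke the equitability of the coloring, which forces $|I_j| \le \lceil k/(d+1)\rceil$ and hence, together with $X_i \in \{0,1\}$, bounds each $\mu_j$ from above; a worst-case optimization over the profile $(\mu_1,\ldots,\mu_{d+1})$ subject to these constraints then yields the advertised bound, with the constant $4/e$ emerging from the optimization.

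The main obstacle is this final step. Since $e^{-x}$ is convex, the sum $\sum_j e^{-\mu_j \gamma^2/2}$ is \emph{minimized}, not maximized, when the $\mu_j$ are balanced, so Jensen's inequality goes in the wrong direction and cannot be used directly. The argument must instead rule out the otherwise-maximizing degenerate profile ``all mass in a single class'' by exploiting the size constraints coming from equitability plus the Bernoulli marginals, and handle the small-$\mu$ regime separately via the trivial $\Pr \le 1$ bound. Once these regime-by-regime estimates are combined, the stated inequality follows.
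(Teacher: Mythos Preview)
The paper does not prove this lemma at all; it is quoted as a black box from \cite{Pemmaraju2001}. So there is nothing in the paper to compare your argument to, and the question is purely whether your sketch goes through.

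Your outline is the standard one (equitable coloring of the dependency graph, independence within each color class, Chernoff per class, union bound), and the first two steps are fine. The gap is exactly where you flagged it, and your proposed fix does not close it. After the union bound you are left with
\[
\sum_{j=1}^{d+1} e^{-\mu_j\gamma^2/2},
\]
and you want to dominate this by $\tfrac{4(d+1)}{e}\,e^{-\mu\gamma^2/(2(d+1))}$. Equitability tells you $|I_j|\le\lceil k/(d+1)\rceil$, hence $\mu_j\le\lceil k/(d+1)\rceil$; but an \emph{upper} bound on $\mu_j$ yields a \emph{lower} bound on $e^{-\mu_j\gamma^2/2}$, which is the wrong direction. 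Concretely, nothing you have written rules out the profile $\mu_1=\mu$, $\mu_2=\cdots=\mu_{d+1}=0$ (take the variables in classes $2,\ldots,d+1$ to have arbitrarily small means; this is perfectly consistent with equitability and with $X_i\in\{0,1\}$). On that profile your sum is $e^{-\mu\gamma^2/2}+d\to d$ as $\mu\to\infty$, while the target bound tends to $0$. So in the ``large $\mu$'' regime your optimization claim is false, and the regime split does not rescue the argument.

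The underlying issue is that the pigeonhole step ``$X\le(1-\gamma)\mu\Rightarrow X^{(j)}\le(1-\gamma)\mu_j$ for some $j$'' followed by a naive union bound is genuinely too weak here: it throws away the coupling between the deviations of the different $X^{(j)}$. To get the exponent $\mu\gamma^2/(2(d+1))$ one needs an argument that keeps the classes tied together, for instance bounding the moment generating function $\bbE[e^{-tX}]=\bbE\bigl[\prod_j e^{-tX^{(j)}}\bigr]$ via H\"older across the $d+1$ classes and then using independence inside each class, rather than splitting the bad event and union-bounding.
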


\subsection{Overview of the simulation}

In the following, we present an overview of our simulation except for $\algExtend$ and $\algCheck$, which are the only two procedures we do not simulate exactly.

\paragraph{Simulation of the outer loop (\cref{alg:outline}).}
\cref*{line:2-approx-matching} of \cref{alg:outline} computes an initial matching $M$.
We simulate this step using the approach in \cref{lem:init}, that is, to repeatedly find a matching on the set of unmatched vertices.
A simple analysis shows that $O(1/(\delta\cdot\lambda))$ calls to $\Aweak$ suffices (see \cref{sec:sim-init-2} for details).

Consider the for-loop in \cref*{line:scale-h}.
We will describe shortly the simulation for \cref{alg:phase}.
Recall that \cref{alg:phase} removes some vertices during the execution.
For each vertex of $G$, we store a label indicating whether it is removed in \cref{alg:phase}.
Hence, \cref*{line:restore} can be done in $O(n)$ time by setting all vertices as not removed.
\cref*{line:augment} can be done in $O(n)$ time because each of $M$ and $\calP$ contains at most $n$ edges.
Since the loop has $\peps$ iterations, the total time spent excluding the execution of \cref{alg:phase} is $O(n \peps)$.

\paragraph{Simulation of a phase (\cref{alg:phase}).}
Throughout the algorithm, we maintain the structure of each free vertex and the set $\Omega$ of blossoms, which takes $O(n \peps)$ space.
Lines 1-4 of \cref{alg:phase} can be easily simulated in $O(n)$ time.
(Line 2 takes $O(n)$ time because $M$ contains $O(n)$ arcs.)
Consider the for-loop in Lines 5-12.
Lines 5-9 and Line 12 can be simulated in $O(n)$ time by examining the structure of each free vertex.
Lines 10 and 11 are the only two procedures that require a pass over the stream, which we cannot simulate exactly because it takes $\Omg(m)$ time.
The simulation of Lines 10 and 11 is described in, respectively, \cref{sec:sim-extend-1,sec:sim-check-1}.

\subsection{Computation of the initial matching} \label{sec:sim-init-2}
The first step of the algorithm is to compute a constant approximate matching.

\begin{lemma} \label{lem:init-2}
Assume that $\mu(G) \geq d\eps n$ for some constant $d$.
A $3$-approximate matching can be computed in $O(n / \eps)$ time plus $O(1/(\delta \lambda))$ calls to $\Aweak$ with $\delta \leq \frac{d\eps}{3}$.
\end{lemma}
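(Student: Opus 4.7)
The plan is to mimic the static proof of \cref{lem:init}, but replace the direct access to a constant-approximation algorithm with the weak oracle $\Aweak$, and exploit the fact that $\Aweak$ only returns $\bot$ when the induced subgraph is guaranteed to have small matching number. The algorithm is: initialize $M \gets \emptyset$; repeatedly call $\Aweak(U, \delta)$ where $U = V \setminus V(M)$ is the current set of unmatched vertices, adding the returned matching to $M$ whenever $\Aweak$ does not return $\bot$; stop as soon as $\Aweak$ returns $\bot$.

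The iteration count is immediate: every successful call adds at least $\lambda \delta n$ edges to $M$, and $|M| \leq n/2$ at all times, so the loop terminates after at most $\frac{1}{2\lambda\delta} = O\paren{\frac{1}{\delta \lambda}}$ iterations, giving the required bound on the number of calls to $\Aweak$. For the bookkeeping time, each iteration can update the bitmap marking $V(M)$ in time proportional to the size of the returned matching (plus $O(n)$ initialization), so the non-oracle work totals $O(n) + O(1/(\delta\lambda)) \cdot O(n)$, which is $O(n/\eps)$ when $\delta = \Theta(\eps)$ and $\lambda$ is constant.

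For correctness, the key point is the stopping condition. When the algorithm halts, \cref{def:Aweak} guarantees $\mu(G[U]) < \delta n$, since otherwise $\Aweak$ would not have returned $\bot$. To convert this into an approximation bound, I would use the standard inequality
\[
    \mu(G) \leq 2|M| + \mu(G[U]),
\]
which holds because every edge of a fixed maximum matching $M^*$ is either incident to $V(M)$ (at most $2|M|$ such edges) or lies entirely in $G[U]$ (at most $\mu(G[U])$ such edges). Combining these with the hypotheses $\mu(G) \geq d\eps n$ and $\delta \leq d\eps / 3$ yields
\[
    |M| \geq \frac{\mu(G) - \mu(G[U])}{2} > \frac{\mu(G) - \delta n}{2} \geq \frac{\mu(G) - \mu(G)/3}{2} = \frac{\mu(G)}{3},
\]
so $M$ is a $3$-approximate maximum matching.

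There is no real obstacle here; this is a direct adaptation of \cref{lem:init}'s argument, with the only subtlety being that $\Aweak$ gives us a one-sided guarantee (a matching bound only when it returns $\bot$). This is exactly what makes the choice $\delta \leq d\eps/3$ match the hypothesis $\mu(G) \geq d\eps n$: it ensures the slack term $\delta n$ in the approximation bound is at most $\mu(G)/3$, which is precisely the loss budget one can afford for a $3$-approximation.
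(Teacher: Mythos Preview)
Your proposal is correct and matches the paper's own proof essentially line-for-line: same iterative algorithm, same $|M| \le n/2$ (or similar) bound on the number of successful calls, and the same inequality $\mu(G) \le 2|M| + \mu(G[U])$ combined with $\delta n \le \mu(G)/3$ to obtain the $3$-approximation.
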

\begin{proof}
The computation is done by iteratively finding matchings in $G$.
We initialize $M$ as an empty matching.
In each iteration, we invoke $\Aweak$ on the set of unmatched vertices with  $\delta = \frac{d\eps}{3}$.
All matched edges returned by $\Aweak$ are added to $M$.
The procedure is repeated until $\Aweak$ returns $\bot$.

Since $\Aweak$ must find a matching of size $\lambda\delta n$ in each iteration except the last, there are at most $1 / (\lambda \delta)$ iterations.
Each iteration spends $O(n)$ time to update $M$ and uses 1 invocation of $\Aweak$.
The claimed running time follows.

We complete the proof by showing that $M$ is a 3-approximation.
Consider the matching $M$ after the last iteration.
Let $M^*$ be the maximum matching in $G$.
Since $\Aweak$ returns $\bot$ in the last iteration, there are at most $\delta n$ edges in $M^*$ whose endpoints do not intersect $M$.
Since $M$ only contains $2|M|$ endpoints, we have
\[
|M^*| \leq 2 |M| + \delta n = 2 |M| + \frac{d\eps}{3} n \leq 2|M| + \frac{1}{3}|M^*|,
\]
where the last inequality comes from $|M^*| \geq d \eps n$.
It follows that $|M| \geq \frac{1}{3}|M^*|$, concluding the proof.
\end{proof}

\subsection{Simulation of $\algCheck$} \label{sec:sim-check-2}
\paragraph{Overview.}
We implement the simulation in \cref{sec:sim-check-1} differently.
Recall that \algCheck has two steps (see \cref{sec:check}), which find arcs for $\algContract$ and $\algAugment$, respectively.
The first step can be done by scanning all in-structure edges and performing $\algContract$ whenever possible.
Since each structure and blossom contains at most $\Delta^2$ edges, this step can be done in $O(n\Delta^2)$ = $O(n \eps^{-6})$ time.

Recall that in \cref{sec:sim-check-1}, Step 2 is done by finding matching on a new graph, in which each vertex is a structure and each edge represents a possible augmentation between two structures.
Implementing this approach with $\Aweak$ is challenging because this new graph is not a vertex-induced subgraph of $G$ or $B$.
To address this, we use the idea of random sampling:
Instead of contracting each structure into a vertex, we sample a vertex from each structure and invoke $\Aweak$ on the sampled vertices.
Let $S$ denote the set of sampled vertices.
If an edge $e = (u, v)$ connects outer vertices of two structures in $G$, there is a probability of at least $1 / \Delta^2$ that $e$ appears in $G[S]$.
Therefore, $\bbE[\mu(G[S])] \geq \mu(G) / \Delta^2$.
By adapting the analysis in \cref{lem:drop}, we show that $\mu(G)$ drops exponentially in expectation.
\cref{fig:sampling} gives an example.

\begin{figure}[h] 
    \centering
    \includegraphics[width=0.5\textwidth]{Sketches/sampling.png}
        \caption{An illustrative example for the sampling procedure, in which the two sampled vertices $u$ and $v$ are marked in gray. The structures $\cS_\alp$ and $\cS_\beta$ are connected by two type 2 arcs $(u, v)$ and $(x, y)$, and the goal of the procedure is to preserve one of these arcs in $G[S]$.
        In this example, both $u$ and $v$ are sampled, and thus $(u, v)$ is contained in $G[S]$ as desired.
        In general, each of $(u, v)$ and $(x, y)$ is contained in $G[S]$ with probability at least $1/\Delta^2$.
        }
        \label{fig:sampling}
\end{figure}

\paragraph{Simulation procedure for Step 2.}
The simulation consists of $I = 1 / (2\lambda\eps^{107}) + 1$ iterations.
In an iteration $i$, we sample one outer $G$-vertex $v$ from each structure.
Let $S$ be the set of vertices that are sampled.
We invoke the $\Aweak$ on $S$ with $\delta = \epsilon^{107}$ to obtain a matching $N$ on $S$.
Note that each edge in $N$ connects outer vertices of two different structures.
In addition, each structure contains at most one vertex matched in $N$.
We perform $\algAugment$ on each edge in $N$, which removes all involved structures.
This completes the description of an iteration.

\paragraph{Time Complexity.}
For each iteration, the algorithm spends $O(n)$ time for sampling vertices.
Then, $\Aweak$ is invoked once to find the matching $N$.
Let $(u, v)$ be an arc in $N$.
Let $\alp$ (resp. $\beta$) be the free vertex such that $\cS_\alp$ contains $u$ (resp. $\cS_\beta$ contains $v$).
In the following, we show that performing $\algAugment$ on $(u, v)$ can be done in $O(|\cS_{\alp}| + |\cS_{\beta}|)$ time.
First, we find the augmenting path $P'$ in $G'$ between $\Omg(\alp)$ and $\Omg(\beta)$ by concatenating the tree paths from $\Omg(\alp)$ to $\Omg(u)$ and from $\Omg(\beta)$ to $\Omg(v)$.
Then, we use \cref{lem:even-path} to transform this $P'$ to $P$.
Since all vertices in $P$ are in $\cS_\alp \cup \cS_\beta$, this step, this computation takes $O(|\cS_\alp| + |\cS_\beta|)$ time.
Then, we mark all vertices in $\cS_\alp$ and $\cS_\beta$ as
removed, which can also be done in $O(|\cS_\alp| + |\cS_\beta|)$ time.
Hence, the $\algAugment$ operation requires $O(|\cS_\alp| + |\cS_\beta|)$ time.
Since the structures are vertex-disjoint, performing $\algAugment$ on all structures matched in $N(i)$ takes $O(n)$ time.

Recall that Step 1 of $\algCheck$ requires $O(n / \eps^6)$ time.
We conclude that the simulation procedure takes $O(n / \eps^6)$ time in total.

\paragraph{Correctness.}
Consider a fixed iteration $i$.
The matching $N$ returned by $\Aweak$ is a set of edges connecting outer vertices of different structure.
Thus, $\{(\cS_u, \cS_v) \mid (u, v) \in N\}$ is a matching in $H'$ (see \cref{def:Hp}).
Consequently, the simulation can be seen as a different implementation of \cref{alg:sim-check-1}, where in each iteration we find a matching in $H'$ using $\Aweak$ instead of $\Amat$.
We now show that $I$ iterations suffice.

Let $e'$ be an edge in $H'$.
By definition, $e'$ corresponds to at least an arc $(u, v) \in G$, which is an arc satisfying $(\cS_u, \cS_v) = e'$.
We say $e'$ is \emph{preserved} if $(u, v)$ is sampled.
Since each vertex is sampled with probability at least $1 / \Delta$, $e'$ is preserved with probability at least $1/\Delta^2 = \eps^6$.

Let $N'$ be a maximum matching in $H'$ at the beginning of iteration $i$.
We first show that if $N'$ is large, then our sampling preserves a large matching.

\begin{lemma} \label{lem:check-prob}
Consider a fixed iteration $i$. If $|N'| \geq \eps^{100} n$ at the beginning of $i$, then $G[S]$ contains a matching of size $\eps^{107} n$ with probability at least $1 - n^{-10}$.
\end{lemma}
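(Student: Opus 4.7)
The plan is to apply the Chernoff bound to the preservation events of the edges of $N'$, exploiting the fact that $N'$ is a matching in $H'$ so that distinct edges of $N'$ touch disjoint pairs of structures.

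First, for each $e' \in N'$, say with $e' = (\cS_1, \cS_2)$, I would fix a representative arc $(u_{e'}, v_{e'}) \in E(G)$ with $u_{e'}$ an outer vertex of $\cS_1$ and $v_{e'}$ an outer vertex of $\cS_2$; such an arc exists by the definition of $H'$ (\cref{def:Hp}) together with the fact that by the start of this iteration no type~1 arcs remain, so any edge between $\cS_1$ and $\cS_2$ is type~2 and thus between outer vertices. Call $e'$ \emph{preserved} if the outer vertex sampled from $\cS_1$ is $u_{e'}$ and the one sampled from $\cS_2$ is $v_{e'}$. Since each structure contains at most $\Delta = O(\eps^{-3})$ vertices (\cref{lem:struct-size}) and the outer-vertex sampling is uniform and independent across structures, $\Pr[e' \text{ preserved}] \ge 1/\Delta^2 \ge c\,\eps^{6}$ for an absolute constant $c > 0$.

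Next, I would observe that because $N'$ is a matching in $H'$, the pairs of structures $\{\cS_1, \cS_2\}$ associated with distinct edges of $N'$ are disjoint. Consequently the indicators $X_{e'} = \mathbf{1}[e' \text{ preserved}]$, for $e' \in N'$, depend on pairwise-disjoint sets of independent sampling decisions and are therefore mutually independent. Moreover, whenever $e'$ is preserved its representative arc lies in $G[S]$, and because each structure contributes exactly one vertex to $S$, the set of representative arcs of all preserved edges is vertex-disjoint and hence forms a matching in $G[S]$. Let $X = \sum_{e' \in N'} X_{e'}$ and $\mu = \bbE[X]$. By linearity and the hypothesis $|N'| \ge \eps^{100} n$,
\[
\mu \;\ge\; |N'| \cdot c\,\eps^{6} \;\ge\; c\,\eps^{106}\, n.
\]

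Finally, I would invoke Chernoff (\cref{lem:Chernoff}) with $\gamma$ chosen so that $(1-\gamma)\mu = \eps^{107} n$; then $\gamma \ge 1 - \eps/c \ge 1/2$ for $\eps$ small enough (using $\eps \le 1/2$ and the constant $c$). This yields
\[
\Pr\!\left[X \le \eps^{107} n\right] \;\le\; \exp\!\left(-\mu \gamma^{2}/2\right) \;\le\; \exp\!\left(-c\,\eps^{106} n / 8\right).
\]
Since by assumption $n \ge \eps^{-300}$, the exponent is $\Omega(\eps^{-194})$, which comfortably exceeds $10\ln n$, giving a failure probability at most $n^{-10}$ as claimed.

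The main thing to verify carefully is the mutual independence step; once one notes that $N'$ is a matching and that sampling is performed independently per structure, the rest is a routine Chernoff calculation, and the polynomial slack coming from $n \ge \eps^{-300}$ absorbs all constants hidden in $\Delta$.
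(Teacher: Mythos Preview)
Your proposal is correct and follows essentially the same approach as the paper: fix a representative arc for each edge of $N'$, note that preservation events are mutually independent because $N'$ is a matching in $H'$ (so distinct edges touch disjoint pairs of structures), and apply the Chernoff bound (\cref{lem:Chernoff}) to conclude. Your treatment is in fact slightly more careful than the paper's in two places---you explicitly justify that the representative arc has outer endpoints (so it can actually be sampled) and you track the hidden constant in $1/\Delta^{2}$---but the argument is otherwise identical.
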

\begin{proof}
For each edge $e' \in N'$, define $X_{e'}$ as the indicator random variable of whether $e'$ is preserved.
Let $X = \sum_{e' \in N'} X_{e'}$ denoting the number of edges in $N'$ that are preserved.
By our discussion above, $\bbE[X_{e'}] \geq \eps^6$.
Thus, we have
\[
    \bbE[X] = \sum_{e' \in N^*(i)} \bbE[X_{e'}] \geq |N'| \cdot \eps^6 \geq \eps^{106} n.
\]
In addition, since $N'$ is a matching, the random variables $\{X_{e'} \mid e' \in N^*(i)\}$ are mutually independent.
Applying a Chernoff bound (\cref{lem:Chernoff}) with $\gamma = 0.5$, we obtain $\Pr[X \leq 0.5 \bbE[X]] \leq e^{-\bbE[X] \gamma^2 / 2} \leq e^{-\eps^{106} n / 8} \leq e^{-10\ln{n}} = n^{-10}$ for large enough $n$.
(Recall that we have assumed $n \geq 1 / \eps^{300}$.)
Therefore, with probability at least $1 - n^{-10}$, at least $0.5 \bbE[X] \geq \eps^{107} n$ edges in $N'$ are preserved.
This completes the proof.
\end{proof}



\begin{claim} \label{clm:missed-aug-2}
At the end of all iterations, $|N'| \leq \eps^{100} n$ holds with high probability.
\end{claim}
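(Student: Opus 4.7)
The plan is to show that each iteration either reduces $\mu(H')$ by at least $\lambda \eps^{107} n$ or starts with $\mu(H')$ already below $\eps^{100} n$, and then conclude by counting. The key ingredients are \cref{lem:check-prob}, the guarantee of $\Aweak$ from \cref{def:Aweak}, and the monotonicity of $\mu(H')$ across iterations, combined with a union bound.

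First I would establish monotonicity: the only effect on $H'$ in each iteration is the deletion of structures via $\algAugment$ (no new structures are created during $\algPhase$), so $|V(H')|$ shrinks and $\mu(H')$ is non-increasing. Hence it suffices to show that at \emph{some} iteration $i \leq I$, $\mu(H')$ drops to $\eps^{100} n$ or below. Next I would prove a per-iteration reduction. Fix an iteration $i$ and condition on the history. If $\mu(H') \geq \eps^{100} n$ at the start of $i$, then by \cref{lem:check-prob}, with probability at least $1 - n^{-10}$ we have $\mu(G[S]) \geq \eps^{107} n = \delta n$, and by \cref{def:Aweak} the matching $N$ returned by $\Aweak$ satisfies $|N| \geq \lambda \delta n = \lambda \eps^{107} n$. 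Since $S$ contains at most one vertex per structure, $N$ projects to a matching $N_{H'}$ in $H'$ of the same size. After $\algAugment$ is applied to each edge of $N$, all $2|N_{H'}|$ endpoints of $N_{H'}$ are removed from $H'$, and a standard matching-exchange bound gives $\mu(H')_{\text{after}} \leq \mu(H')_{\text{before}} - |N_{H'}| \leq \mu(H')_{\text{before}} - \lambda \eps^{107} n$.

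Finally I would union bound over the $I = 1/(2\lambda \eps^{107}) + 1$ iterations. Let $B_i$ be the good event that either $\mu(H') < \eps^{100} n$ at the start of $i$, or $\mu(H')$ decreases by at least $\lambda \eps^{107} n$ in $i$; then $\Pr[B_i \mid \text{history}] \geq 1 - n^{-10}$. Since $I = \poly(1/\eps) \leq n$ by the standing assumption $n \geq \eps^{-300}$, the union bound yields $\Pr[\bigcap_i B_i] \geq 1 - I \cdot n^{-10} \geq 1 - n^{-9}$. Conditioning on $\bigcap_i B_i$, suppose for contradiction that $\mu(H') > \eps^{100} n$ at the end of all iterations. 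By monotonicity, $\mu(H')$ exceeds $\eps^{100} n$ at the start of every iteration, so the reducing alternative of $B_i$ must trigger every time, yielding a total decrease of at least $I \cdot \lambda \eps^{107} n = n/2 + \lambda \eps^{107} n$. This exceeds the initial value $\mu(H') \leq |V(H')|/2 \leq n/2$, a contradiction.

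The only place requiring care is the conditional-probability framing of \cref{lem:check-prob}, since the sampling in iteration $i$ depends on deletions made in earlier iterations; but because we only union bound failure events no martingale machinery is needed. The other mild subtlety is verifying that $I$ is just large enough for the contradiction to go through, which is precisely why the $+1$ appears in $I = 1/(2\lambda \eps^{107}) + 1$: it pushes the cumulative reduction strictly above $n/2$.
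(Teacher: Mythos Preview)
Your proof is correct and follows essentially the same approach as the paper. The paper frames the counting argument in terms of the total number of matched edges (equivalently, structures removed) rather than the decrease in $\mu(H')$, but the logic is identical: each successful iteration with $\mu(H')>\eps^{100}n$ contributes at least $\lambda\eps^{107}n$ of progress, the total budget is at most $n/2$, and $I$ is chosen to force a contradiction; you make the use of monotonicity of $\mu(H')$ explicit, which the paper relies on implicitly when concluding that (E2) holds at the last iteration.
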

\begin{proof}
We say an iteration $i$ ($1 \leq i \leq I$) is \textit{successful} if at least one of the following holds:
\begin{enumerate}
    \item[] (E1) $\Aweak$ finds a matching with at least $\lambda \cdot \eps^{107} n$ edges in iteration $i$;
    \item[] (E2) $|N^*(i)| \leq \eps^{100} n$.
\end{enumerate}
By \cref{lem:check-prob}, if (E2) does not hold, then (E1) holds with probability $1 - n^{-10}$.
Hence, each iteration is successful with probability at least $n^{-10}$, regardless of the outcome of previous iterations.
By a union bound, with probability at least $1 - I \cdot n^{-10} \geq 1 - n^{-9}$, all iterations are successful.

Each time $\Aweak$ finds a matching $N$, we remove $2|N|$ structures from $G$.
Therefore, throughout all iterations at most $n$ edges are found.
Thus, (E1) can happen at most $1 / (2 \lambda \cdot \eps^{107}) < I$ times.
Hence, (E2) holds at the last iteration with probability at least $1 - n^{-9}$.
This completes the proof of the claim.
\end{proof}

As in \cref{sec:sim-check-1}, all type 1 arcs in $G$ are said to be contaminated after the last iteration.
By \cref{clm:missed-aug-2} and since $|M| = \Omg(\eps n)$, $\mu(H')$ is at most $\eps^{100} n \leq \eps^{20}|M|$.
By an argument similar to the proof of \cref{lem:vc}, the contaminated arcs intersect at most $\eps^{17} |M|$ vertex-disjoint augmenting paths.
This shows that our simulation $\Aweak$ can be seen as a different implementation of \cref{alg:sim-check-1}.
Hence, the correctness proof in \cref{sec:app-correctness} applies to the simulation as well.

\subsection{Simulation of $\algExtend$} \label{sec:sim-extend-2}
Recall that $\algExtend$ scans through all arcs in the stream, and performs \algContract, \algAugment, and \algOvertake whenever possible.
To simulate $\algExtend$, we follow the approach in \cref{sec:sim-extend-1}.
First, we repeatedly find type 3 arcs to perform $\algOvertake$.
Then, we invoke our simulation for $\algCheck$ in \cref{sec:sim-check-2}.

Recall that the finding of type 3 arcs is done by executing $\lmax$ stages, where in stage $s$, we find a matching consisting of $s$-feasible arcs.
In \cref{sec:sim-extend-1}, each stage consists of finding matchings in a derived bipartite graph $H'_s$.
(See \cref{def:Hps} for the definition of $H'_s$.)
We simulate this approach by finding matchings in $B$.
(See \cref{def:B} for the definition of $B$.)

\paragraph{Simulation procedure.}
Consider a fixed stage $s$.
The simulation of stage $s$ consists of $I = 1 / (2\lambda \eps^{100}) + 1$ iterations.
The overall idea is to repeat the procedure and analysis in \cref{alg:sim-extend-1}.
However, for technical reasons, we need to handle ``in-structure overtake'' separately.
More precisely, we maintain the following invariant throughout all iterations.

\begin{invariant} \label{inv:no-in-struct}
At the beginning of each iteration, there are no $s$-feasible arcs that connect two vertices of the same structure.
\end{invariant}

\noindent To maintain \cref{inv:no-in-struct}, the following is performed before the first iteration and after each iteration:
We scan through each arc connecting two vertices of the same structure;
if the arc is $s$-feasible, we perform \algOvertake on it and mark the structure containing the arc as extended.
After this step, \cref{inv:no-in-struct} holds.

Each iteration finds a set of type 3 arcs for cross-structure overtake.
First, we construct a vertex subset $S \subseteq V(B)$ as follows.
We sample one $G$-vertex from each structure uniformly at random.
Let $T$ denote the sampled vertices.
For each outer vertex $u$ in $T$, we add $u^+$ (the outer copy of $u$ in the graph $B$) in $S$ if and only if the following holds:
\begin{itemize}
    \item $\cS_u$ is not on-hold or extended,
    \item $\Omg(u)$ is the working vertex of $\cS_u$, and
    \item $\ell(\Omg(u)) = s$.
\end{itemize} 
For each inner vertex $v$ in $T$, we add $v^-$ in $S$ if and only if $\ell(\Omg(v)) > i + 1$.
It is not hard to see that for each arc $(u^+, v^-) \in B$, its corresponding arc $(u, v) \in G$ is of type 3.
We invoke $\Aweak$ on $S$ with $\delta = \eps^{107}$ to find a matching $N_B \subseteq E(B)$.
Let $N' = \{(\Omg(u), \Omg(v)) \mid (u^+, v^-) \in N_B \}$.
Since each structure is only adjacent to at most one arc in $N_B$, no two vertices in $N'$ may share an endpoint; that is, $N'$ is a matching of $H'_s$.
We perform $\algOvertake$ on each arc in $N'$.
All type 3 arcs remaining in $G$ after $I$ iterations are considered contaminated.

\paragraph{Analysis.}
Consider a fixed iteration $i$ in some stage $s$.
Recall that our simulation finds a matching $N_B$ by invoking $\Aweak$.
The first step is to show that if $H'_s$ contains a large matching, then $\Aweak$ succeeds with high probability.
Let $N^*$ be a maximum matching in $H'_s$ at the beginning of iteration $i$.

\begin{lemma} \label{lem:extend-prob}
Consider a fixed iteration $i$ in a stage $s$. At the beginning of iteration $i$, if $|N^*| \geq \eps^{100} n$, then $B[S]$ contains a matching of size $\eps^{107} n$ with probability at least $1 - n^{-10}$.
\end{lemma}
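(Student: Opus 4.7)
The plan is to follow the proof of \cref{lem:check-prob}, but to replace the Chernoff bound by the limited-dependence concentration inequality \cref{lem:limited-dependence}, since in $H'_s$ two edges of $N^*$ can share a structure on the inner side and the corresponding indicators are therefore not independent. First I would introduce, for each $e' = (u', v') \in N^*$, an indicator $X_{e'}$ as follows. By \cref{inv:no-in-struct}, the endpoints $u'$ and $v'$ belong to distinct structures $\cS_1 := \cS_{u'}$ and $\cS_2 := \cS_{v'}$. Let $t_1, t_2$ be the $G$-vertices sampled from $\cS_1, \cS_2$; set $X_{e'} = 1$ iff $\Omg(t_1) = u'$, $\Omg(t_2) = v'$, and $(t_1, t_2) \in E(G)$. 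Since $(u', v') \in E(G')$ arises from at least one such $G$-arc and the two samples are independent, $\bbE[X_{e'}] \geq 1/(|\cS_1|\cdot|\cS_2|) \geq 1/\Delta^2 = \eps^6$.

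Next I would verify that whenever $X_{e'} = 1$, the $B$-arc $(t_1^+, t_2^-)$ lies in $B[S]$: the inclusion rules for $S$ put $t_1^+$ into $S$ (because $u'$ is the working vertex of a non-on-hold/non-extended structure with $\ell(u') = s$) and $t_2^-$ into $S$ (because $\ell(v') > s+1$). Moreover, the set of preserved $B$-arcs is a matching in $B[S]$: distinct preserved arcs have distinct left endpoints (unique working vertex per structure) and distinct right endpoints (a single sample per structure cannot simultaneously land in two different inner blossoms). Hence $\mu(B[S]) \geq X := \sum_{e' \in N^*} X_{e'}$.

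The key step, which I expect to be the main obstacle, is bounding the dependency degree of $\{X_{e'}\}$. Only the samples from $\cS_1$ and $\cS_2$ affect $X_{e'}$, so another edge $e'' \in N^*$ is potentially correlated with $e'$ only if $\{\cS_{u''}, \cS_{v''}\} \cap \{\cS_1, \cS_2\} \neq \emptyset$. Uniqueness of working vertices caps the number of edges sharing a left structure with $e'$ by at most $2$, and the at most $\Delta$ inner $G'$-vertices per structure cap the number of edges sharing a right structure by at most $2\Delta$. Thus the dependency graph has maximum degree $d \leq 2\Delta + 2 = O(\eps^{-3})$.

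Finally I would apply \cref{lem:limited-dependence} with $\gamma = 1/2$. Linearity of expectation gives $\bbE[X] \geq |N^*| \cdot \eps^6 \geq \eps^{106} n$, so $(1-\gamma)\bbE[X] \geq \eps^{107} n$ using $\eps \leq 1/2$. The tail bound becomes $\Pr[X \leq \eps^{107} n] \leq \frac{4(d+1)}{e}\exp\bigl(-\bbE[X]/(8(d+1))\bigr) = O(\eps^{-3}) \exp(-\Omega(\eps^{109} n))$. Using the standing assumption $n \geq 1/\eps^{300}$, the exponent is $\Omega(\eps^{-191})$, which dominates $10 \ln n + \ln(\eps^{-3})$ and yields the desired $n^{-10}$ tail, establishing the lemma.
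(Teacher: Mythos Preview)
Your proposal is correct and follows essentially the same approach as the paper: both define indicators for ``preservation'' of edges of $N^*$ under the per-structure sampling, lower bound $\bbE[X]$ by $\eps^{106} n$ via the $1/\Delta^2$ argument (using \cref{inv:no-in-struct} to guarantee the two structures are distinct), bound the dependency degree by $O(\Delta)$ via the fact that each $X_{e'}$ depends only on the samples from the two structures of $e'$ and a structure is incident to at most $\Delta$ edges of $N^*$, and then apply \cref{lem:limited-dependence}. The only cosmetic differences are that the paper fixes one specific $G$-arc $(u,v)$ per $N^*$-arc upfront (so the preserved arcs are automatically a matching in $B$), whereas you allow any sampled $(t_1,t_2)$ realizing the $G'$-arc and separately argue the preserved arcs form a matching; and your dependency count $d\le 2\Delta+2$ differs from the paper's $2\Delta$ only in bookkeeping.
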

\begin{proof}
Let $N^*_B$ be a matching in $B$ constructed as follows:
For each arc $(u', v') \in N^*$, find an arc $(u^+, v^-) \in B$ such that $\Omg(u) = u'$ and $\Omg(v) = v'$ (if there are multiple such $(u^+, v^-)$, pick any of them).
Note that $|N^*_B| = |N^*| \geq \eps^{100} n$.

We say an arc $(u^+, v^-) \in N^*_B$ is \emph{preserved} if both of its endpoints are in $S$.
This happens exactly when we sample $u$ from $\cS_u$ and $v$ from $\cS_v$.
By \cref{inv:no-in-struct}, $\cS_u \neq \cS_v$, and hence 

\begin{equation} \label{eqn:pr-preserved}
\Pr[\text{$(u^+, v^-)$ is preserved}] = \Pr[\text{$u$ and $v$ are both sampled}] = \frac{1}{|S_u|} \cdot \frac{1}{|S_v|} \geq 1 / \Delta^2 = \eps^6,
\end{equation}
where the last inequality is because each structure contains at most $\Delta$ vertices.

For each arc $e \in N_B^*$, define $X_e$ as the indicator random variable of whether $e$ is preserved.
Let $X = \sum_{e \in N_B^*(i)} X_e$ denoting the number of edges in $N_B^*$ that are preserved.
Consider an edge $(u, v) \in N_B^*(i)$.
By \cref{eqn:pr-preserved}, $\bbE[X] = \sum_{e \in N_B^*(i)} \bbE[X_e] \geq |N_B^*(i)| \cdot \eps^6 \geq \eps^{106} n$.

Let $\cX = \{X_e \mid e \in N^*_B\}$.
The random variables $\cX$ may not be mutually independent.
In particular, two arcs $(a^+, b^-)$ and $(u^+, v^-)$ are dependent if and only if $(a, b)$ and $(u, v)$ are adjacent to a common structure.
Since $N^*$ is a matching, each structure is adjacent to at most $\Delta$ arcs in $N^*$.
Therefore, each $X_e$ depends on at most $2\Delta$ other random variables in $\cX$.
(More formally, $X_e$ is mutually independent of the set of all $X_{e'}$ such that $e'$ does not share a structure with $e$.)
As a result, $\cX$ admits a dependency graph (\cref{def:depencency-graph}) with maximum degree $2\Delta$.

Applying the concentration bound for limited dependence (\cref{lem:limited-dependence}) with $\gamma = 0.5$ and $d = 2\Delta$, we obtain
\[
\Pr[X \leq 0.5 \bbE[X]] \leq \frac{4(d+1)}{e} \cdot e^{-\bbE[X] \gamma^2 / (2d+2)} \leq e^{-\eps^{106} n / (16\Delta + 8)} = e^{\Theta(\eps^{112} n)} \leq e^{-10\ln{n}} = n^{-10},
\]
where the last inequality holds for a large enough $n$.
(Recall that we have assumed $n \geq 1 / \eps^{300}$.)
Therefore, with probability at least $1 - n^{-10}$, at least $0.5 \bbE[X] \geq \eps^{107} n$ edges in $N^*(i)$ are preserved.
This completes the proof.
\end{proof}

\begin{lemma}
Let $E_C$ be the set of contaminated edges created by simulating $\algExtend$ using $\Aweak$.
With high probability, $E_C$ admits a vertex cover of size $O(\eps^{92} n)$.
\end{lemma}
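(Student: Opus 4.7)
My plan is to adapt the template of \cref{lem:vc2} from the static setting, replacing the $c$-approximation guarantee with the concentration bound of \cref{lem:extend-prob}. First, I union-bound \cref{lem:extend-prob} over all $\lmax \cdot I = O(1/(\lambda\eps^{101}))$ iterations used across every stage of the simulation of $\algExtend$. Using $n \geq \eps^{-300}$, the resulting failure probability is at most $n^{-9}$, so with high probability every iteration that starts with $\mu(H'_s) \geq \eps^{100} n$ causes $\Aweak$ to return a matching of size at least $\lambda\eps^{107}n$. I condition on this event throughout.

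The second key observation is that $H'_s$ is decremental within stage $s$: every successful overtake simultaneously sets the overtaken inner vertex's label to $s+1$, which removes it from the right part of $H'_s$ (a right-part vertex requires label $>s+1$), and marks the overtaker's structure as extended, which removes it from the left part for the remainder of $\algExtend$. Hence once $\mu(H'_s)\leq\eps^{100}n$ holds at some iteration of stage $s$, this upper bound persists through the end of the stage. I call stage $s$ \emph{good} if $\mu(H'_s)\leq\eps^{100}n$ at its end and \emph{bad} otherwise.

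For good stages, taking the $G'$-vertices of a maximum matching of $H'_s$ at the stage's end and expanding each blossom to its underlying $G$-vertices yields a vertex cover of size at most $2\eps^{100}n\cdot\Delta=O(\eps^{97}n)$ for that stage's contaminated arcs, contributing $O(\eps^{96}n)\subseteq O(\eps^{92}n)$ across all $O(\eps^{-1})$ good stages. For bad stages, the conditioning forces every one of the $I = \Theta(\eps^{-100}/\lambda)$ iterations to be (E1)-successful, producing $\geq I\cdot\lambda\eps^{107}n = \Omega(\eps^{7}n)$ overtakes per bad stage; since each overtake marks a structure as extended and each structure can be extended at most once per execution of $\algExtend$, the global overtake budget is bounded by $n$, capping the number of bad stages at $O(\eps^{-7})$.

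The main obstacle is tightening the bad-stage contribution to $O(\eps^{92}n)$. A naive bound of (number of bad stages) $\times$ (per-stage trivial cover) is much too weak; the sharper argument I would pursue charges the remaining $s$-feasible arcs in each bad stage against the left-part vertices of $H'_s$ (which form a vertex cover, since all $s$-feasible arcs go left-to-right) and then amortizes these left-part vertices using the monotonicity of labels within $\algExtend$, exploiting that a non-extended structure's working vertex can have a given label value in only a restricted set of stages. Combined with the bound on the number of bad stages and the $\Delta$-factor from blossom expansion, this should yield the advertised $O(\eps^{92}n)$ cover; rigorously executing this amortization is the technical core of the proof.
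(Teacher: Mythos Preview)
Your proposal has a genuine gap: the bad-stage contribution cannot be tightened to $O(\eps^{92}n)$ by the amortization you sketch. Charging the residual $s$-feasible arcs to the left part of $H'_s$ and then using ``each non-extended structure's working vertex has a fixed label, so it lies in the left part of only one $H'_s$'' bounds the \emph{sum} of left-part sizes across bad stages by the number of structures, i.e.\ by $n$ in $G'$. After blossom expansion this is $O(n\cdot\Delta)=O(n/\eps^{3})$, not $O(\eps^{92}n)$; combining with your $O(\eps^{-7})$ bound on the number of bad stages does not help, since the amortization already absorbs the per-stage sum. So the ``technical core'' you defer is not merely unexecuted---it does not close the gap.

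The paper's route is much simpler: it shows there are \emph{no} bad stages with high probability, and your good-stage calculation then finishes the proof by itself. The observation you are missing is a per-stage vertex budget. Whenever $\Aweak$ returns a matching, you perform at least $\lambda\delta n$ overtakes, and each overtake removes \emph{both} endpoints from $H'_s$ (the overtaker's structure is marked extended, and the overtaken vertex's label becomes $s+1$). Hence every (E1) iteration deletes at least $2\lambda\delta n$ vertices from $H'_s$, which starts with at most $n$ vertices and is decremental. Thus (E1) can occur at most $1/(2\lambda\delta)$ times within stage $s$; provided $I>1/(2\lambda\delta)$, some iteration must satisfy (E2), and by decrementality (E2) persists to the end of the stage. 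This is exactly the argument pattern of \cref{clm:missed-aug-2}, transported from $H'$ to $H'_s$. (Note: the paper's stated value $I=1/(2\lambda\eps^{100})+1$ does not satisfy $I>1/(2\lambda\delta)$ with $\delta=\eps^{107}$; the proof explicitly asserts $1/(2\lambda\delta)<I$, and the advertised $O(\eps^{-108})$ total iteration count is consistent with the intended $I=\Theta(1/(\lambda\eps^{107}))$, so this is evidently a typo in the iteration count rather than a different argument.)
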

\begin{proof}
Consider a fixed stage $s$.
As argued in \cref{sec:sim-extend-1}, after we perform \algOvertake on an arc $(u', v')$ of $H'_s$, its two endpoints are no longer part of $H'_s$ (because $S_{u'}$ is marked as extended, and $\ell(v')$ is updated to $s+1$).
Hence, whenever $\Aweak$ finds a matching, the number of vertices in $H'_s$ decreases by $2\lambda\delta n$.
Since $H'_s$ has at most $n$ vertices at the beginning of $s$, there can only be $1 / (2\lambda\delta) < I$ iterations where $\Aweak$ finds a matching.
By an argument similar to \cref{clm:missed-aug-2}, we can show that $|N^*| \leq \eps^{100} n$ holds with high probability after $I$ iterations.
This implies that the contaminated edges created in stage $s$ admit a vertex cover of size $\eps^{100} n \times 2\Delta \leq \eps^{93} n$.
Since there are $\lmax = O(1 / \eps)$ stages, the set of all contaminated edges can be covered with $O(\eps^{92} n)$ vertices.
\end{proof}

\section{Applications in the dynamic setting} \label{sec:dynamic}
In this section, we present new algorithms for maintaining a $(1+\eps)$-approximate matching in the \emph{fully dynamic setting}, using the framework in \cref{thm:framework-2}.

\subsection{Dynamic settings}

In the \emph{dynamic $(1+\eps)$-approximate matching problem}, the task is to maintain a $(1+\eps)$-approximate matching while the graph undergoes edge updates.
We will focus on the \emph{fully dynamic setting} where the graph undergoes both edge insertions and deletions over time.
The edge updates are given online: each update must be processed before the next update is given.
We denote by $G = (V, E)$ the input graph, $n = |V|$, and $m$ the maximum number of edges in $G$ as it undergoes edge updates.
We say a dynamic matching algorithm has amortized update time $T$ if the algorithm's running time on the first $i$ edge updates is at most $i \cdot T$.
Our approach uses the pre-processing time of $n \poly \log n$.
A dynamic $(1+\eps)$-approximate matching algorithm is said to succeed with probability $p$ if it maintains a matching $M$ satisfying the following:
After any fixed edge update in the update sequence, $M$ is a $(1+\eps)$-approximate maximum matching with probability at least $p$.

\subsection{A framework for dynamic $(1+\eps)$-approximate matching}
Recent works \cite{BhattacharyaKS23,BG24,assadi2024improved} have shown that the fully dynamic $(1+\eps)$-matching problem reduces to the following problem with a specific set of parameters.

\begin{minipage}{0.95\linewidth}
    \begin{mdframed}[backgroundcolor=gray!15, linecolor=red!40!black]
    \begin{problem} \label{prob:framework}
    The problem is parameterized by a positive integer $q \geq 1$ and reals $\lambda,\delta,\alpha \in (0,1)$.
    
    \medskip
    
    \emph{\textbf{Input:}} a fully dynamic $n$-vertex graph $G=(V,E)$ that starts empty, i.e., has $E = \emptyset$, and throughout, never has more than $m$ edges, nor receives more than $\poly(n)$ updates in total. 
	
	\medskip
	
	\emph{\textbf{Updates:}}  The updates to $G$ happen in \emph{\textbf{chunks}} $C_1,C_2,\ldots$, each consisting of exactly $\alpha \cdot n$ edge insertions or deletions in $G$. 

    \medskip
    
	\emph{\textbf{Queries:}} After each chunk, there will be at most $q$ queries, coming one at a time and in an adaptive manner (based on the answer to all prior 
	queries including the ones in this chunk).  Each query is a set $S \subseteq V$ of vertices;
    the algorithm should respond to the query with the guarantee specified in \cref{def:Aweak};
    that is, it returns either a matching in $G[S]$ of size at least $\lambda \cdot \delta n$ or $\bot$;
    furthermore, if $\mu(G[S]) \geq \delta n$, the algorithm does not return $\bot$.
	
	For ease of reference, we list the parameters of this problem and their definitions: 
	\begin{align*}
		n &: \text{number of vertices in the graph}; \\
		m &: \text{maximum number of edges at any point present in the graph}; \\
		q &: \text{number of adaptive queries made after each chunk}; \\
		\lambda &: \text{approximation ratio of the returned matching for each query}; \\
		\delta &: \text{a lower bound on the fraction of vertices matched in the subgraph of $G$ for the query}; \\
		\alpha &: \text{a parameter for determining the size of each chunk as a function of $n$}.
	\end{align*}
	
	\noindent
	For technical reasons, we allow additional updates, called \emph{\textbf{empty updates}} to also appear in the chunks but these ``updates'' do not change any edge of the graph, although they will be 
	counted toward the number of updates in their chunks\footnote{This is used for simplifying the exposition when solving this problem recursively; these empty updates will still be counted when computing the amortized runtime of these recursive algorithms.}. 
    \end{problem}
    \end{mdframed}
\end{minipage}

\noindent
It is known that dynamic $(1+\eps)$-approximate matching reduces to $\poly(\log n / \eps)$ instances of \cref{prob:framework} parameterized by any $\lambda \in (0, 1]$, $q = (1/(\lambda\cdot\eps))^{O(1/(\lambda\cdot\eps))}$, $\delta = (\lambda\cdot\eps)^{O(1/(\lambda\cdot\eps))}$, and $\alp = \eps^2$ \cite{BhattacharyaKS23,BG24,assadi2024improved}.
This result provides a framework for solving dynamic $(1+\eps)$-approximate matching, showing that it reduces to implementing $\Aweak$ for a fully dynamic graph.
In the following, we show a reduction with improved parameters $q = 1/\lambda \cdot \peps$, $\delta = \poly(\eps)$.
All other parameters and the number of instances of \cref{prob:framework} remain the same.

We say an algorithm for \cref{prob:framework} has amortized update time $T$ if the algorithm's running time is at most $i \cdot T$ for answering all queries associated with the first $i$ edge updates.
We allow this algorithm to have $n \poly \log n$ preprocessing time.

\begin{theorem} \label{thm:framework}
Let $\eps \in (0, \frac{1}{4}]$ be a parameter.
There exist polynomials $f(\eps)$ and $g(\eps)$ of $\eps$ such that the following holds for all $\lambda \in (0, 1]$.
Suppose that there is an algorithm $\bbA$ for \cref{prob:framework} parameterized by $\lambda$ and $q = \frac{1}{\lambda \cdot f(\eps)}, \delta = g(\eps), \alp = \eps^2$, and with high probability, $\bbA$ takes $\cT(n,m,q,\lambda,\delta,\alpha)$ amortized time to answer all queries.
Then, there is an algorithm that, with high probability, maintains a $(1+\eps)$-approximate matching in a $n$-vertex fully dynamic graph with $\cT(n,m,q,\lambda,\delta,\alpha) \cdot \poly(\log(n) / \eps)$ amortized update time.
\end{theorem}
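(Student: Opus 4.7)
The plan is to combine the chunk-based reduction framework established in prior work \cite{BhattacharyaKS23,BG24,assadi2024improved} with our new static boosting framework (\cref{thm:framework-2}) as the internal subroutine. At a high level, the reduction from dynamic $(1+\eps)$-matching to \cref{prob:framework} consists of two parts: an \emph{outer wrapper} that handles graphs with small matching sizes by maintaining $O(\log n / \eps)$ ``scales'' via random vertex sampling (so that at the relevant scale $\mu(G)$ is a constant fraction of the sampled vertex set), and an \emph{inner engine} that, for each scale, periodically recomputes a $(1+\eps/2)$-approximate maximum matching by calling a weak oracle matching subroutine. Prior works used the McGregor-style boosting as the inner engine, which requires $q = (1/(\lambda\eps))^{O(1/(\lambda\eps))}$ oracle calls with threshold $\delta = (\lambda\eps)^{O(1/(\lambda\eps))}$. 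Our task is to replace only this inner engine, keeping the outer wrapper untouched, and track how the new parameters propagate.

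First, I would recall the outer structure: updates are partitioned into chunks of size $\alpha n = \eps^2 n$, and at the end of each chunk the inner engine is invoked to compute a $(1+\eps/2)$-approximate matching for the current graph. Because each chunk inserts or deletes at most $\eps^2 n$ edges, a standard argument shows such a matching remains $(1+\eps)$-approximate throughout the next chunk. The queries posed to \cref{prob:framework} are exactly the vertex subsets $S$ that the inner engine wants to invoke $\Aweak$ on, and the responses are given according to \cref{def:Aweak}. Thus the number $q$ of adaptive queries per chunk and the matching threshold $\delta$ are precisely the number of $\Aweak$-calls and the $\delta$-parameter used by the inner boosting engine.

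Second, I would plug in \cref{thm:framework-2} as the inner engine. That theorem states that for any graph with $\mu(G) = \Omega(\eps n)$, a $(1+\eps)$-approximate maximum matching can be computed using $\peps \cdot \frac{1}{\lambda}$ invocations of $\Aweak$ with $\delta = \poly(\eps)$, together with $\peps \cdot n$ processing time per invocation. Therefore we can take $f(\eps)$ and $g(\eps)$ to be the polynomials arising from \cref{thm:framework-2}, yielding $q = \frac{1}{\lambda \cdot f(\eps)}$ and $\delta = g(\eps)$ as demanded by the theorem. Per chunk, the total work outside $\bbA$ is $q \cdot \poly(\eps) \cdot n = \poly(1/\eps)/\lambda \cdot n$, which divided over $\alpha n = \eps^2 n$ updates gives $\poly(1/\eps)$ amortized time per update, absorbed into the $\poly(\log n /\eps)$ slack factor. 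The bound on $\bbA$'s total time to answer all queries in a chunk is $q \cdot \cT$ which similarly divides into $\cT \cdot \poly(1/\eps)/n$ amortized per update; multiplying by the $\poly(\log n / \eps)$ factor from the outer wrapper yields the stated amortized update time.

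The main obstacle is verifying that the small-matching regime is handled correctly: \cref{thm:framework-2} requires $\mu(G) = \Omega(\eps n)$, whereas a dynamic graph may have arbitrarily small matchings. This is handled by the standard ``scale'' mechanism from \cite{BhattacharyaKS23,assadi2024improved}: one runs $O(\log n)$ parallel copies of the reduction, each on a random vertex subset $U$ of a fixed size $2^i$. With high probability, at the scale $i$ where $\mu(G) \approx 2^i$ the subsampled subgraph $G[U]$ satisfies $\mu(G[U]) = \Omega(\eps |U|)$, putting us in the regime where \cref{thm:framework-2} applies. This is precisely why the theorem's statement allows a $\poly(\log n / \eps)$ overhead: it is the $O(\log n)$ scales times the $\poly(1/\eps)$ blow-up above. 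A secondary point to verify is that the adaptive query pattern inside \cref{thm:framework-2} fits the adaptive-query model of \cref{prob:framework}: each subset $S$ is chosen based on prior responses, which is exactly the adaptivity allowed. Once these two points are checked, assembling the argument is routine bookkeeping of parameters.
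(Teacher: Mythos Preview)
Your proposal is correct and follows essentially the same approach as the paper: the paper's proof simply states that it follows by taking the reduction of \cite[Theorem~1]{assadi2024improved} and replacing their inner boosting subroutine (\cite[Proposition~2.2]{assadi2024improved}) with \cref{thm:framework-2}, which is exactly the substitution you describe and justify in detail. Your write-up is in fact considerably more explicit than the paper's one-sentence proof; the only minor slip is in the accounting where you write the per-chunk query cost as $q\cdot\cT$---since $\cT$ is already the amortized \emph{per-update} cost of answering queries, the amortized contribution of $\bbA$ is simply $\cT$ times the $\poly(\log n/\eps)$ factor for the parallel scales, but this does not affect the argument.
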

\begin{proof}
The proof of this theorem is a simple adaptation of the proof of \cite[Theorem 1]{assadi2024improved}.
It directly follows by using \cref{thm:framework-2} to replace \cite[Proposition 2.2]{assadi2024improved} in their proof, so that the resulting time complexity has a polynomial dependence on $1/\eps$.



\end{proof}

\subsection{Dynamic matching via ordered \rs graphs}
\cite{assadi2024improved} had a dynamic $(1+\eps)$-approximate matching algorithm based on its connection with \emph{ordered \rs (ORS) graphs}.
Their result is as follows.

\begin{definition}[Ordered \rs Graphs \cite{BG24}]\label{def:ORS}
	A graph $G=(V,E)$ is called an \emph{$(r,t)$-ORS graph} if its edges can be partitioned into an \emph{ordered set} of $t$ matchings $M_1, M_2, \dots, M_t$ each of size $r$, such that for every $i \in \{1, 2, \dots, t\}$, 
	the matching $M_i$ is an induced matching in the subgraph of $G$ on $M_{i} \cup M_{i+1} \cup \dots \cup M_t$. 

	We define $\ORS{n}{r}$ as the largest choice of $t$ such that an $n$-vertex $(r,t)$-ORS graph exists. 
\end{definition}

\begin{lemma}[\cite{assadi2024improved}]\label{lem:AKS}
There exists an absolute constant $c \geq 1$ such that the following holds.
For any $k \geq 1$, there is an algorithm $\bbA_k(n,m,q,\lambda,\delta,\alpha)$ for \cref{prob:framework} that with high probability takes	
	\[
		O \left( (2q)^{k-1} \cdot \left(\frac{m}{n} \right)^{1/(k+1)} \cdot  \ORS{n}{\lambda \cdot \delta n/2}^{1-1/(k+1)} \cdot n^{6\lambda} \cdot (\log{(n)}/\delta)^{c} \right),
	\]
amortized time over the updates to answer all given queries.  The algorithm works as long as $\lambda < (1/12)^{k}$ and $\alpha \geq \lambda \cdot \delta$. 
\end{lemma}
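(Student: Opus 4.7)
The plan is to prove \cref{lem:AKS} by induction on $k$, following the blueprint of \cite{assadi2024improved}. The base case $\bbA_1$ is built directly, while for $k \geq 2$, $\bbA_k$ adds one outer round of vertex sampling that reduces each query into $O(q)$ queries against an instance of $\bbA_{k-1}$.

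\emph{Base case.} For $k=1$ the target bound reduces to $O((m/n)^{1/2} \cdot \ORS{n}{\lambda\delta n/2}^{1/2} \cdot n^{6\lambda} \cdot (\log n/\delta)^c)$. I would maintain the dynamic edge set of $G$ explicitly and answer a query $S$ by an offline matching routine run on a sampled subset of edges incident to $S$. The key combinatorial input is that a sequence of ``large induced matchings'' inside different $G[S]$'s across queries must embed an ordered Ruzsa--Szemer\'edi structure in $G$; hence, across the whole update stream, the number of chunks on which the expensive offline branch fires is bounded by a function of $\ORS{n}{\lambda\delta n/2}$. Balancing the sampling rate inside a single query against this total-work budget yields the exponents $(m/n)^{1/2}$ and $\ORS{n}{\lambda\delta n/2}^{1/2}$.

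\emph{Inductive step.} Given $\bbA_{k-1}$, I would build $\bbA_k$ by maintaining a random vertex subset $U \subseteq V$ of size $pn$ for a carefully chosen sampling rate $p$, and recursively running $\bbA_{k-1}$ on the fully dynamic induced subgraph $G[U]$. To answer a query $S$, first invoke $\bbA_{k-1}$ on $S \cap U$; if it returns a matching that, after rescaling by $1/p$, certifies a witness of size at least $\lambda\cdot\delta n$ in $G[S]$, return it; otherwise, consume additional queries from the budget $q$ (e.g., by resampling $U$ or querying refined subsets) until either success or a certified $\bot$ is obtained. Each outer query expands into at most $2q$ inner queries per recursion level, which compounds to the $(2q)^{k-1}$ factor after $k-1$ levels. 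The exponents $1/(k+1)$ on $m/n$ and $1-1/(k+1)$ on $\ORS{n}{\lambda\delta n/2}$ arise from choosing $p$ to balance the amortized update cost of $\bbA_{k-1}$ on $G[U]$ (which scales with $p\cdot m/n$) against its per-query work, then recursing.

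\emph{Main obstacle and loose ends.} The delicate point is adaptivity of queries: each $S$ may depend on all prior responses, so the random sample $U$ cannot be treated as oblivious to the adversary. The standard remedy is a union bound over a polynomially sized net of distinct adversarial histories (encoded by the sequence of returned matchings), which absorbs the $n^{6\lambda}$ overhead. A related subtlety is approximation degradation across recursion levels: each level shrinks the effective approximation guarantee by a constant, which forces the assumption $\lambda < (1/12)^k$ so that the deepest invocation of $\bbA_{1}$ still returns a matching of the required size. Finally, the constraint $\alpha \geq \lambda\cdot\delta$ ensures that the chunks are long enough for the preprocessing and update costs of the recursive instance on $G[U]$ to be properly amortized against $\alpha n$ updates per chunk.
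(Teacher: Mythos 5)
You have set out to prove a statement that this paper does not prove: \cref{lem:AKS} is quoted verbatim from \cite{assadi2024improved} and used as a black box (its only role here is to be combined with \cref{thm:framework} to obtain \cref{thm:AKS-dynamic}). So there is no in-paper proof to compare your argument against, and the right question is whether your sketch would actually reconstruct the cited result.

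As a reconstruction it has the right architecture --- a recursive family $\bbA_k$ built from $\bbA_{k-1}$ via subsampling, a base case whose cost is charged against an ordered \rs structure, a per-level query blow-up accounting for the $(2q)^{k-1}$ factor, and a balancing argument for the $1/(k+1)$ exponents --- but it is a plan, not a proof. The two load-bearing quantitative steps are asserted rather than derived: (i) the claim that a sequence of large induced matchings arising across adaptively chosen queries $G[S]$ embeds an \emph{ordered} \rs structure of the right parameters (this is exactly where the ordering condition in \cref{def:ORS} must be verified against the temporal order of chunks, and where the factor $\lambda\delta n/2$ in $\ORS{n}{\lambda\delta n/2}$ has to come from); and (ii) the choice of the sampling rate $p$ that produces precisely $(m/n)^{1/(k+1)}$ and $\ORS{n}{\lambda\delta n/2}^{1-1/(k+1)}$, which requires writing down and solving the recurrence for the amortized cost, not just gesturing at a balance. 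In addition, your attribution of the $n^{6\lambda}$ overhead to a union bound over adversarial histories is a guess that does not obviously square with the lemma: that factor must remain subpolynomial precisely when $\lambda < (1/12)^k$ is tiny, which points to it arising from the epoch/rebuild schedule or recursion branching of the algorithm itself rather than from a probabilistic net over adaptive queries. If you intend to include a proof, you should either import it by citation (as the paper does) or carry out steps (i) and (ii) in full; as written, the sketch cannot be checked.
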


We obtain the following by combining \cref{thm:framework,lem:AKS}.

\begin{theorem}\label{thm:AKS-dynamic}
	Let $\eps \in (0,1/4)$ be a given parameter, $k \geq 1$ be any integer.
    There exists an algorithm for maintaining a $(1+\eps)$-approximate maximum matching
	in a fully dynamic $n$-vertex graph that starts empty with amortized update time of
	\[
         O \left( n^{1/(k+1)} \cdot  \ORS{n}{\frac{\poly \eps}{15^k} \cdot n}^{1-1/(k+1)} \cdot n^{10/15^k}\right) \cdot \eps^{-O(k)}
	\]	
	The guarantees of this algorithm hold with high probability even against an adaptive adversary. 
\end{theorem}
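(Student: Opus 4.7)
The plan is to instantiate Theorem~\ref{thm:framework} with the algorithm $\bbA_k$ from Lemma~\ref{lem:AKS} as its subroutine for Problem~\ref{prob:framework}, then simplify the resulting time bound.

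First I would fix $\lambda = \Theta(1/15^k)$. This simultaneously satisfies the hypothesis $\lambda < (1/12)^k$ of Lemma~\ref{lem:AKS} (since $1/15 < 1/12$) and, together with $\delta = g(\eps) = \poly(\eps)$ from Theorem~\ref{thm:framework}, the side condition $\alpha \geq \lambda\delta$ with $\alpha = \eps^2$ (which simply asks $\eps^2 \geq g(\eps)/15^k$, true for any $k \geq 1$ if $g$ is chosen with a large enough degree). Theorem~\ref{thm:framework} then fixes the remaining parameters of Problem~\ref{prob:framework} to $q = 1/(\lambda f(\eps)) = O(15^k / \poly(\eps))$ and $\delta = \poly(\eps)$. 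With these choices Lemma~\ref{lem:AKS} is applicable and yields the required algorithm $\bbA$ for the framework.

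Next I would substitute these parameters into the amortized time bound of Lemma~\ref{lem:AKS}:
\[
\cT = O\!\left( (2q)^{k-1} \cdot (m/n)^{1/(k+1)} \cdot \ORS{n}{\lambda\delta n/2}^{1 - 1/(k+1)} \cdot n^{6\lambda} \cdot (\log n/\delta)^c \right).
\]
Using $m \le n^2$ gives $(m/n)^{1/(k+1)} \le n^{1/(k+1)}$; the choices of $\lambda$ and $\delta$ make $\lambda\delta/2 = \poly(\eps)/15^k$, matching the argument of $\ORS$ in the target bound; $n^{6\lambda} = n^{O(1/15^k)}$, which I absorb into the factor $n^{10/15^k}$ after accounting for constants; $(\log n/\delta)^c = \poly(\log n) \cdot \eps^{-O(1)}$; and $(2q)^{k-1} = (15^k/\poly(\eps))^{O(k)}$ is folded into the $\eps^{-O(k)}$ factor (which is valid in the regime of interest where $k = O(\log(1/\eps))$, so that $15^{O(k^2)} = \eps^{-O(k)}$). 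Finally, Theorem~\ref{thm:framework} introduces an additional $\poly(\log n/\eps)$ multiplicative overhead on $\cT$, yielding precisely the form claimed. The adaptive-adversary and high-probability guarantees are inherited: Theorem~\ref{thm:framework} uses $\bbA$ only as a black box on adaptively chosen vertex subsets, and Lemma~\ref{lem:AKS} already tolerates adaptive adversaries.

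The main obstacle is the careful balancing of parameters rather than any new technical insight. The value $\lambda = \Theta(1/15^k)$ is essentially forced: it must be small enough to satisfy $\lambda < (1/12)^k$ (pushing $\lambda$ down by a geometric factor in $k$) yet compatible with $\alpha = \eps^2$, which is dictated by the reduction of Theorem~\ref{thm:framework} and cannot be altered. This forces both the $n^{6\lambda}$ term to take the form $n^{O(1/15^k)}$ and the $\ORS$ parameter to scale as $\poly(\eps)/15^k$, explaining the precise exponents in the target bound; the looseness between $6\lambda$ and the advertised $10/15^k$ is where the $\poly(\log n)$ overhead from Theorem~\ref{thm:framework} is silently absorbed.
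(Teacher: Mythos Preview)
Your approach is essentially identical to the paper's: set $\lambda = 1/15^k$, substitute into Lemma~\ref{lem:AKS}, bound $m/n \le n$, and absorb the polylog factors (both the $(\log n/\delta)^c$ from the lemma and the $\poly(\log n/\eps)$ overhead of Theorem~\ref{thm:framework}) into the slack $n^{10/15^k - 6/15^k}$. One minor divergence: the paper folds $(2 \cdot 15^k/\eps^a)^{k-1}$ directly into $\eps^{-O(k)}$ without your extra hypothesis $k = O(\log(1/\eps))$---strictly speaking this only works if $k$ is regarded as a fixed constant (so that $15^{k(k-1)}$ hides in the big-$O$), which is consistent with the theorem's ``any integer $k \ge 1$'' but renders your added regime assumption unnecessary.
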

\begin{proof}
Let $f(\eps), g(\eps)$ be polynomials defined in \cref{thm:framework}.
Let $a, b > 0$ be constants such that $f(\eps) \geq \eps^a$ and $g(\eps) \geq \eps^b$ for $\eps \leq \frac{1}{4}$.
We choose the parameters
\begin{align*}
    q = \frac{1}{\lambda f(\eps)} \leq \frac{1}{\lambda \eps^a}, && \lambda = \frac{1}{15^k}, && \delta = g(\eps) \geq \eps^b, && \alp = \eps^2.
\end{align*}
(As required in \cref{lem:AKS}, $\lambda \leq \frac{1}{12^k}$ and $\alp \geq \lambda \delta$ because $\delta = \eps^{107}$ in our proof of \cref{thm:framework-2}.)
By \cref{lem:AKS}, there is an algorithm for \cref{prob:framework} with the set of parameters running in an amortized update time of
\begin{align*}
    &O \left( (2q)^{k-1} \cdot \left(\frac{m}{n} \right)^{1/(k+1)} \cdot  \ORS{n}{\lambda \cdot \delta n/2}^{1-1/(k+1)} \cdot n^{6\lambda} \cdot (\log{(n)}/\delta)^{c} \right) \\ 
    =~&  O \left( \paren{\frac{2 \cdot 15^k}{\eps^a}}^{k-1} \cdot n^{1/(k+1)} \cdot  \ORS{n}{\frac{2\eps^b}{15^k}}^{1-1/(k+1)} \cdot n^{6/15^k} \cdot (\log{(n)}/\eps^b)^{c} \right), \\
    =~&  O \left( n^{1/(k+1)} \cdot  \ORS{n}{\frac{\poly \eps}{15^k} \cdot n}^{1-1/(k+1)} \cdot n^{10/15^k}\right) \cdot \eps^{-O(k)},
\end{align*}
where we use the fact that $m / n \leq n$ and $\log^c n \leq n^{4/15^k}$.
This completes the proof.
\end{proof}

\cref{thm:AKS-dynamic} improves \cite{assadi2024improved}'s result by reducing the dependence in time complexity on $\eps$, from $\eps^{-O(k/\eps)}$ to $\eps^{-O(k)}$.
The dependence is polynomial for any fixed $k$.
Also, the dependence in the $\ORS{\cdot}{\cdot}$ term is improved from $\ORS{n}{\frac{1}{15^k} \cdot \eps^{O(1/\eps)} \cdot n}$ to $\ORS{n}{\frac{1}{15^k} \cdot \poly(\eps) \cdot n}$.




\subsection{Dynamic approximate matching via online matrix-vector multiplication}

\cite{Liu24} showed new algorithms and hardness results for dynamic $(1+\eps)$-matching when the input graph is bipartite.
The following shows that their results can be extended to general graphs using the new framework.
Liu's results are based on the connection between dynamic bipartite matching and the online matrix-vector problem ($\OMv$), which we present as follows.

\begin{definition}[$\OMv$ problem]
\label{def:omv}
In the $\OMv$ problem, an algorithm is given a Boolean matrix $M \in \{0,1\}^{n \times n}$. After preprocessing, the algorithm receives an online sequence of query vectors $v^{(1)}, \dots, v^{(n)} \in \{0,1\}^n$. After receiving $v^{(i)}$, the algorithm must respond the vector $Mv^{(i)}$.
\end{definition}

\begin{definition}[Dynamic approximate $\OMv$]
\label{def:dyn-approx-omv}
In the $(1-\lambda)$-approximate dynamic $\OMv$ problem, an algorithm is given a matrix $M \in \{0, 1\}^{n \times n}$, initially $0$. Then, it responds to the following:
\begin{itemize}
    \item $\textsc{Update}(i, j, b)$: set $M_{ij} = b$.
    \item $\textsc{Query}(v)$: output a vector $w \in \{0,1\}^n$ with $d(Mv, w) \le \lambda n$.
\end{itemize}
\end{definition}

Based on our framework (\cref{prob:framework,thm:framework}), to obtain algorithms for dynamic $(1+\eps)$-approximate matching, it suffices to describe how to handle the updates and queries in \cref{prob:framework}.
In our algorithms in this section, we handle the queries by implementing $\Aweak$ (\cref{def:Aweak}), in which the parameter $\lambda$ is fixed as a constant.

\subsubsection{Connection between dynamic approximate $\OMv$ and dynamic approximate matching} \label{sec:Liu-subsection}

This section aims to extend the following theorem to general graphs.

\begin{theorem}[{\cite[Theorem 2]{Liu24}}] \label{thm:equiv-bip}
There is an algorithm solving dynamic $(1-\lambda)$-approximate $\OMv$ with $\lambda = n^{-\sigma}$ with amortized $n^{1-\sigma}$ for \textsc{Update}, and $n^{2-\sigma}$ time for \textsc{Query}, for some $\sigma > 0$ against adaptive adversaries, if and only if there is a randomized algorithm that maintains a $(1-\eps)$-approximate dynamic matching with amortized time $n^{1-c} \peps$ in a bipartite graph, for some $c > 0$ against adaptive adversaries.
\end{theorem}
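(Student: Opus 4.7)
The plan is to prove both directions of the equivalence separately, as a double reduction, and to ensure that in both directions the adaptive-adversary guarantees are preserved. Throughout, I will use \cref{thm:framework} (the dynamic matching framework) to reduce bipartite $(1-\eps)$-matching to implementing the oracle $\Aweak$ of \cref{def:Aweak}, so that the matching-side hardness essentially reduces to implementing $\Aweak$ on a dynamic bipartite graph.

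\textbf{OMv $\Rightarrow$ bipartite matching.} For the forward direction, I would maintain the $n\times n$ Boolean adjacency matrix $M$ of the bipartite graph $G = (L \cup R, E)$ as the matrix of the approximate dynamic $\OMv$ data structure. Each edge insertion/deletion is a single \textsc{Update} call, costing amortized $n^{1-\sigma}$. To implement $\Aweak$ on a query set $S \subseteq L \cup R$ with parameter $\delta = \poly(\eps)$, I would iteratively find edges in $G[S]$: take the indicator vector $v$ of $S \cap R$, compute $w = Mv$ (approximate) in $n^{2-\sigma}$ time; for every $i \in S \cap L$ with $w_i > 0$, I know $i$ has a neighbor in $S \cap R$, and I can identify one such neighbor by binary search using $O(\log n)$ further \textsc{Query} calls where I feed indicator vectors of halves of $S \cap R$. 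Each round of binary search produces one candidate edge; after committing to a matching edge I remove both endpoints from $S$ and repeat. The approximate $\OMv$ introduces error $\lambda n = n^{1-\sigma}$ in each $w$, but since $\delta n \gg n^{1-\sigma}$ for $\delta = \poly(\eps)$ and $n$ large, enough entries are reported correctly to build a matching of size $\lambda' \delta n$. Plugging this implementation of $\Aweak$ into \cref{thm:framework} (using $q = \poly(\eps^{-1})$ queries per chunk of $\Theta(\eps^2 n)$ updates) yields amortized update time $n^{1-\Omega(\sigma)} \peps \cdot \polylog n$ on the bipartite matching side, giving $c = \Omega(\sigma)$.

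\textbf{Bipartite matching $\Rightarrow$ OMv.} For the reverse direction, I would encode the matrix $M \in \{0,1\}^{n\times n}$ as a bipartite graph $G$ on $L \cup R$ with $L = \{\text{rows}\}$, $R = \{\text{columns}\}$, and an edge $\{i,j\}$ iff $M_{ij} = 1$; each entry update of $M$ translates to one edge update in $G$. Given a query vector $v \in \{0,1\}^n$, I need to output some $w$ with Hamming distance at most $\lambda n$ from $Mv$. The idea is that $(Mv)_i > 0$ iff row $i$ has a neighbor in the column-set $T = \{j : v_j = 1\}$. Using the dynamic $(1-\eps)$-matching algorithm on the subgraph induced by $L \cup T$ (obtained, as in prior reductions, by temporarily deleting edges outside $L \times T$ and undoing after the query, or by using a small collection of maintained graphs over a laminar family of column subsets), I can read off the matched rows in $L$ and thereby identify $\Omega(1-\eps)$-fraction of the rows with a neighbor in $T$. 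Iterating $O(\log n / \eps)$ times, each time peeling off matched rows, recovers all but $\eps n = n^{1-\sigma}$ entries of $Mv$, matching the $\OMv$ approximation guarantee. The per-query cost is $\polylog n \cdot n^{1-c}\peps \cdot n = n^{2-c}\peps \cdot \polylog n$, so setting $\eps = n^{-\sigma}$ with $\sigma$ a sufficiently small positive constant (depending on $c$) gives the required $n^{2-\sigma}$ query bound, and amortized $n^{1-\sigma}$ per update.

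\textbf{Main obstacle.} The delicate point is twofold. First, the parameter translation has to be worked out so that $\sigma$ and $c$ stay positive simultaneously: the $\OMv$ error $\lambda n = n^{1-\sigma}$ must be below $\delta n$ in the forward direction, and the $(1-\eps)$ matching error $\eps n$ must be absorbed into the $\lambda n$ slack in the reverse direction while keeping the per-query matching work below $n^{2-\sigma}$. Second, and more subtly, the reductions must preserve robustness against an adaptive adversary: in the forward direction the binary-search queries are adaptive with respect to prior \textsc{Query} answers, so the $\OMv$ algorithm must be adaptive-adversary robust, and in the reverse direction the subgraph restrictions imposed per $\OMv$ query (deleting and re-inserting edges outside the column-set $T$) are themselves adversarial dynamic updates that the matching algorithm sees; one must either batch them within a single chunk of the framework of \cref{prob:framework} or use an explicit vertex-subset oracle. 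Handling these two adaptivity-preserving reductions carefully, while keeping the parameter slack positive, is the technical heart of the proof.
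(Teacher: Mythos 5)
This statement is not proved in the paper at all: it is imported verbatim as \cite[Theorem 2]{Liu24}, and the surrounding machinery the paper actually uses (the induced-subgraph matching oracle from OMv, \cref{lem:bip-Aweak}, and the batched version \cref{lem:bip-parallel-Aweak}) is likewise cited rather than reproved. So your attempt is necessarily a from-scratch reconstruction, and unfortunately both directions as written have genuine gaps.

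In the forward direction, your implementation of $\Aweak$ finds matching edges one at a time: each edge costs $O(\log n)$ OMv queries at $n^{2-\sigma}$ each, and you need $\Omega(\delta n) = \Omega(\poly(\eps)\, n)$ edges, for a total of $\Omega(n^{3-\sigma}\poly(\eps)\log n)$ per $\Aweak$ call. Amortized over a chunk of $\Theta(\eps^2 n)$ updates this is $n^{2-\sigma}\poly(1/\eps)$ per update, which is far above the target $n^{1-c}$; note also that the binary searches cannot be naively parallelized across rows with shared query vectors (a row with several neighbors cannot be steered by a query that is the union of many rows' candidate halves), which is precisely why \cite[Lemma 2.12]{Liu24} needs a more careful batched construction achieving $\tO(n^2/D + Dn\cT)$. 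In the reverse direction, the peeling scheme is incorrect: a $(1-\eps)$-approximate matching in $G[L\cup T]$ identifies at most $\mu(G[L\cup T])$ rows per iteration, but the number of rows with a neighbor in $T$ can be as large as $n$ while $\mu(G[L\cup T])=1$ (a star centered at a single column of $T$). After $O(\log n/\eps)$ iterations you recover only $O(\log n/\eps)$ of up to $n$ ones of $Mv$, so the output can be at Hamming distance $\Theta(n)$ from $Mv$, not $\lambda n$. Recovering the support of $Mv$ from matchings requires an additional gadget (e.g.\ boosting the capacity of the queried columns so that the maximum matching saturates every row with a neighbor in $T$), together with a way to realize that gadget with few edge updates --- this is the substantive content of Liu's proof that the sketch does not supply. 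Your discussion of adaptivity and of the parameter bookkeeping is sensible, but it does not rescue these two core steps.
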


Since dynamic bipartite matching is a special case of dynamic matching, the if-direction of \cref{thm:equiv-bip} holds even on general graphs.
To generalize the only-if direction, we present an algorithm for dynamic $(1+\eps)$-approximate matching that assumes access to a dynamic $(1 - \lambda)$-approximate $\OMv$ algorithm.

\paragraph{Algorithm description.}
Assume that there is an algorithm $\AOMv$ for $(1 - \lambda)$-approximate dynamic matrix-vector for some $\lambda = n^{-\sigma}$ with update time $\cT = n^{1 - \sigma}$ and query time $n \cT = n^{2-\sigma}$.
Based on our framework (\cref{prob:framework} and \cref{thm:framework}), it suffices to describe how to handle the update chunks and queries.
In addition to $G$, we maintain its corresponding graph $B$ as defined in \cref{def:B}.
We remark that $B$ is used here for a different purpose, which is unrelated to our semi-streaming algorithm.
For a vertex subset $S \subseteq V$, let $S^+$ (resp. $S^-$) stands for $\{v^+ \mid v \in S\}$ (resp. $\{v^- \mid v \in S\}$).
The graph $B$ satisfies the following.

\begin{lemma} \label{lem:B}
For any vertex subset $S \subseteq V(G)$, we have $\mu(G[S]) \leq \mu(B[S^+ \cup S^-])$. In addition, any matching $M$ in $B[S^+ \cup S^-]$ can be transformed in $O(n)$ time into a matching in $G[S]$ of size at least $|M| / 6$.
\end{lemma}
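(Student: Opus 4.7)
The plan is to prove the two claims separately, both via direct constructions.

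For $\mu(G[S]) \leq \mu(B[S^+ \cup S^-])$, I would take an arbitrary maximum matching $M^* \subseteq E(G[S])$, orient each of its edges arbitrarily, and lift an oriented edge $(u,v)$ to the edge $\{u^+, v^-\}$ of $B$. Because $u, v \in S$ we have $u^+, v^- \in S^+ \cup S^-$, and $\{u^+, v^-\} \in E(B)$ by \cref{def:B}. Since $M^*$ is a matching in $G[S]$, every vertex of $S$ is incident to at most one edge of $M^*$, so in the lifted collection each $u^+$ and each $v^-$ is used at most once, producing a matching in $B[S^+ \cup S^-]$ of size $|M^*| = \mu(G[S])$.

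For the $O(n)$-time transformation, given a matching $M$ in $B[S^+ \cup S^-]$ I would form the multiset $M' := \{\{u,v\} \mid \{u^+, v^-\} \in M\}$ of edges in $G[S]$. The key structural observation is that every $u \in S$ is incident to at most two edges of $M'$: one contributed by $u^+$ being matched in $M$, one by $u^-$; pairs of parallel edges in $M'$ occur exactly when both $\{u^+, v^-\}$ and $\{v^+, u^-\}$ lie in $M$. Hence $M'$ is a multigraph on $S$ of maximum degree $2$ and therefore decomposes into a vertex-disjoint union of paths and cycles. I would then extract a matching component by component via a trivial greedy scan: each added edge blocks at most two others through its two endpoints of degree $\leq 2$, so the resulting simple matching in $G[S]$ has size at least $|M|/3 \geq |M|/6$. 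The whole transformation visits each edge of $M$ a constant number of times, so it runs in $O(|M|) = O(n)$ time.

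I do not anticipate any real obstacle; the argument is essentially a book-keeping exercise around the observation that splitting each vertex of $B$ into $v^+$ and $v^-$ at most doubles the structural flexibility of matchings relative to $G$. The constant $6$ in the conclusion is therefore comfortably slack -- in fact the argument above yields the sharper factor $1/3$, but $1/6$ is all that is required for the downstream application.
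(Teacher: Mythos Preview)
Your proposal is correct and follows essentially the same approach as the paper: project the matching $M$ in $B$ to a degree-$2$ (multi)graph on $S$, decompose into paths and cycles, and greedily extract a matching. The only difference is that the paper first collapses the two possible copies of $\{u,v\}$ into a simple edge set $X$ (losing a factor of $2$, so $|X|\ge |M|/2$) before taking a $1/3$-fraction matching, which is why it lands exactly at $|M|/6$; your multiset treatment avoids that loss and legitimately yields the sharper $|M|/3$ bound you note.
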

\begin{proof}
For any matching $M$ in $G$, the edge set $\{(u^+, v^-) \mid (u, v) \in M\}$ is a matching in $B$.
Hence, $\mu(B) \geq \mu(G)$.

Let $M_B$ be a matching in $B$.
We can construct a subset of edges $X = \{(u,v) \mid (u^+, v^-) \in M_B\} \cup \{(u,v) \mid (u^-, v^+) \in M_B\}$ in $G$.
Note that every vertex has degree at most $2$ in $X$;
that is, each connected component is either a path or a cycle.
We can construct a matching $M_X$ by picking every other edge in the paths and cycles formed by $X$.
It is not hard to see that $|M_X| \geq |X| / 3$.
Since each edge in $X$ corresponds to at most 2 edges in $M_B$, we have $|X| \geq |M_B| / 2$.
Therefore, we can transform $M_B$ into the matching $M_X$, whose size is at least $|M_B| / 6$.
Since $M_B$ contains $O(n)$ edges, the transformation can be done in $O(n)$ time.
This completes the proof.
\end{proof}

We use $\AOMv$ to maintain the adjacency matrix $N_B$ of $B$.
That is, $N_B$ is initially empty, and whenever there is an edge update $(u, v)$ on $G$, we invoke the \textsc{update} operation to change entries for $(u^+, v^-)$ and $(v^+, u^-)$.

To handle queries, we implement $\Aweak$ using a lemma from \cite{Liu24}, which is for finding matchings in induced subgraphs of a bipartite graph.

\begin{lemma}[{\cite[Lemma 2.12]{Liu24}}] \label{lem:bip-Aweak}
Assume that we have access to an algorithm for dynamic $(1-\lambda)$-approximate $\OMv$ that maintains the adjacency matrix of a bipartite graph $B = (V_B = L \cup R, E_B)$ with query time $n\cT$. Then, there is a randomized algorithm that on vertex subsets $S_1 \subseteq  L, S_2 \subseteq R$, returns a $(2, \tO(D \lambda n))$-approximate matching of $B[S_1 \cup S_2]$ in $\tO(n^2/D + D n \cT)$ time for any parameter $D$, with high probability.
\end{lemma}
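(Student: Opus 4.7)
The plan is to exhibit a direct algorithm that makes exactly $D$ calls to the dynamic $(1-\lambda)$-approximate $\OMv$ oracle and performs $\tO(n^2/D)$ other work. First, I would randomly partition $S_2$ into $D$ buckets $B_1,\ldots,B_D$, each of size at most $\lceil n/D \rceil$. Maintain a matching $M$, initially empty. For $i = 1,\ldots,D$, submit one OMv query on the indicator vector $\mathbf{1}_{B_i \setminus V(M)}$, obtaining a Boolean vector $w^{(i)}$ that, up to at most $\lambda n$ incorrect coordinates, flags the vertices of $L$ with at least one unmatched neighbor in $B_i$. For each $u \in S_1 \setminus V(M)$ that is flagged by $w^{(i)}$, scan $B_i \setminus V(M)$ (of size $\le n/D$) using direct adjacency queries to $B$, and if an unmatched neighbor $v$ is found, add $(u,v)$ to $M$. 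Finally, return $M$.

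The running-time bookkeeping splits into the $D \cdot n\cT$ cost of the OMv queries and the aggregate cost of all scans. Each scan costs $O(n/D)$. I would charge each scan in one of three ways: a scan that actually matches a new pair is charged to that pair (at most $n$ of them); a scan that fails because every candidate unmatched neighbor in $B_i$ was already matched within the same iteration is charged to those interfering matches; and a scan that fails due to an OMv false positive is charged against the global error budget $D\lambda n$. A random partition of $S_2$ ensures that false positives spread out across buckets so the resulting extra work is absorbed into $\tO(n^2/D)$ with high probability.

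For the approximation guarantee, observe that in the hypothetical case of an exact OMv oracle the procedure produces a maximal matching of $B[S_1 \cup S_2]$, and hence a $2$-approximation of $\mu(B[S_1 \cup S_2])$. Each of the $D$ queries may mislabel at most $\lambda n$ coordinates, so per bucket at most $O(\lambda n)$ left vertices may be incorrectly skipped; summed over the $D$ buckets, the total shortfall relative to a maximal matching is $\tO(D\lambda n)$. Translating this deficit through the $2$-approximation bound yields a $(2,\tO(D\lambda n))$-approximate matching as claimed.

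The main obstacle will be the charging argument for the total scanning cost. In principle OMv false positives could repeatedly mislead the algorithm into scanning several buckets with no useful edge for a single $u$, and one must show either that the random bucketing disperses false positives across only $O(n/D)$ extra scans per vertex in expectation, or that an auxiliary data structure detects ``exhausted'' buckets so they are not re-scanned. A secondary difficulty is robustness against an adaptive adversary: because the OMv oracle is assumed to work against adaptive queries, it suffices to draw the random bucket assignment independently of the query sequence, but this interaction has to be verified carefully to justify the high-probability guarantee.
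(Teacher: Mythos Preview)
This lemma is quoted from \cite{Liu24}; the present paper does not supply its own proof (it only adds the remark that Liu's algorithm in fact returns an almost-maximal matching, from which the $(2,\tO(D\lambda n))$ bound follows). So there is no in-paper argument to compare your attempt against.

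Your plan---bucket $S_2$ into $D$ parts, issue one OMv query per bucket, then scan the bucket for every flagged unmatched left vertex---is the natural approach, and your approximation argument is essentially correct: with an exact oracle the procedure is greedy maximal matching, and the at most $D\lambda n$ false negatives across all $D$ queries can each cost at most one match, giving the $(2,\tO(D\lambda n))$ guarantee.

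The running-time argument, however, has a real gap, precisely at your ``case 2'' charging. A single match in a bucket can block arbitrarily many other left vertices, so charging a failed scan to ``those interfering matches'' gives no bound on the number of charges per match. Concretely, let $n/2$ left vertices all be adjacent to the same $D$ right vertices $v_1,\dots,v_D$, one per bucket. In every iteration all $\Theta(n)$ still-unmatched left vertices are (correctly) flagged, each one scans the bucket, and exactly one of them matches; that is $\Theta(n)$ scans per iteration and $\Theta(n^2)$ total scanning work, not $\tO(n^2/D)$. Random bucketing of $S_2$ does nothing here, since the congestion is created by the edge structure, not by where the right vertices land. (Your case-3 cost is also $\lambda n^2$, which is $\le n^2/D$ only when $D\le 1/\lambda$; that is the only interesting regime anyway, since otherwise $D\lambda n\ge n$ and the approximation guarantee is vacuous.) To close the gap you need a mechanism that caps the \emph{total} number of scans at $\tO(n)$---for instance by limiting how many buckets each left vertex may scan, or by driving the matching from the $B_i$ side so that each iteration initiates at most $|B_i|=n/D$ scans---and then arguing separately that whatever matches this foregoes still number only $\tO(D\lambda n)$.
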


\begin{remark}
The algorithm in \cite[Lemma 2.12]{Liu24} actually returns a matching $M$ in $G'[A \cup B]$ such that it is almost maximal; i.e., at most $\tO(D \lambda n)$ edges in $G'$ are not adjacent to any edge in $M$.
Since a maximal matching is a 2-approximate matching, the returned matching is also a $(2, \tO(D \lambda n))$-approximation.
\end{remark}

Our implementation of $\Aweak$ is as follows.
Recall that each query $(S, \delta)$ requires finding a matching of size $\lambda \delta n$ in $G[S]$ for some constant $\lambda$, assuming that $\mu(G[S]) \geq \delta n$.
Upon receiving a query, we invoke the algorithm in \cref{lem:bip-Aweak} on $B$ with $S_1 = S^+$, $S_2 = S^-$, and $D = n^{\sigma/2}$ to obtain a matching $M_B$.
By \cref{lem:B}, if $\mu(G[S]) \geq \delta n$, then $\mu(B[S^+ \cup S^-]) \geq \delta n$.
Hence, the size of $M_B$ is at least
\[
\mu(B[S^+ \cup S^-]) / 2 - D \lambda n \geq \delta n / 2 - n^{1 - \sigma/2} \geq \delta n / 4,
\]
where the last inequality holds for a large enough $n$.
By \cref{lem:B}, $M_B$ can be transformed into a matching $M_G$ in $G[S]$ of size $\delta n / 24$.
Our implementation of $\Aweak$ returns the matching $M_G$.
Note that it satisfies \cref{def:Aweak} with $\lambda = 1/24$.

\paragraph{Time complexity.}
For each edge update (for \cref{prob:framework}), we spend $O(\log n)$ time to maintain $G$ and $B$;
also, we invoke \textsc{Update} of $\AOMv$, which takes $O(n^{1 - \sigma})$ amortized time.
Therefore, each edge update is handled in $O(n^{1 - \sigma} + \log n)$ amortized time.

For each query, computing $S^+$ and $S^-$ takes $O(n)$ time.
Then, we invoke \cref{lem:bip-Aweak} with $D = n^{\sigma/2}$, which takes $\tO(n^2 / D + D n \cT) = \tO(n^{2 - \sigma / 2})$ time.
By \cref{lem:B}, the final matching $M_G$ can be obtained in $O(n)$ time.
Therefore, each query is handled in $\tO(n^{2 - \sigma / 2})$ time.
Since there are $\peps$ queries every $\Theta(\eps^2 n)$ edge updates, the amortized time for the queries is $\peps \cdot \tO(n^{2 - \sigma / 2}) / (\eps^2 n) = \tO(\peps \cdot n^{1 - \sigma / 2})$.
Combining with the amortized time for handling updates (i.e. $O(n^{1 - \sigma} + \log n)$), our algorithm for \cref{prob:framework} has an amortized update time of $\tO(\peps n^{1 - \sigma/2})$.
By \cref{thm:framework}, this implies an algorithm for dynamic $(1+\eps)$-approximate matching with amortized update time $\tO(\peps n^{1 - \sigma/2})$.
Hence, we obtain the following.

\begin{theorem} \label{thm:equiv-general}
There is an algorithm solving dynamic $(1-\lambda)$-approximate $\OMv$ with $\lambda = n^{-\sigma}$ with amortized $n^{1-\sigma}$ for \textsc{Update}, and $n^{2-\sigma}$ time for \textsc{Query}, for some $\sigma > 0$ against adaptive adversaries, if and only if there is a randomized algorithm that maintains a $(1-\eps)$-approximate dynamic matching (in general graphs) with amortized time $\tO(\peps n^{1-\delta / 2})$ against adaptive adversaries.
\end{theorem}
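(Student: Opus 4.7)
The plan is to prove the equivalence by treating the two directions separately. For the forward direction ($\OMv \Rightarrow$ matching), the construction is essentially laid out in the preceding discussion; I would formalize it as follows. Given an $\AOMv$ algorithm as in the hypothesis, maintain the adjacency matrix of the auxiliary bipartite graph $B$ from \cref{def:B} through $\AOMv$, issuing one $\OMv$ update per edge update at cost $O(n^{1-\sigma})$, and implement the oracle $\Aweak$ demanded by \cref{thm:framework} as follows. On a query $(S,\delta)$, invoke \cref{lem:bip-Aweak} on $B$ with left part $S^+$, right part $S^-$, and $D = n^{\sigma/2}$. The first half of \cref{lem:B} guarantees $\mu(B[S^+\cup S^-]) \geq \mu(G[S])$, so whenever $\mu(G[S]) \geq \delta n$ the lemma returns a matching $M_B$ of size at least $\delta n/2 - D\lambda n \geq \delta n/4$ for $n$ sufficiently large; the second half of \cref{lem:B} then converts $M_B$ into a matching in $G[S]$ of size $\geq \delta n / 24$ in $O(n)$ time. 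This meets \cref{def:Aweak} with the constant $\lambda = 1/24$. Each query costs $\tO(n^{2-\sigma/2})$, and since there are only $\peps$ queries per chunk of $\Theta(\eps^2 n)$ updates, the amortized per-update contribution from queries is $\tO(\peps\cdot n^{1-\sigma/2})$, which dominates. Feeding this oracle into \cref{thm:framework} then yields the advertised dynamic matching algorithm.

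For the converse, the plan is to piggyback directly on the only-if direction of \cref{thm:equiv-bip} from \cite{Liu24}. A dynamic $(1+\eps)$-approximate matching algorithm that runs on \emph{general} graphs with amortized time $\tO(\peps \cdot n^{1-\sigma/2})$ is, a fortiori, such an algorithm on bipartite graphs (simply run it on a bipartite input). Absorbing the polylogarithmic factor hidden in $\tO(\cdot)$ into a slight decrease of the exponent gives a bipartite matching algorithm with amortized time $n^{1-c}\peps$ for some $c > 0$, which by \cref{thm:equiv-bip} produces a dynamic $(1-n^{-\sigma'})$-approximate $\OMv$ algorithm with the stated update/query time bounds for some $\sigma' > 0$. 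Since both sides of our equivalence are existentially quantified over positive exponents (as in \cref{thm:equiv-bip}), this closes the loop.

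The main obstacle is confined to the forward direction, and within it to verifying that the chain $\AOMv \Rightarrow \Aweak \Rightarrow \text{framework of \cref{thm:framework}} \Rightarrow \text{matching}$ remains valid against an \emph{adaptive} adversary rather than an oblivious one. The high-probability guarantee in \cref{lem:bip-Aweak}, together with the adaptivity-preserving nature of the reduction in \cref{thm:framework} (inherited from \cite{assadi2024improved}), ensures that adaptivity is preserved end-to-end, provided that the random choices in the sampling-based implementation of $\Aweak$ (from \cref{sec:sim-check-2,sec:sim-extend-2}) are re-drawn for each query. Aside from this point, the proof is essentially assembly of already-proved components; the converse is immediate modulo the straightforward exponent bookkeeping described above.
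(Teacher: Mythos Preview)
Your proposal is correct and follows essentially the same approach as the paper: the forward direction assembles \cref{lem:B}, \cref{lem:bip-Aweak} (with $D = n^{\sigma/2}$), and \cref{thm:framework} exactly as you describe, and the converse is dispatched by observing that a general-graph matching algorithm specializes to bipartite graphs and then invoking \cref{thm:equiv-bip}. Your explicit remark about adaptivity being preserved through the chain is a point the paper leaves implicit.
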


\begin{remark}
As already mentioned in \cite{assadi2024improved}, the conditional lower bounds for dynamic approximate matching in \cref{thm:equiv-bip,thm:equiv-general} (which assume that dynamic $(1-\lambda)$-approximate $\OMv$ is hard) do not rule out an algorithm that only runs in $n^{o(1)}$-time when $\eps$ is a constant.
Therefore, it is possible that the algorithm in \cref{thm:AKS-dynamic} runs in $n^{o(1)}$ time while dynamic $\OMv$ is hard.
\end{remark}

\subsubsection{Faster algorithm for dynamic $(1+\eps)$-approximate matching in general graphs}
Using \cref{thm:equiv-general}, we can obtain an algorithm with $n / 2^{\Omega(\sqrt{\log n})}$ amortized update time for dynamic $(1+\eps)$-approximate matching in general graphs.
The algorithm is based on the following result.
Define the \emph{dynamic $\OMv$ problem} as the special case of dynamic $(1-\lambda)$-approximate $\OMv$ (\cref{def:dyn-approx-omv}) with $\lambda = 0$;
i.e. the output vector $Mv$ for each query must be computed without any approximation error.

\begin{lemma}[{\cite[Corollary 2.14]{Liu24}}]
\label{lem:dyn-omv}
There is a randomized algorithm for dynamic $\OMv$ against adaptive adversaries with amortized update time $n/2^{\Omega(\sqrt{\log n})}$ and query time $n^2/2^{\Omega(\sqrt{\log n})}$.
\end{lemma}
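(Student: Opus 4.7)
Since this is a self-contained algorithmic claim about dynamic Boolean matrix--vector products, my plan is to combine the Method of Four Russians with a batching argument that invokes a Williams--Yu style subcubic Boolean matrix multiplication (BMM) as a black box. The adaptive-adversary requirement is handled by refreshing randomness only at rebuild boundaries, so that many queries amortize against a single random choice.

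\emph{Setup and data structure.} Fix a parameter $t = \Theta(\sqrt{\log n})$ and partition $M$ into a grid of $t \times t$ Boolean blocks $M_{i,j}$ with $i,j \in [n/t]$. For each block I maintain a Four-Russians lookup table $T_{i,j}$ keyed by the $2^{t} = n^{o(1)}$ possible $t$-bit inputs $u$, storing the $t$-bit product $M_{i,j} u$; each table fits in $n^{o(1)}$ machine words. I also keep $M$ explicitly to support cheap entry reads. A single entry update to $M_{a,b}$ touches exactly one block and rebuilds one table in $O(2^{t} t) = 2^{O(\sqrt{\log n})}$ worst-case time.

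\emph{Queries.} Given $v \in \{0,1\}^{n}$, split $v$ into $n/t$ blocks $v^{(j)}$ of $t$ bits each. Then for every row-strip $i$, $(Mv)_{i} = \bigvee_{j} T_{i,j}[v^{(j)}]$, computable by $n/t$ lookups and word-parallel ORs; the naive cost is $O(n^{2}/t) = O(n^{2}/\sqrt{\log n})$, which is weaker than the claim. To shave the extra $2^{\Omega(\sqrt{\log n})}$ factor I would batch $k = 2^{\Theta(\sqrt{\log n})}$ pending queries into a matrix $V \in \{0,1\}^{n \times k}$ and compute $MV$ at once using a rectangular variant of Yu's BMM, which runs in $\tilde O(n^{2} k / 2^{\Omega(\sqrt{\log n})})$ time; amortized over the $k$ queries this gives the target $n^{2}/2^{\Omega(\sqrt{\log n})}$ query time.

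\emph{Updates.} Each individual rebuild of a single table costs $2^{O(\sqrt{\log n})}$, which already lies within $n/2^{\Omega(\sqrt{\log n})}$ for large $n$; the tighter update bound is obtained by absorbing updates into periodic epoch rebuilds: buffer up to $n/2^{\Omega(\sqrt{\log n})}$ changes, then recompute only the touched tables in one batched sweep so the per-update cost amortizes to $n/2^{\Omega(\sqrt{\log n})}$. Between rebuilds the explicit copy of $M$ is kept up to date in $O(1)$ per update, so a query issued mid-epoch first applies the buffered updates before invoking the batched-BMM routine.

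\emph{Main obstacle.} The principal difficulty is guaranteeing correctness against an adaptive adversary while still exploiting the combinatorial tricks of Yu-style BMM, which internally use randomness (e.g.\ sampling "popular" intermediate patterns) and could in principle leak information across queries. The fix is an epoch-based scheme: the randomness of the BMM subroutine is refreshed only at rebuild boundaries and the output vectors that it produces are determined by the epoch's random choice plus the adversary's queries within the epoch, so a union bound over at most $\poly(n)$ epochs suffices for a high-probability guarantee. Tuning the epoch length $k = 2^{\Theta(\sqrt{\log n})}$ so that rebuild cost, query cost, and failure probability all balance is where the delicate accounting of \cite{Liu24} resides, and I would follow the epoch analysis in their proof verbatim to complete the argument.
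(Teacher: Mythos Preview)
The paper does not prove this lemma at all; it is quoted verbatim as \cite[Corollary~2.14]{Liu24} and used as a black box (that result in turn rests on the Larsen--Williams $\OMv$ algorithm). So there is no in-paper proof to compare against, and your task was really to reproduce an external result.

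Your proposal contains a genuine gap. You correctly note that plain Four Russians with block size $t=\Theta(\sqrt{\log n})$ yields only $O(n^{2}/\sqrt{\log n})$ per query, and you try to recover the missing $2^{\Omega(\sqrt{\log n})}$ factor by \emph{batching} $k=2^{\Theta(\sqrt{\log n})}$ ``pending'' query vectors into an $n\times k$ matrix and invoking rectangular BMM. But the (dynamic) $\OMv$ problem is online: each \textsc{Query}$(v)$ must be answered before the next query or update arrives, so there is never more than one pending query. Batching across queries is exactly what the $\OMv$ abstraction forbids; if it were allowed, the problem would collapse to offline BMM. The Larsen--Williams algorithm achieves $n^{2}/2^{\Omega(\sqrt{\log n})}$ time for \emph{each individual} query via a substantially different idea (a cell-probe/partitioning argument on the rows of $M$ that lets a single query be answered by touching only a subpolynomial fraction of the precomputed data). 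Your sketch does not contain this ingredient, and without it the query bound is stuck at $n^{2}/\sqrt{\log n}$.

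A smaller point: your adaptive-adversary discussion is largely unnecessary here. Since the algorithm outputs the \emph{exact} product $Mv$, a correct answer reveals nothing about the internal randomness beyond what the adversary already knows, so a plain union bound over the $\poly(n)$ queries handles adaptivity; no epoch-based refreshing is needed.
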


\noindent By letting $\sigma = \log_n(2^{\Omega{\sqrt{\log n}}})$, \cref{lem:dyn-omv} can be equivalently rephrased as a dynamic $\OMv$ algorithm with amortized update time $n^{1 - \sigma}$ and query time $n^{2 - \sigma}$.
Since the algorithm is exact, it is also a dynamic $(1 - \lambda)$-approximate $\OMv$ algorithm, where $\lambda = n^{-\sigma}$.
We obtain the following by combining this algorithm with \cref{thm:equiv-general}.

\begin{theorem} \label{thm:fast-dynamic}
There is a randomized algorithm that maintains a $(1+\eps)$-approximate maximum matching on a dynamic graph $G$ in amortized $\peps \cdot \frac{n}{2^{\Omega(\sqrt{\log n})}}$ update time against adaptive adversaries.
\end{theorem}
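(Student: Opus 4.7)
The plan is to derive \cref{thm:fast-dynamic} as a direct consequence of \cref{thm:equiv-general} applied to the dynamic $\OMv$ algorithm of \cref{lem:dyn-omv}. First, I would reinterpret \cref{lem:dyn-omv} in the language of approximate $\OMv$: choose $\sigma = \sigma(n) > 0$ so that $n^{\sigma} = 2^{\Theta(\sqrt{\log n})}$, which is equivalent to $\sigma = \Theta(1/\sqrt{\log n})$. Then the algorithm of \cref{lem:dyn-omv} has amortized update time $n^{1-\sigma}$ and query time $n^{2-\sigma}$, and since it is \emph{exact} it trivially also solves $(1-\lambda)$-approximate dynamic $\OMv$ for any $\lambda \ge 0$, in particular for $\lambda = n^{-\sigma}$.

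Next, I would feed this exact-$\OMv$-viewed-as-approximate-$\OMv$ algorithm into the ``if'' direction of \cref{thm:equiv-general}. That theorem states that from any dynamic $(1-\lambda)$-approximate $\OMv$ algorithm with $\lambda = n^{-\sigma}$, update time $n^{1-\sigma}$, and query time $n^{2-\sigma}$, one obtains a randomized dynamic $(1+\eps)$-approximate matching algorithm with amortized update time $\tO(\peps \cdot n^{1-\sigma/2})$ against adaptive adversaries. Substituting $\sigma = \Theta(1/\sqrt{\log n})$ yields
\[
\tO(\peps \cdot n^{1 - \sigma/2}) \;=\; \peps \cdot \frac{n}{2^{\Omega(\sqrt{\log n})}},
\]
where the hidden $\tO$ polylogarithmic factors are absorbed into the $2^{\Omega(\sqrt{\log n})}$ denominator (since $\poly\log n = 2^{o(\sqrt{\log n})}$).

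The only thing to double-check is that \cref{thm:equiv-general} continues to hold with $\sigma$ being a slowly decaying function of $n$ rather than an absolute constant: the proof of \cref{thm:equiv-general} only uses $\sigma > 0$ to ensure that $D = n^{\sigma/2}$ gives a balanced split between $n^2/D$ and $Dn\cT$ and that the additive error $\tO(D\lambda n) = \tO(n^{1-\sigma/2})$ is absorbed into $\delta n / 4$ for $n$ large enough; both go through unchanged whenever $\sigma = \omega(1/\log n)$, which is satisfied here. Adaptiveness is preserved because both the $\OMv$ algorithm of \cref{lem:dyn-omv} and the reduction in \cref{thm:equiv-general} work against adaptive adversaries.

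The main obstacle is essentially bookkeeping rather than a new idea: one has to track that every polylogarithmic overhead introduced by the framework reduction (\cref{thm:framework}), by the \cref{lem:bip-Aweak}-based implementation of $\Aweak$, and by the $\tO(\cdot)$ in \cref{thm:equiv-general} collectively fits into the $2^{\Omega(\sqrt{\log n})}$ savings, which it does since any $\poly(\log n / \eps)$ factor is of the form $2^{O(\log \log n) + O(\log 1/\eps)}$ and can be separated as the $\peps$ prefactor times a subpolynomial term absorbed in $2^{\Omega(\sqrt{\log n})}$.
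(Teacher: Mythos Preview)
Your proposal is correct and follows essentially the same route as the paper: reinterpret \cref{lem:dyn-omv} as a $(1-\lambda)$-approximate dynamic $\OMv$ algorithm with $\lambda=n^{-\sigma}$ for $\sigma=\Theta(1/\sqrt{\log n})$, then plug into \cref{thm:equiv-general}. Your additional checks (that $\sigma$ may depend on $n$ and that the polylog overheads are absorbed by $2^{\Omega(\sqrt{\log n})}$) are sound and, if anything, more explicit than what the paper spells out.
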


\subsubsection{Faster algorithm for offline dynamic $(1+\eps)$-approximate matching in general graph}
This section presents a faster algorithm for the \emph{offline dynamic $(1+\eps)$-approximate matching problem} in general graphs.
The algorithm adapts \cite{Liu24}'s algorithm for offline dynamic $(1+\eps)$-approximate matching on bipartite graphs. 
The problem is the same as dynamic $(1+\eps)$-approximate matching problem, except that the algorithm receives the whole sequence of edge updates as the input.
Therefore, our framework (\cref{prob:framework}) and \cref{thm:framework} are still applicable.
Furthermore, we can relax the problem as follows.
Let $G_i$ denote the graph $G$ after the first $i - 1$ chunks of edge updates.
Recall that in the reduction from dynamic matching to \cref{prob:framework} (see the proof of \cref{thm:framework}), each update chunk corresponds to a consecutive subsequence of edge updates, and each query corresponds to an invocation of $\Aweak$ for computing an approximate matching in $G_i$.
Since the edge updates are given offline, we can assume that all update chunks are given in the input;
in addition, the computation for all $G_i$ can be done simultaneously.
(However, the queries for a fixed $G_i$ are still given adaptively.)
We divide the computation for all $G_i$ into $\peps$ iterations, where in the $j$-th iteration, we handle the $j$-th query for graphs $G_1, G_2, \dots, G_t$ simultaneously; 
here, $t > 0$ is the number of chunks we handle simultaneously, which will be fixed later.

We need the following lemma to process the queries.

\begin{lemma}[{\cite[Lemma 2.11]{Liu24}}]
\label{lem:bip-parallel-Aweak}
Let $B_1, \dots, B_t$ be bipartite graphs on the same vertex set $V' = L \cup R$ such that $B_i$ and $B_1$ differ in at most $\Gamma$ edges.
Let $X_i \subseteq L, Y_i \subseteq R$ for $i \in [t]$.
There is a randomized algorithm that returns a maximal matching on each $B_i[X_i, Y_i]$ for $i \in [t]$ with high probability in total time
\[ \tO\left(t\Gamma + n^2t/D + D \cdot T(n, n/D, t) \right), \]
for any positive integer $D \leq n$.
\end{lemma}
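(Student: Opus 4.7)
The plan is to build a maximal matching in each $B_i[X_i, Y_i]$ by a batched greedy procedure that runs in $D$ phases, where each phase issues a single batched matrix-vector query (costing $T(n,n/D,t)$) to share work across all $t$ graphs simultaneously. First I would set up a succinct representation: store $B_1$'s $n \times n$ adjacency matrix $A_1$ explicitly, and for each $i \geq 2$ store just the at most $\Gamma$ edges by which $B_i$ differs from $B_1$. This preprocessing costs $\tO(n^2 + t\Gamma)$, matches the first summand of the target bound, and lets $A_i \mathbf{1}_S$ be computed as $A_1 \mathbf{1}_S$ plus a $\Gamma$-sized correction for any vertex subset $S$.

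For the matching procedure, maintain in each $B_i$ the current matching $M_i$, the free left-vertex set $F_i^L \subseteq X_i$, the free right-vertex set $F_i^R \subseteq Y_i$, and a ``retired'' set $R_i \subseteq X_i$ of left-vertices already certified to have no uncovered neighbor in $F_i^R$. In each of $D$ phases:
\begin{enumerate}
    \item For each $i$, pick a batch $S_i \subseteq F_i^L \setminus R_i$ of size at most $n/D$, round-robin so every not-yet-handled left-vertex is eventually processed.
    \item Invoke the batched matrix-vector primitive once to compute $\{A_1 \mathbf{1}_{S_i}\}_{i\in[t]}$ in time $T(n,n/D,t)$, then in $\tO(\Gamma)$ per graph apply the stored edge-differences to obtain $A_i \mathbf{1}_{S_i}$; intersect with $F_i^R \cap Y_i$ to learn which sampled left-vertices currently have an uncovered right-neighbor.
    \item For each such left-vertex, recover an explicit partner via a logarithmic number of further batched MV queries (binary search over halves of $Y_i$ using the same primitive), match them, and update $M_i$, $F_i^L$, $F_i^R$. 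All sampled left-vertices that the MV certifies have no uncovered right-neighbor are placed in $R_i$.
\end{enumerate}

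Because retired vertices remain non-adjacent to the only-shrinking $F_i^R$ and matched vertices leave $F_i^L$, after $D$ phases every left-vertex in $X_i$ is either matched or retired, so $M_i$ is maximal in $B_i[X_i, Y_i]$. The costs sum as follows: preprocessing and delta maintenance give $\tO(t\Gamma)$; the $D$ batched MV queries (each with $O(\log n)$ binary-search repetitions absorbed into $\tO$) give $\tO(D \cdot T(n,n/D,t))$; and the per-phase per-graph data movement---applying the $\Gamma$-difference correction, intersecting with $F_i^R$, and writing out match updates---touches $\tO(n/D)$ work per sampled left-vertex, i.e.\ $\tO(n^2/D^2)$ per graph per phase, or $\tO(n^2 t/D)$ in total.

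The hardest step is step 3: producing a concrete matching partner, not merely certifying its existence, while staying inside the $n^2 t/D$ budget. A naive row-scan costs $\tO(n)$ per sampled vertex and blows the budget by a factor of $D$; the fix is to do all partner-finding through additional batched MV calls (binary search implemented as indicator-vector products against $A_1$ with the stored $\Gamma$ corrections). Sharing these MVs across all $t$ graphs via the common $A_1$ component---while paying the $\Gamma$ correction only $O(1)$ times per graph per phase, rather than $O(n/D)$ times---is what keeps the $t\Gamma$ term additive rather than multiplied by $D$ or by $n/D$, and is the delicate accounting at the heart of the argument.
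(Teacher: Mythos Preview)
This lemma is not proved in the paper at all; it is imported verbatim from \cite[Lemma 2.11]{Liu24} and used as a black box, so there is no in-paper argument to compare your attempt against.

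That said, two points in your sketch would not survive as written. First, your accounting for the $t\Gamma$ term is off by a factor of $D$: you state that you pay the $\Gamma$-sized correction ``$O(1)$ times per graph per phase,'' but over $D$ phases and $t$ graphs that is $O(tD\Gamma)$, not the claimed $O(t\Gamma)$. The way this term actually stays additive is by observing that each of the at most $\Gamma$ difference edges $(u,v)\in B_i\triangle B_1$ lies in exactly one of the $D$ blocks, so its correction is touched in only one phase; the total correction work then amortizes to $O(\Gamma)$ per graph across \emph{all} phases combined. Second, your Step~3 is underspecified and does not obviously fit the budget: per-left-vertex binary search via further batched matrix--vector calls cannot be straightforwardly shared, since different left vertices recurse into different halves of $Y_i$ at each level, so the $O(\log n)$ levels do not collapse into $O(\log n)$ calls of cost $T(n,n/D,t)$ each. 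A more robust route is to locate partners by directly scanning the current size-$(n/D)$ block and to argue that the scanning amortizes to $O(n^2/D)$ per graph; but that amortization also needs care (bounding unsuccessful scans), which your write-up does not address. There is also a directional slip in Step~2: $A_1\mathbf{1}_{S_i}$ with $S_i\subseteq L$ is a vector indexed by right vertices, telling you which \emph{right} vertices see $S_i$, not ``which sampled left-vertices currently have an uncovered right-neighbor'' as you state.
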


As in \cref{sec:Liu-subsection}, we maintain the bipartite graph $B$ associated with $G$.
Let $B_i$ be the graph $B$ after the first $i - 1$ chunks of edge updates.
Note that all $B_i$-s are on the same vertex set $L \cup R$, where $L = V^+$ and $R = V^-$.
Consider queries $(S_1, \delta), (S_2, \delta), \dots, (S_t, \delta)$, where the $i$-th query is for the graph $G_i$.

\begin{lemma}
\label{lem:general-parallel-Aweak}
There is a randomized algorithm that returns a constant approximate matching on each $G_i[S_i]$ for $i \in [t]$ with high probability in total time
\[ \tO\left(t\Gamma + n^2t/D + D \cdot T(n, n/D, t) \right), \]
for any positive integer $D \leq n$.
\end{lemma}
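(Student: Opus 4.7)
The plan is to reduce to the bipartite case. For each $G_i$, let $B_i$ denote the bipartite graph associated with $G_i$ as in \cref{def:B}; all $B_i$ share the common vertex set $L \cup R$ with $L = V^+$ and $R = V^-$. Each edge insertion/deletion in $G_i$ induces at most two edge updates in $B_i$ (one for each direction), so if $G_i$ differs from $G_1$ in $\Gamma'$ edges, then $B_i$ differs from $B_1$ in at most $2\Gamma'$ edges. In particular, the parameter $\Gamma$ from \cref{lem:bip-parallel-Aweak} is absorbed up to a constant factor by the analogous $\Gamma$ for the $G_i$'s, which disappears into the $\tO$.

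Given the queries $(S_i, \delta)$, I set $X_i = S_i^+ \subseteq L$ and $Y_i = S_i^- \subseteq R$ for each $i \in [t]$, and invoke \cref{lem:bip-parallel-Aweak} on the collection $\{(B_i, X_i, Y_i)\}_{i=1}^t$ with the same parameter $D$. With high probability, this returns, for every $i$, a maximal matching $M_i^B$ in $B_i[X_i \cup Y_i]$ within time
\[
\tO\!\left(t\Gamma + n^2 t / D + D \cdot T(n, n/D, t)\right).
\]
For each $i$, I then apply the transformation from the second part of \cref{lem:B} to convert $M_i^B$ into a matching $M_i$ in $G_i[S_i]$; this takes $O(n)$ time per index $i$, i.e.\ $O(nt)$ total, which is absorbed into the $\tO(n^2 t / D)$ term as long as $D \le n$.

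It remains to verify that each returned $M_i$ is a constant approximation of $\mu(G_i[S_i])$. Since $M_i^B$ is maximal in $B_i[X_i \cup Y_i]$, it is a $2$-approximation, so $|M_i^B| \ge \mu(B_i[X_i \cup Y_i])/2$. By the first part of \cref{lem:B}, $\mu(B_i[X_i \cup Y_i]) \ge \mu(G_i[S_i])$, and by the second part, $|M_i| \ge |M_i^B|/6$. Chaining these inequalities yields $|M_i| \ge \mu(G_i[S_i])/12$, which is the desired constant-approximation guarantee.

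The main (minor) obstacle is the bookkeeping that shows bouncing between $G_i$ and $B_i$ does not inflate the parameters in the time bound. Beyond the factor of $2$ between edge updates in $G_i$ and $B_i$ noted above, the vertex sets $X_i, Y_i$ of $B_i$ have the same size as $S_i$ (up to a factor of two), and the post-processing via \cref{lem:B} is linear per query; thus every term of the bipartite bound is preserved up to constants, giving exactly the stated time complexity.
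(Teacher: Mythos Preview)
Your proof is correct and follows essentially the same approach as the paper: reduce to the bipartite graphs $B_i$ via $X_i = S_i^+$, $Y_i = S_i^-$, invoke \cref{lem:bip-parallel-Aweak}, and convert the resulting maximal matchings back to $G_i[S_i]$ using \cref{lem:B}, yielding a $12$-approximation with the stated running time. Your explicit remark that the edge-difference parameter $\Gamma$ only doubles when passing from $G_i$ to $B_i$ is a nice piece of bookkeeping that the paper leaves implicit.
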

\begin{proof}
We repeat the argument in the proof of \cref{thm:equiv-general} for this lemma.
That is, we first invoke \cref{lem:bip-parallel-Aweak}, with $X_i = S_i^+$ and $Y_i = S_i^-$ for all $i$, to find maximal matching for each $B_i[X_i \cup Y_i]$.
Let $M_i$ be the returned matching for $B_i$.
Since $M_i$ is maximal, it is a 2-approximate matching in $B_i[X_i \cup Y_i]$.
Using \cref{lem:B}, each $M_i$ is transformed into a matching $M'_i$ in $G_i[S_i]$.
Since $M_i$ is a 2-approximate matching, it follows from \cref{lem:B} that $M'_i$ is a 12-approximate matching in $G_i[S_i]$.

The time complexity is analyzed as follows.
The computation of $(X_1, Y_1), (X_2, Y_2), \dots, (X_t, Y_t)$ takes $O(tn)$ time.
Invoking \cref{lem:bip-parallel-Aweak} requires $\tO\left(t\Gamma + n^2t/D + D \cdot T(n, n/D, t) \right)$ time.
Transforming all $M_i$-s to $M'_i$-s takes $O(tn)$ time.
Since the parameter $D \leq n$, $tn = O(tn^2)$.
Hence, the overall time complexity is $\tO\left(t\Gamma + n^2t/D + D \cdot T(n, n/D, t) \right)$.
This completes the proof.
\end{proof}

Since there are $\peps$ queries for each $G_i$, by \cref{lem:general-parallel-Aweak}, we can handle all queries for $G_1, G_2, \dots, G_t$ in $\tO\left(t\Gamma + n^2t/D + D \cdot T(n, n/D, t) \right) \cdot \peps$ time.
Let $t = n^x$ and $D = n^y$ for some $x, y \in [0, 1]$ chosen later.
The amortized time complexity, over $\Gamma = t \cdot \Theta(\eps^2 n)$ edge updates, is
\begin{align*}
& \tO\left(\peps \cdot (t\Gamma + n^2t/D + D \cdot T(n, n/D, t)) / (tn) \right) \\
=~& \tO\left(\peps \cdot (t + n/D + D \cdot T(n, n/D, t) / (tn)) \right) \\
=~& \tO\left(\peps \cdot (n^x + n^{1 - y} + n^{y - 1 - x} \cdot T(n, n^{1-y}, n^x)) \right).
\end{align*}
For the choices $x = 0.579, y = 0.421$, the amortized runtime is $O(\peps \cdot n^{0.58})$.
(This choice of $(x, y)$ is found by \cite{Liu24}.)

\begin{theorem}
\label{thm:offline}
There is a randomized algorithm that given an offline sequence of edge insertions and deletions to an $n$-vertex graph (not necessarily bipartite), maintains a $(1+\eps)$-approximate matching in amortized $O(\peps \cdot n^{0.58})$ time with high probability.
\end{theorem}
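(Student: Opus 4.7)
The plan is to apply the reduction of \cref{thm:framework} to the offline setting and implement the $\Aweak$ oracle by batching queries across different snapshots of the graph. By \cref{thm:framework}, it suffices to design an algorithm for \cref{prob:framework} with parameters $\lambda = \Theta(1)$, $q = \peps$, $\delta = \poly(\eps)$, and $\alp = \eps^2$. Since the entire edge update sequence is given in advance, we can see the edge updates chunk by chunk ahead of time, and we can defer the handling of all $\peps$ queries asked for graphs $G_1, G_2, \dots, G_t$ (the snapshots after each chunk) to be done jointly for $t$ consecutive chunks. This is the key feature that the offline setting buys us: queries for different $G_i$ can be served in parallel rather than sequentially.

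To implement the query oracle $\Aweak$ in a batched manner, I would maintain the auxiliary bipartite graph $B$ of \cref{def:B} alongside $G$. Let $B_i$ denote the graph $B$ after the first $i-1$ chunks; since two consecutive chunks change only $O(\eps^2 n)$ edges, consecutive $B_i$'s differ in $\Gamma = O(t \eps^2 n)$ edges in total. For each snapshot $G_i$ we receive a query $(S_i, \delta)$, and by \cref{lem:B} a constant-approximate matching in $G_i[S_i]$ can be recovered from a maximal matching in $B_i[S_i^+ \cup S_i^-]$. Thus I can invoke \cref{lem:bip-parallel-Aweak} with $X_i = S_i^+$ and $Y_i = S_i^-$ to obtain all $t$ maximal matchings simultaneously; this is precisely the content of \cref{lem:general-parallel-Aweak}, which yields the total-time bound $\tO(t\Gamma + n^2 t / D + D \cdot T(n, n/D, t))$ for any $D \leq n$.

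Amortizing this over the $\Theta(t \eps^2 n)$ edge updates covered by the $t$ chunks, and accounting for the $\peps$ queries per chunk coming out of \cref{thm:framework}, the amortized per-update cost becomes $\tO(\peps \cdot (t + n/D + D \cdot T(n, n/D, t)/(tn)))$. Setting $t = n^{x}$ and $D = n^{y}$ and plugging in the choice $(x, y) = (0.579, 0.421)$ identified by \cite{Liu24}, every term is bounded by $O(\peps \cdot n^{0.58})$, giving the advertised amortized update time. Combining with \cref{thm:framework} then yields the desired fully dynamic $(1+\eps)$-approximate matching algorithm on general graphs.

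The main obstacle in this reduction is not the offline batching itself, but ensuring that passing through the bipartite auxiliary graph $B$ does not degrade the guarantees. Specifically, we need that \emph{maximal} matchings in $B_i[S_i^+ \cup S_i^-]$ translate to matchings in $G_i[S_i]$ that satisfy the approximation requirement \cref{def:Aweak} for some constant $\lambda$, so that \cref{thm:framework} is applicable. This is already handled by \cref{lem:B}, which shows that a $2$-approximate matching in $B$ becomes a $12$-approximate one in $G$; hence $\lambda = 1/24$ works, matching the constant-$\lambda$ hypothesis of \cref{thm:framework}. Once this is in place, the rest of the argument is just the time-complexity optimization above.
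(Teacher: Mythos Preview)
Your proposal is correct and follows essentially the same approach as the paper: batch the $j$-th query across $t$ consecutive snapshots using \cref{lem:general-parallel-Aweak} (built from \cref{lem:bip-parallel-Aweak} via the bipartite auxiliary graph $B$ and \cref{lem:B}), amortize over the $\Theta(t\eps^2 n)$ updates, and pick $(x,y)=(0.579,0.421)$ as in \cite{Liu24}. The only point worth making explicit is that the $\peps$ queries per chunk are still adaptive \emph{within} a chunk, so the batching is done in $\peps$ rounds where round $j$ handles the $j$-th query for all of $G_1,\dots,G_t$ simultaneously; your write-up implies this but does not say it outright.
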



\section*{Acknowledgments}
We are highly grateful to Sepehr Assadi for initiating this project. This work is supported by NSF Faculty Early Career Development Program No.~2340048. Part of this work was conducted while S.M.~was visiting the Simons Institute for the Theory of Computing.

In the original submission, our first result in \cref{table:times-dynamic} (\cref{thm:AKS-dynamic}) was incorrectly stated as $\poly(1/\eps) \cdot n^{o(1)} \cdot \ORS{n}{\Theta_\eps(n)}$. The correct result is $(1/\eps)^{O(1/\beta)} \cdot n^\beta \cdot \ORS{n}{\Theta_\eps(n)}$ for any given real number $\beta > 0$.
When $\beta$ is a constant, our result has a polynomial dependence on $1/\eps$.
However, the dependence becomes super-polynomial for $\beta = o(1)$.
We are grateful to Jiale Chen for pointing out this error.

\bibliographystyle{alpha}
\bibliography{ref.bib}

\newpage
\appendix

\section{Implementation in MPC and \CONGEST}
\label{sec:app-framework-distr}
Once equipped with \cref{thm:framework-1}, our MPC and \CONGEST results follow almost directly. 
Next, we provide a few details on implementing those in MPC and \CONGEST.

Our main analysis of the framework concerns the number of invocations to a $\Theta(1)$-approximate MCM oracle.
However, after the oracle returns a matching, a few updates and ``cleaning'' steps need to be performed. Those steps include extending alternating paths, contracting blossoms, removing specific vertices from the graph, and simultaneously propagating information throughout many disjoint components.
As long as each component has size that fits in the memory of a machine in MPC, those operations can be performed in $O(1)$ time. We refer a reader to \cite{andoni2018parallel} for details on implementing such a procedure. This yields the following corollary of \cref{thm:framework-1}.
\begin{corollary}
    Given a graph $G$ on $n$ vertices and $m$ edges, let $T(n, m)$ be the number of rounds needed to compute a $\Theta(1)$-approximate MCM in MPC.
    Then, there exists an algorithm that computes a $1+\eps$ approximation of MCM in $O(T(n, m) \cdot \log(1/\eps) / \eps^7)$ many rounds.
\end{corollary}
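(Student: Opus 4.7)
The plan is to derive the corollary as an immediate consequence of \cref{thm:framework-1} by instantiating its two algorithmic primitives $\Amat$ and $\Aproc$ in the MPC model. First, I would set $\Amat$ to be the assumed $\Theta(1)$-approximate MCM algorithm, whose MPC round complexity is $\Tmat = T(n,m)$; this directly fulfills the first hypothesis of \cref{thm:framework-1} with $c = \Theta(1)$.

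Second, I would argue that $\Tproc = O(1)$ in MPC. Recall that $\Aproc$ is only required to exchange small messages within each of an arbitrary collection of vertex-disjoint components, each of size at most $1/\eps^d$ for a fixed constant $d$. Since the framework is allowed $\poly(1/\eps)$ words of memory per vertex and the per-machine local memory is assumed large enough to store a single component of size $\poly(1/\eps)$, standard MPC primitives based on sorting and aggregation (see, e.g., \cite{goodrich2011sorting,andoni2018parallel}) can gather every component onto a single machine in $O(1)$ rounds regardless of the number of components. Once a component resides on one machine, any processing demanded by the framework — propagating labels along active paths, contracting or unmarking blossoms, removing vertices from $G$, relabeling matched arcs, and broadcasting the new state back to the component's vertices — is performed locally in that machine's RAM and distributed in $O(1)$ additional rounds.

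Substituting $\Tmat = T(n,m)$ and $\Tproc = O(1)$ into the round bound $O\bigl((\Tmat + \Tproc)\cdot\eps^{-7}\cdot\log(1/\eps)\bigr)$ of \cref{thm:framework-1} yields $O\bigl(T(n,m)\cdot\eps^{-7}\cdot\log(1/\eps)\bigr)$ MPC rounds, as claimed. The main subtlety — and the step that deserves the most care in a full write-up — is checking that the component-gathering primitive really does run in $O(1)$ rounds for arbitrarily many simultaneously maintained components of size $\poly(1/\eps)$, which reduces to verifying that the total data volume fits within the global MPC memory and that each machine's load remains within its local-memory budget; both are routine consequences of the $\poly(1/\eps)$ per-vertex memory allowance granted by \cref{thm:framework-1}.
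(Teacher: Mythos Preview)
Your proposal is correct and matches the paper's own argument essentially verbatim: instantiate \cref{thm:framework-1} with $\Tmat = T(n,m)$ and argue $\Tproc = O(1)$ in MPC by gathering each $\poly(1/\eps)$-size component onto a single machine via standard sorting/aggregation primitives (the paper cites \cite{andoni2018parallel} for exactly this). If anything, you give more detail on the memory-budget check than the paper does.
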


Implementing the aforementioned clean-up procedures is slightly more involved in \CONGEST.
Nevertheless, if a component has size $k$, necessary methods can be implemented in $O(k)$ \CONGEST rounds. 
To see why it is the case, observe first that all the vertices belonging to a structure of $\alpha$ can send their small messages to $\alpha$ in $O(k)$ rounds.
Then, after aggregating the received information, $\alpha$ can propagate necessary information to all the other vertices in its structure.
Given that the maximum component size our algorithm ensures is $1/\eps^3$, we obtain the following corollary.
\begin{corollary}
    Given a graph $G$ on $n$ vertices and $m$ edges, let $T(n, m)$ be the number of rounds needed to compute a $\Theta(1)$-approximate MCM in \CONGEST.
    Then, there exists an algorithm that computes a $1+\eps$ approximation of \CONGEST in $O(T(n, m) \cdot \log(1/\eps) / \eps^{10})$ many rounds.
\end{corollary}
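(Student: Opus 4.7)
The plan is to derive this corollary as an almost immediate consequence of Theorem \ref{thm:framework-1}, which bounds the total running time by $O((\Tmat + \Tproc)\cdot \eps^{-7}\cdot \log(1/\eps))$ once the two model-specific primitives are instantiated: the $\Theta(1)$-approximate matching oracle $\Amat$ with cost $\Tmat$, and the communication primitive $\Aproc$ with cost $\Tproc$ that operates simultaneously on any collection of vertex-disjoint components each of size at most $1/\eps^d$, where $d$ is a fixed constant.

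In \CONGEST, the first primitive is given by hypothesis, so $\Tmat = T(n,m)$. It remains to argue that $\Aproc$ can be implemented in $\Tproc = O(1/\eps^3)$ rounds. By Lemma \ref{lem:struct-size} and the discussion following it in Section \ref{sec:simp}, the algorithm maintains vertex-disjoint structures each of size $O(1/\eps^3)$, which fixes the value of $d$. Each structure is, by construction, a connected subgraph of $G$ (built out of alternating-tree edges and blossom edges), so its vertices can communicate among themselves along actual edges of the \CONGEST network. A standard aggregate-along-a-BFS-tree-and-broadcast-back schedule therefore implements $\Aproc$ on a component of size $k$ in $O(k) = O(1/\eps^3)$ rounds. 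Since the structures are vertex-disjoint, these aggregations and broadcasts run in parallel with no congestion on any edge, and hence $\Tproc = O(1/\eps^3)$.

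Plugging $\Tmat = T(n,m)$ and $\Tproc = O(1/\eps^3)$ into Theorem \ref{thm:framework-1} yields a total running time of
\[
O\bigl((T(n,m) + 1/\eps^3)\cdot \eps^{-7}\cdot \log(1/\eps)\bigr).
\]
Without loss of generality $T(n,m)\geq 1$ (even looking at a single edge requires one \CONGEST round), so we can absorb the additive $1/\eps^3$ into $T(n,m)$ at a multiplicative cost of $\eps^{-3}$, giving the advertised bound $O(T(n,m)\cdot \log(1/\eps)/\eps^{10})$.

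The only place that requires any genuine (though small) care is verifying that $\Aproc$ really costs only $O(1/\eps^3)$ rounds in \CONGEST rather than something larger; the point is that each structure is a connected subgraph of the actual network and the structures are vertex-disjoint, so no auxiliary routing is needed and no edge is used by two different structures simultaneously. Everything else is straightforward bookkeeping on top of Theorem \ref{thm:framework-1}.
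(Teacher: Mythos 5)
Your proposal is correct and follows essentially the same route as the paper: instantiate $\Tmat = T(n,m)$, implement $\Aproc$ by aggregating to the structure's root and broadcasting back in $O(k)$ rounds per vertex-disjoint component of size $k \le 1/\eps^3$, and absorb the additive $\Tproc = O(1/\eps^3)$ into the bound of \cref{thm:framework-1} to get the extra $\eps^{-3}$ factor over the MPC case. The paper's own justification is exactly this aggregation-and-broadcast argument, so nothing further is needed.
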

\section{Correctness of the simulation} \label{sec:app-correctness}

In this section, we prove the correctness of our simulation.
More precisely, we show that the procedures \algCheck and \algExtend in \cite{MMSS25}'s algorithm (see \cref{alg:outline} and \cref{alg:phase}) can be replaced by the corresponding simulated versions in \cref{sec:framework-distr}, and the resulting algorithm still outputs a $(1+\eps)$-approximate maximum matching.
We present the missing proofs in \cref{sec:app-missing-proof}.
An overview of the correctness proof is given in \cref{sec:app-actual-correctness}.
Our proofs in this section very closely follow those in \cite{MMSS25}.

\subsection{Missing proofs in \cref{sec:sim-extend-1}} \label{sec:app-missing-proof}

Recall that the analysis for \cref{alg:sim-extend-1} has three parts.
In the following, we present the proof of the first two parts.

\paragraph{Part 1.} 
Recall that Part 1 is to show that every \algOvertake operation in \cref{line:sim-ext-overtake} of \cref{alg:sim-extend-1} is performed on an $s$-feasible (and therefore type 3) arc. 
We first establish a property of $s$-feasible arcs.

\begin{lemma} \label{lem:feasible}
Let $(a', b'), (c', d') \in G'$ be two $s$-feasible arcs that do not share endpoints.
Suppose that we modify $G'$ by performing $\algOvertake$ on $(a', b')$.
Then, $(c', d')$ remains $s$-feasible after the modification.
\end{lemma}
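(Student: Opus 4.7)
The plan is to verify the five defining properties of $s$-feasibility for $(c',d')$ after performing $\algOvertake(a',b')$. Let $b^*$ denote the head of the matched arc adjacent to $b'$ (this is the arc whose label is set to $s+1$ by the overtake), and let $\cS_\beta$ denote the structure containing $b'$ just before the overtake.

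First I would establish the disjointness $\{c',d'\}\cap\{a',b',b^*\}=\emptyset$. The hypothesis that $(a',b')$ and $(c',d')$ share no endpoints gives $c',d'\notin\{a',b'\}$. Since $d'$ is inner while $b^*$ is outer (being the matched tree-child of the inner vertex $b'$), $d'\neq b^*$. Also $c'\neq b^*$: otherwise $c'$ and $b'$ would share the same matched edge, so $\ell(c')=\ell(b')>s+1$, contradicting $\ell(c')=s$. I would then catalog the only changes $\algOvertake(a',b')$ can make: it updates the label of the matched edge $\{b',b^*\}$ to $s+1$; in Case 2.2 it relocates the subtree rooted at $b'$ from $\cS_\beta$ into $\cS_{a'}$; and it updates the working vertex of $\cS_{a'}$ and possibly of $\cS_\beta$. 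No structure is newly marked on-hold, because on-hold status is fixed throughout a \bundle.

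Four of the five conditions then follow quickly. The labels $\ell(c')=s$ and $\ell(d')>s+1$ are preserved because the only label change is on $\{b',b^*\}$ and $c',d'\notin\{b',b^*\}$. The condition ``$d'$ is inner'' reduces to depth parity: $b'$ becomes a child of the outer vertex $a'$, so its new depth is still odd and the entire relocated subtree preserves its inner/outer parities; vertices outside the moved subtree are untouched. The condition ``$\cS_{c'}$ is not on-hold'' holds because either $c'$ stays in its original (non-on-hold) structure, or, only in Case 2.2, it migrates into $\cS_{a'}$, which is not on-hold by \cref{def:Hps} (the overtake is invoked on an edge of $H'_s$, whose left part excludes on-hold structures).

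The main obstacle, and the only step requiring real case analysis, is showing that $c'$ remains the working vertex of $\cS_{c'}$. Since $c'\neq a'$ and $a'$ is the unique working vertex of $\cS_{a'}$ before the overtake, $c'$ cannot lie in $\cS_{a'}$ beforehand; otherwise $c'$ (being a working vertex of the structure containing it) would equal $a'$. Hence before the overtake $c'$ lies either in a structure untouched by $\algOvertake(a',b')$ (and nothing changes), or in $\cS_\beta$, which is only possible in Case 2.2 and forces $c'=w'_\beta$. In that case, Step~5 of Case~2.2 splits on whether $w'_\beta$ lies in the subtree rooted at $b^*$: if it does, $c'$ migrates to $\cS_{a'}$ and the step sets $c'$ as the new working vertex of $\cS_{a'}$; if it does not, $c'$ stays in $\cS_\beta$ and $w'_\beta$ is unchanged. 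Either way $c'$ remains the working vertex of whichever structure currently contains it, finishing the verification.
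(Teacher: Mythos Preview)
Your proof is correct and follows the same overall strategy as the paper—verify each defining condition of $s$-feasibility for $(c',d')$—but differs in one place. The paper first proves a small claim: $\cS_{a'}\neq\cS_{c'}$ and $b'$ is not on the active path of $\cS_{c'}$ (using that $\ell(a')=\ell(c')=s$ together with the label-monotonicity invariant along active paths). From this it follows immediately that the overtake leaves the active path of $\cS_{c'}$ untouched, so $c'$ never migrates. You instead analyze both branches of Step~5 in Case~2.2 directly, allowing for the possibility that $c'$ migrates into $\cS_{a'}$ and becomes its new working vertex. By the paper's claim this branch is actually vacuous, but your handling of it is sound, and it has the minor advantage of not invoking the label-monotonicity invariant. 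You also make explicit the step $c',d'\notin\{b',b^*\}$, which the paper uses only implicitly. Overall the two arguments are close variants of one another.
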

\begin{proof}
Let $\cS_1, \cS_2, \cS_3, \cS_4$ be, respectively, the structures containing $a', b', c', d'$ before the $\algOvertake$ operation;
if $b'$ (resp. $d'$) is an unvisited vertex, we define $\cS_2$ (resp. $\cS_4$) as the single vertex $b'$ (resp. $d'$).
We first prove the following claim.

\begin{claim}\label{clm:small-claim} 
Before the overtaking operation, $\cS_1 \neq \cS_3$, and $b'$ were not on the active path of $\cS_3$.     
\end{claim}
\begin{proof}
Before the overtaking operation, $(a', b')$ and $(c', d')$ were of type 3.
Hence, $a'$ and $c'$ must be the working vertex of $\cS_1$ and $\cS_3$, respectively.
Since $(a', b'), (c', d')$ do not share endpoints, $a'\neq c'$ and thus $\cS_1 \neq \cS_3$.
In addition, since $\ell(a') = \ell(c') = s$, $b'$ cannot be on the active path of $\cS_3$.
This completes the proof.
We remark that it is possible for $\cS_1 = \cS_2$ (i.e., $a'$ is overtaking a vertex in its own structure) $\cS_2 = \cS_3$ (i.e., $a'$ is overtaking the structure of $c'$) or $\cS_2 = \cS_4$ (i.e., $a'$ and $c'$ are overtaking from the same structure).    
\end{proof}

\noindent We now show that $(c', d')$ remains $s$-feasible after the operation.
(See \cref{def:s-feasible} and \cref{def:arc-type} for the conditions.)
The $\algOvertake$ operation on $(a', b')$ consists of the following steps (see \cref{sec:overtake} for details):
\begin{itemize}
    \item Modify the alternating trees of $\cS_1$ and $\cS_2$ by re-assigning the parent of $b'$ as $a'$.
    \item Update the label of the matched edge incident to $b'$.
    \item Change the working vertex of $\cS_1$ and $\cS_2$.
    \item Mark $\cS_1$ as extended; mark $\cS_2$ as overtaken.
\end{itemize}

\noindent Since the overtaken node $b'$ was not on the active path of $\cS_3$ (by \cref{clm:small-claim}), all nodes on the active path of $\cS_3$ is unchanged.
Since $b' \neq d'$, $\ell(d')$ is also unchanged.
Hence, $c'$ is still the working vertex of $\cS_3$, and it still holds that $\ell(c') = s$ and $\ell(d') > s$.
Since $\cS_1 \neq \cS_3$ (by \cref{clm:small-claim}), $\cS_3$ is not marked as extended.
Clearly, $d'$ is still an inner or unvisited vertex and $\cS_3$ is not on-hold.
This shows that $(c', d')$ is still $s$-feasible.
\end{proof}

Let $M' = \{e_1, e_2, \dots, e_k\}$ be the matching computed in \cref{line:sim-ext-Amat}.
Since $M'$ is a matching, $e_1, e_2, \dots, e_k$ do not share endpoints.
By \cref{lem:feasible}, after we perform \algOvertake on $e_1$, all other arcs in $M'$ remain $s$-feasible.
By induction, where we repeat the above argument in the inductive step, we show that $e_i, e_2, \dots, e_k$ are $s$-feasible after we perform \algOvertake on $e_1, e_2, \dots, e_{i-1}$, for all $i \geq 1$.
Therefore, every operation in \cref{line:sim-ext-overtake} of \cref{alg:sim-extend-1} is performed on an $s$-feasible arc.


\paragraph{Part 2.}
Recall that the second part is to prove the following lemma.

\extcont*
\begin{proof}
Recall that we invoke the simulation of $\algCheck$ after completing the last stage.
Hence, every $G$-arc of type 1 or 2 is marked as contaminated.
Suppose that there is a non-contaminated arc $(u, v) \in G$ that is of type 3.
Let $u' = \Omg(u)$, $v' = \Omg(v)$, and $s = \ell(u')$.
In the following, we show that $(u, v)$ is already a type 3 arc at the end of stage $s$, which contradicts the assumption because $(u, v)$ should have been marked as contaminated at the end of statge $s$.

Since $(u, v)$ is of type 3, $\cS_{u}$ is not marked as extended or on-hold.
Hence, during the simulation, $u$ is not overtaken by any structure.
(That is, it stays in the structure of the same free vertex. However, it is possible that this structure is overtaken but the operation did not overtake $u$).
In addition, $\cS_u$ did not overtake, and we did not apply \algContract or \algAugment on $\cS_u$.
Therefore, \textbf{at the end of stage $s$, $u'$ is already the working vertex of $\cS_u$, and $\ell(u') = s$.}

Since $v'$ is not an outer vertex at the end of the simulation, it is not an outer vertex at the end of $s$.
Since the label of $v'$ can only decrease during the simulation, $\ell(v') > s + 1$ at the end of $s$.
By \cref{def:arc-type}, $(u, v)$ was a type 3 arc at the end of stage $s$.
This completes the proof.
\end{proof}

\subsection{Key ingredience of the correctness proof} \label{sec:app-actual-correctness}

The correctness proof of \cite{MMSS25}'s algorithm consists of two key ingredients: 

\begin{enumerate}
    \item[(I1)] Their algorithm does not miss any short augmentation.
    That is, if one phase of the algorithm is left to run indefinitely and no structure is on hold, then at some point, the remaining graph will have no short augmentation left.
    \item[(I2)] If there are at least $4h\lmax|M|$ vertex-disjoint augmenting paths in $G$, then the size of $M$ is increased by a factor of $1 + \frac{h \lmax}{\Delta_h}$ in this phase.
\end{enumerate}

(I2) shows that the algorithm outputs a $(1+\eps)$-approximate matching after running certain numbers of scales and phases.
(I1) highlights the intuition behind the algorithm and is used to prove (I2).

To prove the correctness, we show that (I1) and (I2) also hold for our simulation, except for a small difference caused by contaminated arcs.

Unless otherwise stated, all lemmas, corollaries, and theorems in this section refer to our simulation, in which \algCheck and \algExtend are replaced with \cref{alg:sim-check-1} and \cref{alg:sim-extend-1}.

\subsection{Proof of the first ingredient}
The modified version of (I1) is as follows.


\begin{definition}[Critical arc and vertex]
\label{def:critical}
Recall that the active path of a structure is a path in $G'$.
We say a non-blossom arc $(u, v) \in G$ is \emph{critical} if the arc $(\Omg(u), \Omg(v)) \in G'$ is active.
In particular, all blossom arcs in a structure are not critical, even if they are in an active blossom.
We say a free vertex $\alp \in G$ is \emph{critical} if $S_\alp$ is active.
\end{definition}

\begin{theorem}[No short augmenting paths is missed] \label{lem:active}
At the beginning of each $\bundle$, the following holds.
Let $P = (\alp, a_1, a_2, \dots, a_k, \beta)$ be an augmenting path in $G$ such that no vertex in $P$ is removed in this phase and $k \leq \lmax$.
At least one of the following holds: $\alp$ is critical, $P$ contains a critical arc, or $P$ contains a contaminated arc. 
\end{theorem}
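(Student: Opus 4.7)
The plan is to prove the claim by induction on the $\bundle$ index $\tau$ within a phase, closely following \cite{MMSS25}'s analogous argument for the exact algorithm with a single new escape clause: whenever the exact algorithm would have performed an operation via \algCheck or \algExtend, our simulation either performs that operation or marks the responsible arc as contaminated. The base case $\tau = 1$ is immediate since \cref*{line:init-structure} of \cref{alg:phase} initializes each free vertex $\alp$ with $w'_\alp = \Omg(\alp)$, so every free vertex is critical.

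For the inductive step, I would fix an augmenting path $P = (\alp, a_1, \dots, a_k, \beta)$ at the start of bundle $\tau + 1$ with $k \le \lmax$ and no vertex removed in this phase, and assume for contradiction that none of ``$\alp$ critical'', ``some arc of $P$ critical'', or ``some arc of $P$ contaminated'' holds at this time. The first step is to observe that contamination is monotone: arcs marked in \cref{alg:sim-check-1} or at the end of a stage in \cref{alg:sim-extend-1} are never unmarked. Hence $P$ had no contaminated arc at the start of $\tau$ either, and the inductive hypothesis produces a critical witness at $\tau$.

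I would then case-analyse how this witness can be lost during bundle $\tau$. If $\alp$ was critical at $\tau$ but not at $\tau + 1$, then $\cS_\alp$ was deactivated by either \algAugment (impossible, since $\alp$ would then be removed) or \algBacktrack, which forces $\cS_\alp$ to have been not on hold, not modified, with $w'_\alp = \Omg(\alp)$. The arc $(\alp, u_1)$ is then an unmatched arc out of the working vertex of a non-on-hold structure, and is of type 1, 2, or 3 depending on whether $\Omg(u_1)$ is outer, inner, or unvisited. Tracing through \cref{alg:sim-check-1} and \cref{alg:sim-extend-1} shows that every branch forces either a modification of $\cS_\alp$ (contradicting \emph{not modified}), the removal of $\alp$ (contradicting no removal), or a contaminated arc on $P$ (contradicting our assumption). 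If instead some matched $a_i$ was critical at $\tau$ but not at $\tau + 1$, criticality can be lost only through augmentation of its containing structure $\cS_\gamma$ (impossible, since vertices of $P$ are not removed), contraction of a blossom containing $a_i$, \algBacktrack shortening the active path past $a_i$, or an overtaking that transfers the subtree containing $a_i$ to another structure. I would handle each subcase by identifying the correct shifted witness further along $P$, or by exhibiting a type 1/2/3 arc on $P$ whose non-processing by the simulation must have caused it to be contaminated.

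The hard part will be the bookkeeping in this second case: when contraction or overtaking moves the working vertex, I need to locate the new critical arc of $P$ (or exhibit a contaminated arc) without letting the induction drift off $P$. The argument will rely on the fact that every mechanism that makes progress on $\cS_\gamma$'s active path either extends the path further along $P$ (giving a new critical arc of $P$) or modifies a portion of $P$ via an operation that the simulation must either perform or mark as contaminated. A minor technicality worth pinning down is how arcs out of free vertices (where $\ell(\Omg(\alp)) = 0$) and unvisited vertices (which do not satisfy the ``inner'' condition of \cref{def:arc-type}) interact with the stage loop of \cref{alg:sim-extend-1}; once these are handled, the inductive step closes and the theorem follows.
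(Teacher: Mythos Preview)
Your inductive-on-$\tau$ approach is genuinely different from the paper's. The paper does \emph{not} induct on $\tau$: it fixes a single $\bundle$ $\tau$, assumes $P$ carries no critical or contaminated arc there, and splits on the \emph{labels} $\ell^\tau(a_i)$. If some $\ell^\tau(a_q) > q$, it locates the blossom $B$ containing $v_{q-1}$, invokes \cref{lem:working} to find the last $\tau' < \tau$ at which $B$ was a working vertex, and shows that at $\tau'$ the arc toward $u_q$ was of type~1, 2, or 3 (crucially using $\ell^{\tau'}(a_q) > q \ge \ell^{\tau'}(a_p)+1$), contradicting \cref{cor:contaminated}. If instead every $\ell^\tau(a_i) \le i$, then \cref{inv:outer-independence} forces $\alp$ and $\beta$ into the same blossom. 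The labels are the engine of the argument; there is no bundle-by-bundle tracking of a witness.

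Your plan has a concrete gap in the backtrack subcase. Say $a_i$ is the critical witness at $\tau$, $\Omg(v_i) = w'_\gamma$, and $\cS_\gamma$ is unmodified and backtracks during $\tau$. You want the unmatched arc $(v_i, u_{i+1})$ out of the working vertex to be of type~1, 2, or 3 so that either a modification happens (contradicting ``not modified'') or the arc is contaminated. Types~1 and~2 go through, but type~3 requires $\ell(a_{i+1}) > \ell(\Omg(v_i)) + 1$, and nothing in your hypothesis controls labels along $P$. If $\ell(a_{i+1}) \le \ell(a_i) + 1$ then no overtake is warranted, the arc is correctly ignored by both the exact algorithm and the simulation, no contamination occurs, and $a_i$ simply ceases to be critical. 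At that moment your plan of ``identifying the correct shifted witness further along $P$'' has no mechanism: the suffix of $P$ from $u_{i+1}$ is not an augmenting path, so the inductive hypothesis does not apply to it, and you have not maintained any label invariant that lets you jump ahead to an index where labels exceed positions. The paper's Case~1/Case~2 dichotomy exists precisely to supply that label inequality at the point where it is needed; without incorporating labels into your invariant, the induction does not close.
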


\paragraph{Properties of a phase.}

Our analysis of \cref{lem:active} relies on the following three simple properties.

\begin{corollary} \label{cor:contaminated}
After the execution of \algCheck in each \bundle, there are no arcs of type 1, 2, or 3 in $G$ except the contaminated ones.    
\end{corollary}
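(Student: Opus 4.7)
The plan is to reduce to \cref{lem:extend-contaminated} by showing that the subsequent simulated \algCheck call preserves the property that no non-contaminated arcs of type 1, 2, or 3 remain in $G$.

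In each \bundle, the \algCheck call referenced by the corollary is \cref{line:check} of \cref{alg:phase}, which executes \cref{alg:sim-check-1}. Immediately before this call, the simulation of \algExtend in \cref{line:extend} runs \cref{alg:sim-extend-1}, whose very last step (\cref{line:sim-ext-contract}) is itself an execution of \cref{alg:sim-check-1}. Applying \cref{lem:extend-contaminated} at that moment yields: $G$ contains no non-contaminated arcs of type 1, 2, or 3. Moreover, by the assumption in \cref{sec:sim-check-1} that Step 1 of \algCheck is simulated exactly, no contaminated type 1 arc is ever produced by our simulation. Hence, at the start of \cref{line:check}, the graph $G$ contains \emph{no} type 1 arcs (contaminated or not) and only contaminated arcs of types 2 and 3.

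Next I would track how the simulated \algCheck in \cref{line:check} can modify the graph. Step 1 scans in-structure arcs and performs \algContract only on type 1 arcs; since none exist, Step 1 is a no-op -- in particular, it creates no new blossom, changes no working vertex, and updates no label. Step 2 runs iterations that invoke $\Amat$ on $H'$ and performs \algAugment on each matched pair; after the iterations, \cref{alg:sim-check-1} explicitly marks every remaining type 2 arc as contaminated.

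Finally I would verify that none of the operations executed by \algCheck introduces a new non-contaminated arc of type 1, 2, or 3. By the description of \algAugment in \cref{sec:augment}, the operation simply removes the two involved structures along with all their incident arcs; it does not alter any surviving structure, working vertex, label, or on-hold status. This gives:
\begin{itemize}
    \item \textbf{Type 1:} none existed at the start of \cref{line:check}, and \algAugment cannot create one.
    \item \textbf{Type 2:} any arc still present at the end of Step 2 is explicitly marked contaminated by \cref{alg:sim-check-1}.
    \item \textbf{Type 3:} none existed as non-contaminated arcs at the start, and creating a new type 3 arc would require changing a working vertex or a label, which can only happen through \algContract or \algOvertake -- neither of which is executed inside \algCheck.
\end{itemize}
The only obstacle worth highlighting is confirming that \algAugment is benign with respect to arc-types on the remaining graph, but this is immediate from its definition: it deletes entire structures and does not touch anything else. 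The conclusion of the corollary then follows.
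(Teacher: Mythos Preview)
Your proof is correct and follows the same route as the paper, which records the corollary as ``a direct consequence of \cref{lem:extend-contaminated}.'' You make explicit what the paper leaves tacit (and what the remark after \cref{alg:sim-extend-1} hints at): the redundant second simulated \algCheck in \cref{line:check} is harmless, because its Step~1 finds no type~1 arcs to act on and its Step~2 only performs \algAugment, which removes entire structures without altering any surviving structure's working vertex, labels, or inner/outer status.
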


The following lemma is proven by \cite{MMSS25}.
It is not hard to check that their argument is not affected by contaminated arcs.

\begin{lemma}[Outer vertex has been a working one, \cite{MMSS25}]
\label{lem:working} 
Consider a $\bundle$ $\tau$.
Suppose that $G'$ contains an outer vertex $v'$ at the beginning of $\tau$.
Then, there exists a \bundle $\tau' \leq \tau$ such that $v'$ is the working vertex at the beginning of $\tau'$.
\end{lemma}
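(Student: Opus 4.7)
My plan is to prove \cref{lem:working} by strong induction on the bundle index $\tau$, mirroring \cite{MMSS25}'s original argument (which, as the excerpt notes, is unaffected by the introduction of contaminated arcs). For the base case $\tau = 1$, the initialization step in \cref{alg:phase} creates, for each free vertex $\alp$, a structure whose only vertex is the root $\Omg(\alp)$, set simultaneously as the sole outer vertex and as the working vertex; hence any outer vertex at the beginning of bundle~$1$ is already a working vertex there, and $\tau' = 1$ suffices. For the inductive step, fix $\tau \geq 2$ and an outer vertex $v'$ at the beginning of $\tau$, and split into two cases according to whether $v'$ was already an outer vertex at the beginning of $\tau - 1$. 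The first case is resolved immediately by the induction hypothesis, which supplies some $\tau' \leq \tau - 1 < \tau$. In the remaining case $v'$ must have been \emph{born} during bundle $\tau - 1$, meaning it first appeared as an outer $G'$-vertex during some operation of that bundle, and the goal reduces to showing that $v'$ is the working vertex at the end of bundle $\tau - 1$, so that $\tau' = \tau$ works.

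The central observation driving the second case is that every mechanism which produces a new outer vertex simultaneously installs it as the working vertex of the relevant structure: in $\algContract$ the freshly formed blossom $B$ becomes the new working vertex, and in $\algOvertake$ the freshly promoted $t' = \Omg(t)$ becomes the new working vertex. (In Case~2.2 of $\algOvertake$, an existing working vertex $w'_\beta$ may \emph{migrate} with the moved subtree into $\cS_\alp$, but that branch of Step~5 does not introduce a new outer vertex; only the other branch, which creates $t'$ and assigns $w'_\alp := t'$, is relevant to $v'$'s birth.) Moreover, the same operation marks $\cS_{v'}$ as both \emph{modified} and \emph{extended} (\cref{sec:basic-operations,sec:marking}). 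The core obstacle is then to verify that, from the moment of birth until the end of bundle $\tau - 1$, the working vertex of the structure currently containing $v'$ cannot drift away from $v'$ without $v'$ simultaneously losing its outer status, which would contradict our standing hypothesis.

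To handle this obstacle, I will enumerate all operations that can alter the working vertex of a structure harboring $v'$ and check each preserves the invariant:
\begin{itemize}
    \item Any further $\algOvertake$ in which $\cS_{v'}$ is the overtaker is blocked by the \emph{extended} flag set at $v'$'s birth (the skip condition at the top of \cref{alg:extend}).
    \item Any $\algContract$ (in $\algCheck$ Step~1 or in $\algExtend$) is performed on an arc incident to the current working vertex; if that working vertex were $v'$, then $v'$ would be absorbed into the newly formed blossom and would cease to be outer, a contradiction.
    \item Any $\algAugment$ would delete $\cS_{v'}$ entirely, again contradicting $v'$ being outer at the beginning of $\tau$.
    \item $\algBacktrack$ skips structures marked as modified (\cref{sec:backtrack}), and $\cS_{v'}$ is modified from the moment of birth.
    \item If $\cS_{v'}$ later plays the role of $\cS_\beta$ in Case~2.2 of some $\algOvertake$, Step~5 branches on whether $w'_\beta = v'$ lies in the moved subtree rooted at $t'$: if so, $v'$ migrates to $\cS_\alp$ and is installed as $w'_\alp$; otherwise $v'$ remains as $w'_\beta$. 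Either way $v'$ stays a working vertex of some structure.
\end{itemize}
Chaining these observations across the finite deterministic sequence of operations in bundle $\tau - 1$ establishes that $v'$ is still a working vertex at the end of $\tau - 1$, completing the induction.

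The main technical delicacy is the last bullet: Case~2.2 of $\algOvertake$ is intricate, involving the simultaneous movement of vertices, arcs, blossoms, and working-vertex pointers between two structures (Steps 1--5 of \cref{sec:overtake}), so tracking $v'$'s fate through it requires care in identifying which outer vertices lie in the subtree of $v'$ versus the subtree of $t'$. Once this is verified, the rest of the case analysis is routine. Crucially, nothing in the argument inspects whether particular arcs are contaminated: all reasoning concerns the sequence of structural updates performed by $\algContract$, $\algOvertake$, $\algAugment$, and $\algBacktrack$, so the proof transfers verbatim from the semi-streaming setting of \cite{MMSS25} to the simulated framework of \cref{sec:framework-distr}.
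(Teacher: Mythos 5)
Your argument is correct. There is nothing in the paper to compare it against line by line: the paper does not prove \cref{lem:working} at all, importing it from \cite{MMSS25} with only the remark that the original argument is unaffected by contaminated arcs. Your reconstruction is, however, exactly the argument the paper implicitly relies on---indeed, in the proof of \cref{lem:outer-independence} the paper appeals to ``the proof of \cref{lem:working}'' precisely for the fact that a vertex that becomes outer during \algExtend is a working vertex afterwards. The two pillars of your induction are the right ones: (i) the only mechanisms that create a new outer $G'$-vertex (initialization, \algContract, and Case~1 of \algOvertake) simultaneously install it as the working vertex of its structure, and (ii) once installed, the pointer cannot drift off it before the end of the \bundle without either absorbing it into a blossom or deleting its structure, both of which contradict $v'$ being an outer vertex of $G'$ at the beginning of $\tau$; the remaining operations are blocked by the modified/extended flags, and Case~2.2 of \algOvertake transports the pointer together with the vertex. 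Your closing observation that no step of the argument inspects contamination is also the substance of the paper's one-line justification. One cosmetic nit: the skip condition as written in \cref{alg:extend} tests ``modified or on hold'' rather than ``extended,'' but since the birth of $v'$ sets both flags this changes nothing.
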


\begin{restatable}{invariant}{invouter}
\label{inv:outer-independence} At the beginning of each $\bundle$, no arc in $G$ connects two outer vertices, unless the arc is contaminated.
\end{restatable}

\begin{lemma}
\label{lem:outer-independence} 
\cref{inv:outer-independence} holds.
\end{lemma}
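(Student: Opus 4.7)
I would prove \cref{lem:outer-independence} by induction on the bundle index $\tau$. The inductive step should be essentially immediate from \cref{cor:contaminated}: after the execution of \algCheck inside bundle $\tau$, the only non-contaminated arcs of type 1, 2, or 3 have been eliminated from $G$, and by \cref{def:arc-type} every arc joining two outer vertices of $G'$ is either a type 1 arc (both endpoints lie in the same structure) or a type 2 arc (endpoints lie in distinct structures). Hence at the instant \algCheck returns, no non-contaminated outer-outer arc exists.

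The remaining work for the inductive step is to verify that the only procedure running between the end of \algCheck and the start of the next bundle, namely \algBacktrack, cannot reintroduce any outer-outer arc. This is straightforward: \algBacktrack only rewrites $w'_\alp$ (and, as phrased in \cref{sec:backtrack}, may drop one matched arc from an active path). It does not add any edge to $G$, and it cannot promote an inner or unvisited vertex into an outer one. Consequently the set of outer vertices of $G'$ can only shrink under \algBacktrack, so no new outer-outer arc can appear, and the invariant carries over to the start of bundle $\tau + 1$. The initialization block at the top of each iteration of the outer for-loop of \cref{alg:phase} (lines 6--9) only sets the on-hold/modified/extended flags and similarly does not affect the outer/inner classification of any vertex or introduce any edge.

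For the base case ($\tau = 1$), at initialization every structure $\cS_\alp$ consists of the single vertex $\alp$, so the outer vertices coincide exactly with the free vertices of $M$. The invariant thus reduces to the claim that no edge of $G$ joins two free vertices, i.e.\ that the starting matching $M$ has no augmenting path of length $1$. I would handle this by appealing to the initial matching computation (\cref{sec:sim-init-2}): after the $O(c)$ iterations of \cref{lem:init-2}, one can append a single cleanup pass that greedily matches any remaining pair of free vertices connected by an edge; this preserves the $4$-approximation guarantee of \cref{lem:init-2} (since the matching only grows) and removes all length-$1$ augmentations, so at the beginning of bundle $1$ there are no outer-outer arcs whatsoever.

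The main obstacle, if any, is convincing oneself that \algBacktrack really is ``classification-monotone,'' i.e.\ that no non-outer vertex becomes outer as a side-effect of dropping an arc from an active path. This follows directly from the fact that the outer/inner status of each vertex in $T'_\alp$ is determined by the parity of its depth in the tree $G_\alp/\Omg_\alp$; removing a leaf matched pair from the active path can only delete outer and inner vertices, never change the parity of any surviving vertex. Everything else in the argument is bookkeeping on top of \cref{cor:contaminated}.
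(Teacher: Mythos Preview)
Your inductive step has a genuine gap. You assert that ``by \cref{def:arc-type} every arc joining two outer vertices of $G'$ is either a type~1 arc \dots\ or a type~2 arc,'' but this is false: by \cref{def:arc-type}, a type~1 arc must have one of its two outer endpoints equal to the \emph{working vertex} of the structure. An arc between two outer vertices of the same structure, neither of which is the current working vertex, is not of type~1, 2, or~3, and hence \cref{cor:contaminated} says nothing about it. Such arcs can in principle arise: when $\cS_\alp$ overtakes a subtree from $\cS_\beta$, it imports all outer vertices of that subtree, only one of which becomes the new working vertex; any edge between a pre-existing outer vertex of $\cS_\alp$ and a newly-imported non-working outer vertex is an outer-outer arc that your appeal to \cref{cor:contaminated} does not eliminate.

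The paper's proof sidesteps exactly this difficulty by arguing by contradiction rather than by a direct inductive clean-up. It fixes a hypothetical non-contaminated outer-outer arc $g' = (u',v')$ at the start of $\tau$, traces back to the bundle $\tau_{u'}$ in which the later-added endpoint $u'$ first entered a structure, and uses (the proof of) \cref{lem:working} to conclude that $u'$ was the working vertex right after it was added. At \emph{that} moment $g'$ genuinely is of type~1 or~2, and the contradiction with \cref{lem:extend-contaminated} or \cref{cor:contaminated} goes through. To salvage your inductive approach you would need to reproduce essentially this argument inside the inductive step, showing that any newly-created outer vertex passes through the working-vertex role before Step~1 of \algCheck\ finishes; this is not bookkeeping and is the substantive content of the lemma. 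Your treatment of \algBacktrack\ and of the base case (modifying the initial-matching routine to make $M$ maximal) is fine, though note the paper simply does not discuss the base case.
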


\noindent \cref{cor:contaminated} is a direct consequence of \cref{lem:extend-contaminated}.
Proofs of \cref{lem:outer-independence} are deferred to \cref{sec:lem:outer-independence}.

We remark that \cref{inv:outer-independence} only holds at the beginning of each $\bundle$.
During the execution of $\algExtend$, some structures may include new unvisited vertices or contract a blossom in the structure. These operations create new outer vertices that may be adjacent to existing ones.

The contaminated arcs can cause some augmenting paths to be missed.
We ensure that (I2) holds by running more $\bundle$s

\subsubsection{No short augmentation is missed (Proof of \cref{lem:active})}

For a $\bundle\ \tau$, we use $\Omg^\tau$ and $\ell^\tau$ to denote, respectively, the set of blossoms and labels at the beginning of $\tau$.

Consider a fixed phase and a $\bundle$ $\tau$ in the phase.
Suppose, toward a contradiction, that at the beginning of $\tau$, there exists an augmenting path $P = (\alp, a_1, a_2, \dots, a_k, \beta)$ in $G$, where $k \leq \lmax$, such that:
\begin{enumerate}[(i)]
    \item none of the vertices in $P$ is removed in the phase,
    \item $\alp$ and all arcs in $P$ are not critical, and
    \item all arcs in $P$ are not contaminated.
\end{enumerate}
Recall that for an arc to be critical, by \cref{def:critical}, it has to be non-blossom. Hence, some blossom arcs of $P$ may be inside of an active blossom at this moment.
For $i = 1, 2, \dots, k$, let $u_i$ and $v_i$ be the tail and head of $a_i$, respectively; that is, $a_i = (u_i, v_i)$.
Let $v_0 = \alp$.
Two cases are considered:

    
\paragraph{Case 1: There exists an index $q$ such that $\ell^{\tau}(a_q) > q$.}   

    Let $q$ be the smallest index such that $\ell^{\tau}(a_q) > q$.
    Since $\ell^{\tau}(a_q) > q > 0$, $a_q$ must be a non-blossom arc.
    Let $p$ be the smallest index such that $p < q$ and $a_{p+1}, \dots, a_{q-1}$ are blossom arcs.
    
    We first show that $\Omg(v_p)$ is an outer vertex, and all vertices in the path $(v_p, a_{p+1}, a_{p+2}, \dots, a_{q-1})$ are in the same inactive blossom.
    If $p = 0$, then $v_p = \alp$ and thus $\Omg(v_p)$ is an outer vertex.
    Otherwise, since $\ell^{\tau}(a_p) \leq p \leq \lmax$, $a_p$ is visited.
    Hence, $a_p$ is a non-blossom arc contained in a structure, which also implies that $\Omg(v_p)$ is an outer vertex.
    For $p < i < q$, since $a_i$ is a blossom arc, $\Omg(u_i)$ and $\Omg(v_i)$ are both outer vertices.
    Hence, by \cref{inv:outer-independence}, all vertices in the path $(v_p, a_{p+1}, a_{p+2}, \dots, a_{q-1})$ must be in the same blossom at the beginning of $\tau$.
    (Here, we can apply \cref{inv:outer-independence} because all arcs in $P$ are non-contaminated.)
    Denote this blossom by $B$.
    For $p > 0$, since $a_p$ is non-critical at the beginning of $\tau$, we have that \textbf{$B$ is inactive at the beginning of $\tau$}.
    For $p = 0$, since $v_p = \alp$ is non-critical, we also have $B$ is inactive at the beginning of $\tau$.

    Let $\tau' \leq \tau$ be the last $\bundle$ such that $B$ is the working vertex of some structure $S_\gamma$ at the beginning of $\tau'$.
    By \cref{lem:working}, $\tau'$ exists, and since $B$ is inactive at the beginning of $\tau$, we further know that $\tau'$ < $\tau$.
    In $\tau'$, $S_\gamma$ backtracks from $B$, and $B$ remains inactive until at least the beginning of $\tau$.
    Hence, if $p > 0$, $\ell(a_p)$ is not updated between the end of $\tau'$ and the beginning of $\tau$.
    Therefore, \textbf{$\ell^{\tau'}(a_p) = \ell^{\tau}(a_p) \leq p$}.

    By definition of $\algBacktrack$ (\cref{sec:backtrack}), $S_\gamma$ is not marked as on hold or modified in $\tau'$.
    Hence, $\Omg(v_{q-1}) = B$ is the working vertex of $S_\gamma$ during the whole \bundle $\tau'$.
    Consider the moment when we finish the execution of \algCheck in $\tau'$.
    By \cref{cor:contaminated}, the path $P$ cannot contain any arc of type 1, 2, or 3 at this moment.
    We now show that this leads to a contradiction.
    First, we claim that $\cev{a_q}$ cannot be in any structure.
    (Recall that $\cev{a_q}$ is the reverse direction of $a_q$.)
    If $\cev{a_q}$ is in any structure, then $\Omg(u_q)$ is an outer vertex.
    This implies that $g$ is an arc of either type 1 or 2, a contradiction.
    Hence, $\cev{a_q}$ is not in any structure, and $\Omg(u_q)$ is either unvisited or an inner vertex.
    In addition, $\ell^{\tau'}(a_q) > q \geq \ell^{\tau'}(a_p) + 1$.
    (Here, for ease of notation, we define $\ell^{\tau'}(a_p) = 0$ if $p = 0$.)
    This again leads to a contradiction, because $g$ is a type 3 arc (i.e. \algOvertake should have been performed on $g$).

        
    
    \paragraph{Case 2: For each $i = 1, 2, \ldots, k$ it holds $\ell^{\tau}(a_i) \leq i$.}
    Let $p \leq k$ be the smallest index such that all vertices in $a_{p+1}, a_{p+2}, \dots, a_k$ are blossom arcs at the beginning of $\tau$. 
    Similar to Case 1, if $p = 0$, then $v_p = \alp$; otherwise, $a_p$ is a non-blossom arc with $\ell^{\tau}(a_p) \leq p \leq \lmax$. 
    Hence, $\Omg(v_p)$ is an outer vertex.

    For each $i > p$, since $a_i$ is a blossom arc, $\Omg(u_i)$ and $\Omg(v_i)$ are both outer vertices.
    Hence, by \cref{inv:outer-independence}, all vertices in the path $(v_p, a_{p+1}, a_{p+2}, \dots, a_{k})$ must be in the same blossom at the beginning of $\tau$.
    Denote this blossom by $B$.
    Since $\Omg^\tau(\beta)$ is the root of $T'_\beta$, it is also an outer vertex.
    That is, $\Omg^\tau(v_k)$ and $\Omg^\tau(\beta)$ are both outer vertices.
    
    Since $P$ contains an unmatched arc $(v_k, \beta)$, by \cref{inv:outer-independence}, $\Omg^\tau(\beta) = \Omg^\tau(v_k) = B$.
    If $p = 0$, this leads to a contradiction because $B$ contains two free vertices $v_p = \alp$ and $\beta$. If $p > 0$, then $B = \Omg(\beta)$ is not a free vertex because it is adjacent to a non-blossom matched arc $a_p$, which is also a contradiction. 

\subsubsection{No arc between outer vertices (Proof of \cref{lem:outer-independence})}
\label{sec:lem:outer-independence}
Suppose, by contradiction, that at the beginning of some $\bundle$ $\tau$, there is a non-contaminated arc $g' \in E(G')$ connecting two outer vertices $u'$ and $v'$.
Let $\tau_{u'} < \tau$ (resp. $\tau_{v'} < \tau$) be the first $\bundle$ in which $u'$ (resp. $v'$) is added to a structure by either $\algOvertake$ or $\algContract$. Without loss of generality, assume that $\tau_{u'} \geq \tau_{v'}$. There are two cases:
\begin{enumerate}
    \item $u'$ is added to a structure during \algExtend (by either \algOvertake or \algContract).
    \item $u'$ is added to a structure during \algCheck.
\end{enumerate}

In the first case, by the proof of \cref{lem:working}, we know that $u'$ is a working vertex after the execution of $\algExtend$.
That is, $g'$ is of type 1 or 2.
This contradicts \cref{lem:extend-contaminated}.

In the second case, $g'$ is of type 1 or 2 at the moment it is added to a structure.
If it is not contracted or removed, it is still of type 1 or 2 after \algCheck.
This contradicts \cref{cor:contaminated}.

\subsection{Proof of the second ingredient}

\cite{MMSS25} showed the following; their argument is correct even with contaminated arcs.

\begin{lemma}[Upper bound on the number of active structures, \cite{MMSS25}] \label{lem:active-bound} 
Consider a fixed phase of a scale $h$. Let $M$ be the matching at the beginning of the phase. Then, at the end of that phase, there are at most $h|M|$ active structures.
\end{lemma}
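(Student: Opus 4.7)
The plan is to prove this via a vertex-counting charge: I aim to show that every active structure at the end of the phase contains at least $2/h$ matched vertices, after which vertex-disjointness of structures yields the bound immediately.

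First, since $M$ is fixed throughout a phase (augmentations are applied only after $\algPhase$ returns), each non-root vertex of any structure is matched by some edge of $M$, and distinct structures are vertex-disjoint. Letting $A$ denote the set of active structures at the end of the phase, these two facts give
\[
    \sum_{\alpha \in A} \bigl(|V(G_\alpha)| - 1\bigr) \;\leq\; 2|M|.
\]
Consequently, if every $\cS_\alpha \in A$ satisfies $|V(G_\alpha)| - 1 \geq 2/h$, then $|A| \cdot (2/h) \leq 2|M|$, i.e., $|A| \leq h|M|$, which is exactly the claimed bound.

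I would then verify the per-structure bound by splitting on whether $\cS_\alpha$ was on hold at the start of the final $\bundle$. On-hold structures satisfy $|V(G_\alpha)| \geq \limit_h = 6/h + 1$ by definition, so they trivially contribute at least $6/h \geq 2/h$ matched vertices. For structures that were active but not on hold at the start of the final $\bundle$, I would invoke the line of reasoning from \cite{MMSS25}: $\algBacktrack$ reduces the active-path depth of every not-on-hold, not-modified structure by one and deactivates such a structure only when its working vertex is already the root; combined with \cref{lem:struct-size}'s cap $\Delta_h$ on the size of any structure and the invariants maintained by $\algOvertake$ and $\algContract$, this forces any still-active, not-on-hold structure to retain enough matched vertices in its active path or contracted blossoms.

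The main obstacle is precisely this second case, where the bound on matched vertices must be derived from the global evolution of $(G_\alpha, \Omega_\alpha, w'_\alpha)$ across many $\bundle$s rather than from a local snapshot; the original argument of \cite{MMSS25} achieves this by carefully tracking depths and sizes of active paths. Crucially, their argument depends only on the state of the structure tuples and on the effect that the three basic operations ($\algExtend$, $\algCheck$, $\algBacktrack$) have on these tuples, and not on which specific arcs are inspected at each step—so the presence of contaminated arcs in our simulation (\cref{lem:vc}, \cref{lem:vc2}) does not interfere with the vertex-accounting, and the bound $|A| \leq h|M|$ carries over unchanged.
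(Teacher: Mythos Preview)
Your high-level plan---showing that every structure still active at the end of the phase carries at least $2/h$ matched vertices and then invoking vertex-disjointness---does not go through, and the gap is exactly in the case you defer. A structure can end the phase active yet arbitrarily small. Concretely, suppose that in the final $\bundle$ some $\cS_\beta$ performs $\algOvertake$ on an inner vertex of $\cS_\alpha$ that sits close to the root. By Case~2.2 of $\algOvertake$, the entire subtree below that inner vertex migrates to $\cS_\beta$, the working vertex $w'_\alpha$ is reset to the parent of the overtaken node, and $\cS_\alpha$ is marked \emph{modified} (but not extended). Because it is modified, $\algBacktrack$ skips it, so $\cS_\alpha$ finishes the phase active with only a handful of vertices---in the extreme case just $\alpha$ plus one matched arc. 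Thus $|V(G_\alpha)|-1\ge 2/h$ is simply false in general, and no snapshot of the final $\bundle$ can rescue it. (Your on-hold case is also measured at the \emph{start} of the final $\bundle$; an on-hold structure can likewise be overtaken during that $\bundle$ and shrink before the end.)

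The argument in \cite{MMSS25} is not a terminal size bound but a time-integrated counting argument over all $\taumax(h)$ $\bundle$s. Each structure that stays active the whole phase accounts for $\taumax(h)$ slots, and each slot is one of: on hold, extended, modified-by-overtake, or backtracked. Extensions and overtakes are charged to label drops (each lowers some $\ell(a)$ by at least one, and the total drop in a phase is at most $2|M|(\lmax+1)$); backtracks are in turn bounded by these via non-negativity of the active-path depth. On-hold slots are bounded per $\bundle$ by $2|M|/(\limit_h-1)=h|M|/3$ using the same disjointness inequality you wrote down. Summing and dividing by $\taumax(h)=72/(h\eps)$ gives $|A|\le h|M|$. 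This accounting depends only on the basic operations actually performed on the structure tuples, which is why the presence of contaminated arcs is irrelevant---but it is a fundamentally different skeleton from the static vertex count you propose.
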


The second ingredient is proven as follows.

\begin{lemma} \label{lem:path-bound} Consider a fixed phase in a scale $h$. Let $M$ be the matching at the beginning of the phase. Let $\calP^*$ be the maximum set of disjoint $M$-augmenting $\lmax$-paths in $G$. If $|\calP^*| \geq 4h\lmax |M|$, then at the end of the phase the size of $M$ is increased by a factor of at least $1 + \frac{h\lmax}{\Delta_h}$.
\end{lemma}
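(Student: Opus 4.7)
The plan is to adapt the argument of \cite{MMSS25} so that the negligible additional loss due to contaminated arcs introduced by our simulations of \algCheck and \algExtend is correctly accounted for. Consider the state of the graph at the end of the phase, equivalently at the beginning of a hypothetical $(\taumax(h)+1)$-th bundle, and apply \cref{lem:active} to each path $P \in \mathcal{P}^*$: since $P$ has length at most $\lmax$, if no vertex of $P$ was removed during the phase then $\alp$ is critical, or $P$ contains a critical arc, or $P$ contains a contaminated arc. Cover $\mathcal{P}^* = \mathcal{P}_A \cup \mathcal{P}_B \cup \mathcal{P}_C$, where $\mathcal{P}_A$ is the set of paths with some vertex removed by an \algAugment, $\mathcal{P}_B$ those with a critical endpoint or a critical arc, and $\mathcal{P}_C$ those containing a contaminated arc. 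Then $|\mathcal{P}_A| + |\mathcal{P}_B| + |\mathcal{P}_C| \geq |\mathcal{P}^*|$.

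Next, I would bound each term using only facts already established. Every augmentation in $\mathcal{P}$ comes from a single \algAugment call that removes two structures of combined size at most $2\Delta_h$ by \cref{lem:struct-size}; vertex-disjointness of $\mathcal{P}^*$ then gives $|\mathcal{P}_A| \leq 2\Delta_h \cdot |\mathcal{P}|$. For $\mathcal{P}_B$: \cref{lem:active-bound} yields at most $h|M|$ active structures at the end of the phase, each contributing one root and, because labels are strictly increasing along the active path and lie in $[1,\lmax+1]$, at most $\lmax+1$ matched arcs on that path; disjointness of $\mathcal{P}^*$ then gives $|\mathcal{P}_B| \leq (\lmax+2)\,h|M|$. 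For $\mathcal{P}_C$: \cref{lem:vc} and \cref{lem:vc2} bound by $2\eps^{17}|M|$ the number of vertex-disjoint augmenting paths intersected by the contaminated arcs of a single simulation call; summing over the $O(1)$ calls per bundle across the $\taumax(h) = 72/(h\eps)$ bundles of a phase yields $|\mathcal{P}_C| = O(\eps^{16}|M|/h)$.

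Finally, I would combine these estimates. Using $\lmax = 3/\eps$, $\Delta_h = 36h/\eps$, the hypothesis $|\mathcal{P}^*| \geq 4h\lmax|M|$, and $h \geq \eps^2/64$ (so $|\mathcal{P}_C| = O(\eps^{14}|M|)$ is absorbed by the other terms), we get
\[
2\Delta_h \cdot |\mathcal{P}| \;\geq\; |\mathcal{P}^*| - |\mathcal{P}_B| - |\mathcal{P}_C| \;\geq\; 4h\lmax|M| - (\lmax+2)\,h|M| - O(\eps^{14}|M|) \;\geq\; 2h\lmax|M|
\]
for all sufficiently small $\eps$, which rearranges to $|\mathcal{P}| \geq h\lmax|M|/\Delta_h$ and hence $|M|$ grows by at least a $1 + h\lmax/\Delta_h$ factor. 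The main new difficulty (relative to \cite{MMSS25}) is controlling the cumulative effect of contaminated arcs across all bundles of the phase; the margin that makes this work is the strong $\eps^{17}$ factor in \cref{lem:vc,lem:vc2}, which is large enough to absorb the $\taumax(h) = \Theta(1/(h\eps))$ factor from summing over bundles and still leave $|\mathcal{P}_C|$ negligible against the dominant $h\lmax|M|$ term.
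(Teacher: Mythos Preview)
Your proof is correct and follows essentially the same approach as the paper: a three-way case split via \cref{lem:active} into removed/critical/contaminated, with the first two cases bounded by $2\Delta_h|\mathcal{P}|$ and $(\lmax+O(1))h|M|$ respectively (citing \cite{MMSS25}), and the contaminated contribution controlled by summing \cref{lem:vc,lem:vc2} over the $\taumax(h)$ bundles to get $O(\eps^{14})|M| \le h|M|$. The only difference is that you spell out the counting arguments for $\mathcal{P}_A$ and $\mathcal{P}_B$ explicitly, whereas the paper defers these to \cite{MMSS25}.
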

\begin{proof}
Consider an augmenting path $P$ in $\calP^*$.
By \cref{lem:active}, at the end of the phase, one of the following must hold:
\begin{enumerate}
    \item $P$ contains a critical arc or critical vertex,
    \item Some vertices in $P$ are removed, or
    \item $P$ contains a contaminated arc.
\end{enumerate}

Let $\calP$ be the set of disjoint $M$-augmenting paths found in this phase.
In \cite{MMSS25}, it has been shown that there are at most $h|M| \cdot (\lmax + 1) + 2\Delta_h \cdot |\calP|$ paths in $\calP^*$ containing a removed vertex, critical vertex, or critical arc.

We now bound the number of paths in $\calP^*$ containing a contaminated arc.
By \cref{lem:vc,lem:vc2}, the number of contaminated arcs marked in a \bundle is at most $4\eps^{17} |M|$.
Since each phase has $\frac{72}{h\eps}$ $\bundle$s, the total number of contaminated arcs in a phase is $288\eps^{16} |M| / h \leq 18432 \eps^{14} |M| = \Theta(\eps^{14} |M|)$, where the inequality comes from $h \geq \frac{\eps^2}{64}$.
For small enough $\eps$, this number is at most $h|M|$ (recall that $h|M| = \Omg(\eps^2|M|)$).

Combine all above, we obtain $|\calP^*| \leq h|M|(\lmax + 1) + h|M| + |\calP| \cdot 2\Delta_h$.
Assume that $|\calP^*|$ contains more than $4h|M|\lmax$ paths, then we have
\[
    |\calP|
    \geq \frac{|\calP^*| - h|M|(\lmax + 2)}{2\Delta_h}
    \geq \frac{2h\lmax}{2\Delta_h}|M|
    = \frac{h\lmax}{\Delta_h}|M|.
\]
The algorithm augments $M$ by using the augmenting paths in $\calP$ at the end of the phase.
Hence, the size of $M$ is increased by a factor of $(1 + \frac{h\lmax}{\Delta_h})$ at the end of this phase.
This completes the proof.
\end{proof}

The above lemma shows that each phase increases the size of $M$ by a factor, and the factor is exactly the same as the one proven in \cite{MMSS25}'s original paper despite the presence of contaminated arcs.
(Intuitively, this is because the analysis in \cite{MMSS25} is not tight, and the number of contaminated arcs is so small that its effect fits into the slack of the analysis.)
By the analysis in \cite{MMSS25}, the algorithm still outputs a $(1+\eps)$-approximation after the chosen number of scales and phases.

\end{document}